\newif\ifincludeex
\newif\ifnoapp
\title{Dynamic Complexity of Parity Games with~Bounded
  Tree-Width\footnote{This 
    work is supported by EU under ERC
    EQualIS (FP7-308087) and STREP Cassting (FP7-601148).}}
\titlerunning{Dynamic Complexity of Parity Games with~Bounded Tree-Width}
\author[1]{Patricia Bouyer}
\author[1]{Vincent Jugé}
\author[1,2]{Nicolas Markey}
\affil[1]{LSV, CNRS \& ENS Cachan, Univ. Paris-Saclay, France}
\affil[2]{IRISA, CNRS \& Inria \& Univ. Rennes 1, France}
\authorrunning{P. Bouyer, V. Jugé and N. Markey} 
\begin{document}
\begin{bibunit}
\maketitle

\begin{abstract}

Dynamic complexity is concerned with updating the output of a problem
when the input is slightly changed. We~study the dynamic complexity of
two-player parity games over graphs of bounded tree-width, where updates
may add or delete edges, or change the owner or color of states. 
We show that this problem is in \Dyn\FO (with
\LOGSPACE~precomputation); this is achieved by a reduction to a
Dyck-path problem on an acyclic automaton.




\end{abstract}

\section{Introduction}
\label{sec-intro}

\subparagraph{Parity games.}
\looseness=-1 
Games played on graphs are tightly linked with automata and logics. In
those games, two players move a token along the edges of a graph, thus
forming an infinite sequence of states of the graph; each player tries
to make this sequence satisfy her winning condition. Such games
provide a rich and powerful formalism for dealing with numerous
problems in theoretical computer science, especially for verification
and synthesis of reactive systems.

There is a wide range of games on graphs: they can be zero-sum
(involving two players with opposite objectives) or non-zero-sum
(involving several players that may want to collaborate); players may
or may not have full information about the game, and\slash or perfect
observation of the state of the game; strategies may be deterministic
or randomized; and various kinds of winning objectives can be
considered.

\looseness=-1
We focus here on two-player perfect-information turn-based
(zero-sum) parity games. In~such games, the~(finitely many) states of
the graph are labelled with natural numbers; along each infinite play,
the maximal number that appears infinitely many times determines the
winner, one of the player trying to make this number even, and the
other trying to make it odd. Those games have been extensively
studied~\cite{Mos91,EJ91,Zie98,Jur98,FL09a}: deciding the winner in a
parity game admits \NP~and \co\NP~algorithms, but whether this can be
achieved in polynomial time remains one of the foremost open problems
in this area.

\looseness=-1
\subparagraph{Graphs with bounded tree-width.}  The tree-width of
graphs has been defined by Robertson and Seymour~\cite{RS84} as a
measure of the \emph{complexity} of graphs---more precisely, of how
close a graph is to a tree. Many classes of graphs have been shown to
have bounded tree-width~\cite{Bod93}.  Over such graphs, several
(\NP-)hard problems have can be solved efficiently~\cite{Bod93}.
Parity games are such an example where bounded tree-width makes
algorithms more efficient: they have recently been settled
in~\LOGCFL~when the underlying graph has bounded
tree-width~\cite{Gan15}.


\subparagraph{Dynamic problems and dynamic complexity.} 
In this paper, we focus on the \emph{dynamic complexity} of
parity games. Dynamic complexity theory aims at developing algorithms
that are capable of efficiently updating the output of a problem after a
slight change in its input~\cite{PI97,HI02,WS07}. Such algorithms
would keep track of auxiliary information about the current instance,
and update it efficiently when the instance is modified.

\looseness=-1
Consider the problem of reachability in directed graphs,
and equip such graphs with two operations, for respectively inserting
and deleting edges (one~at a~time). It~has recently been proven that this problem
is in the class \Dyn\FO~\cite{DKMSZ15}, which was a
long-standing open problem.  Roughly speaking, an~algorithm is in
\Dyn\FO\ when the solution and the auxiliary information can be
updated by \FO formulas
(or~equivalently, by $\AC^0$ circuits) after a small change in the input.
Moreover, since they maintain some amount of auxiliary information, dynamic
algorithms can be used to compute and maintain witnesses for the problem
they solve---for~reachability, it would maintain a~path between the
source and target nodes, when such a path exists.
%




\looseness=-1
\subparagraph{Our contributions.}
We study parity games from a dynamic perspective, considering the
following basic operations on graphs: insertion and deletion of an
edge, and corruption (change of owner) and change of 
priority of a vertex.
We also assume that we are given a maximal graph, which embeds all
constructed game graphs along the dynamic process: this maximal graph
represents the set of all possible connections in the game.
We~first realize that, by the arguments of~\cite[Corollary 5.7]{PI97},
the parity-game problem over arbitrary
maximal graphs is unlikely  to be solvable in~\Dyn\FO\ (unless $\Dyn\FO = \PTIME$).
We therefore make a standard restriction and assume that the maximal
graph has bounded tree-width. Under this hypothesis, we show that
parity games can be solved in \Dyn\FO\ with \LOGSPACE~precomputation,
that~is, we~can decide whether a distinguished fixed vertex is winning
or not.  To~obtain this result, we~rely on (and~extend) a \Dyn\FO\
algorithm for finding a Dyck path in an acyclic automaton~\cite{WS07},
and build a transformation of our parity-game problem into such a
Dyck-path problem.  The latter transformation is performed by
realizing that local views of possible strategies can be computed
(simple)-step-by-(simple)-step, thanks to the restriction on the
maximal graph. These simple steps can be stored in an
auxiliary graph, in~which only few edges depend on the real edges
that exist in the original game, and the correctness of the
construction goes through the search for paths labelled with Dyck
words.

By lack of space, this extended abstract only focuses on a
simplified framework, where only insertions and deletions of edges are
taken into account.
In the Appendix, we present the proofs for the general
framework, where:
(i)~we~allow
further basic operations like the change of the color or of the owner
of a vertex; (ii)~we~maintain a winning strategy from the
distinguished vertex (if such a strategy exists), and (iii)~we~maintain a uniformly winning
strategy, that~is, a~(memoryless) strategy that is winning from every
winning vertex of the game.  The second point requires maintaining a
witness for the Dyck words problem in the built automaton, while the
third point requires expanding the automaton for the Dyck-word problem
into a non-acyclic automaton.



\subparagraph{Related works.}
\looseness=-1
Given their central role in theoretical computer science, parity games
have been extensively studied in various
contexts~\cite{Zie98,Jur98,DG08a,FL09a}.  This is certainly also due
to the fact that the exact complexity of finding the winner in those
games remains one of the foremost open problems in the area of games on
graphs.
Bounding the tree- or clique-width on the underlying graph
makes the problem easier: a~polynomial-time algorithm was proposed by
Obdr\v z\'alek~\cite{Obd03}, and later improved until the recent
\logcfl~algorithm proposed by Ganardi~\cite{Gan15}.

On the other hand, dynamic complexity is much less developed: while the main 
dynamic complexity classes were defined and studied 20 years
ago~\cite{PI97,HI02,WS07,Zeu15}, only few problems have been
considered from that point of view. As~cited above, directed-graph
reachability has recently been proven in \Dyn\FO, which was an important
open problem in the area. However, the complexity of maintaining a
witness remains an open problem.







\section{Definitions}
\label{sec-defs}

\subsection{Two-player parity games}

A graph is a pair $G = (V,E)$ where $V$ is a finite set of vertices,
and $E \subseteq V \times V$ is a finite set of edges. A
path in the graph~$G$ is a
sequence of vertices $\pi = s_0 \cdot s_1 \cdot s_2 \dots$ such that
for every $i \ge 0$, $(s_i,s_{i+1}) \in E$. If~$\pi$ is a finite path,
we~denote by $\last(\pi)$ the last vertex of~$\pi$.

Let $W \subseteq V$. A~\emph{$W$-selector} in~$G$ is a partial
mapping~$f$ with domain $\mathrm{dom}(f) \subseteq W$
  that associates with every $s \in
\mathrm{dom}(f)$ a vertex $s' \in V$ such that $(s,s') \in E$.
A~path $\pi = s_0 \cdot s_1 \cdot s_2 \dots$ is called compatible with
$f$ whenever $s_i \in \mathrm{dom}(f)$ implies $s_{i+1} =
f(s_i)$. A~maximal such path will be called an \emph{outcome} of~$f$.

A two-player (turn-based) \emph{parity game} is a tuple $G =
(V,E,c,V_0,V_1)$, where $(V,E)$ is a graph, $c\colon V \mapsto \mathbb{N}$
is a coloring function, $V_0 \subseteq V$ is the set of those vertices
that are controlled by $P_0$, and $V_1 = V \setminus V_0$ is the set
of those vertices that are controlled by~$P_1$.

A \emph{strategy} for player $P_p$ ($p\in \{0,1\}$) is then a mapping
$\sigma_p$ that associates, with every finite path~$\pi$ such that
$\last(\pi) \in V_p$, a~vertex~$s'$ such that $(\last(\pi),s') \in E$,
if~any.  The~strategy~$\sigma_p$ is called \emph{memoryless} whenever
there is a $V_p$-selector $f_p$ such that, for every finite
path~$\pi$, $\sigma_p(\pi) = f_p(\last(\pi))$. A~path $\pi = s_0 \cdot
s_1 \cdot s_2 \dots$ is said to be compatible with $\sigma_p$ whenever
$s_0 \dots s_j \in \mathrm{dom}(\sigma_p)$ implies $s_{j+1} = \sigma_p(s_0 \dots s_j)$.
A~maximal compatible path is called an outcome of~$\sigma_p$.

A finite path $\pi = s_0 \cdot s_1 \cdot s_2 \dots s_k$ is winning for
player~$P_p$ if $\last(\pi) \in V_{1-p}$; 
otherwise it is losing for
player~$P_p$. An~infinite path $\pi = s_0 \cdot s_1 \cdot s_2 \dots$
is winning for player~$P_p$ whenever $\limsup_{i \geq 0} c(v_i) \equiv
p \pmod{2}$: $P_0$~wins if the maximal color encountered infinitely
many times is even, and $P_1$~wins otherwise.
Let~$s_0 \in V$ be an initial vertex. A~strategy~$\sigma_p$ for player~$P_p$
is called winning at $s_0$ if every maximal outcome starting at~$s_0$ is winning for
player~$P_p$.

Two-player parity games belong to the Borel hierarchy, hence they are
known for having \emph{uniform memoryless
  strategies}~\cite{EM79}.
%
More precisely, the~set~$V$ of vertices can be split into two subsets $W_0$
and~$W_1$ in such a way that each player~$P_p$ has a memoryless winning
strategy~$f_p$, such that, if~the play starts in some vertex $v \in W_p$, then
player~$P_p$ can ensure winning by following the strategy~$f_p$, no~matter
what her adversary plays: we~say that $\sigma_p$ is an \emph{optimal} strategy
for~$P_p$, and that the corresponding $V_p$-selector~$f_p$ is optimal.

The standard static questions regarding parity games are to decide
whether player $P_0$ has a winning strategy from a distinguished
vertex, and to compute a witnessing winning strategy; or more
generally to compute the set of winning states~$W_0$, and a uniform optimal
winning strategy for player~$P_0$. Characterizing the exact complexity
of these problems is  
a well-known open problem: it~is known that the decision problem is in
$\NP \cap \coNP$~\cite{EM79} and \PTIME-hard, but it is
not known whether it belongs to~\PTIME.







\subsection{Tree decomposition}


The notion of \emph{tree decomposition}~\cite{RS84,RS86} was
introduced by Robertson and Seymour in order to extract classes
of graphs on which problems that are \NP-hard in general might be
tractable.  A~tree decomposition of a graph $G = (V,E)$ 
is
a pair $\mathcal{D} = (\mathcal{T},\mathbf{T})$, where $\mathcal{T} =
(\mathcal{V},\mathcal{E})$ is an undirected tree, and $\mathbf{T}\colon
\mathcal{V} \mapsto 2^V$ is a function such that:
\begin{enumerate}[(i)]
\item for each edge $(s,t) \in E$, there exists a node $v \in
  \mathcal{V}$ such that $\{s,t\} \subseteq \mathbf{T}(v)$;
\item \label{cond2} for each $s \in V$, the set
  $\mathcal{V}_s = \{v \in \mathcal{V} \mid s \in \mathbf{T}(v)\}$ is
  a non-empty connected subset of $\mathcal{T}$.
\end{enumerate}
The \emph{width} of $\mathcal{D}$ is defined as the integer
$\max\{|\mathbf{T}(v)|\mid v \in \mathcal{V}\}-1$, and the
\emph{tree-width} of $G$ is the least width of all tree decompositions of
$G$. 


\subsection{Dynamic complexity classes}

Our aim in this paper is to tackle parity games with a
dynamic-complexity viewpoint. We~briefly introduce the formalisms of
descriptive- and dynamic
complexity here, and refer to~\cite{Imm99,PI97,Hes03} for more details.

\smallskip
Descriptive complexity aims at characterizing positive instances of a problem
using logical formulas: the~input is then described as a logical structure
described by a set of $k$-ary predicates (the~\emph{vocabulary}) over its
universe. For example, a~directed graph can be described by a binary predicate
representing its edges, with the set of states (usually identified to~$[1;n]$
for some~$n$) as the universe. The~problem of
deciding whether each state has at most one outgoing edge can be described by the
first-order formula $\forall x, y, z. (E(x,y) \et E(x,z)) \Rightarrow
(y=z)$. The~class \FO contains all problems that can be characterized by such
first-order formulas. This class corresponds to the circuit-complexity
class~$\AC[0]$ (under adequate reductions)~\cite{BIS90}.

\smallskip
Dynamic complexity aims at developing algorithms that can efficiently update
the output of a problem when the input is slightly changed,
for example reachability of one vertex from another one in a graph.
We~would like our algorithm to take advantage of
previous computations in order to very quickly decide the existence of a path
in the modified graph. 

Formally, a decision problem~$\sfS$ is a subset of the set of
$\tau$-structures $\Struct{\tau}$ built on a vocabulary~$\tau$. In~order to
turn~$\sfS$ into a dynamic problem~$\Dyn\sfS$, we~need to define a
finite set of initial inputs and a finite set of allowed updates.
  For instance, we might use an arbitrary graph as initial input, then
  use a $2$-ary operator $\textsf{ins}(x,y)$ that would insert an edge
  between nodes~$x$ and~$y$. For a universe of size~$n$, the set of
  initial inputs forms a finite alphabet, denoted by $\Xi_n$, and the
  set of update operations forms a finite alphabet, denoted
  by~$\Sigma_n$.
  A~finite, non-empty word in $\Xi_n \cdot \Sigma_n^\ast$ then
  corresponds to a structure obtained by applying a 
  sequence of update operations of~$\Sigma_n$ to an initial structure
  in~$\Xi_n$.  
  The~language~$\Dyn\sfS_n$ is defined as the set of those words in
  $\Xi_n \cdot \Sigma_n^\ast$ 
  that correspond to~structures of~$\sfS$, and
  $\Dyn\sfS$ is the union (over all~$n$) of all such languages.

A~dynamic machine is a uniform family~$(M_n)_{n\in\bbN}$ of deterministic finite automata $M_n=\tuple{Q_n,\Xi_n,\Sigma_n,\delta_n^{\mathrm{init}},\delta_n^{\mathrm{up}},s_n,F_n}$
over initial alphabet~$\Xi_n$ and
update alphabet~$\Sigma_n$, with one initial transition function
$\delta_n^{\mathrm{init}}$ (used when reading the first letter) and
one update transition function $\delta_n^{\mathrm{up}}$ (used when
reading the subsequent letters).  The~set of states can be encoded as a
%
structure over some vocabulary~$\tau^{\mathrm{aux}}$, and corresponds
to a polynomial-size auxiliary data structure. Such a machine solves a
dynamic problem if $\Dyn\sfS_n=\calL(M_n)$ for all~$n$. It~is in the
dynamic complexity class~$\Dyn[\calC']{\calC}$ (or simply~$\Dyn\calC$
if~$\calC=\calC'$) if the update transition function and accepting set
can be computed in~$\calC$, while the initial state and initial
transition function can be computed in~$\calC'$. In other
words, solving any initial instance of the problem (specified as an
element of $\Xi_n$) can be done in $\calC'$, and after any update of
the input (specified by some letter of $\Sigma_n$), further
calculations to solve the problem on that new instance are restricted
to the class $\calC$. Of course, for the a dynamic complexity
class~$\Dyn[\calC']{\calC}$ to have some interest, the class $\calC$
should be easier than the static complexity class of the original
problem.
%

\smallskip In~this paper, we only consider the case where $\calC=\FO$,
meaning that first-order formulas will be used to describe how
predicates are updated along transitions.





\subsection{Main result}

We are now in a position to formally define our problem and state our
main result. We fix a 
positive integer~$\bfC$.  We~follow the approach of~\cite{DG08a} and
assume that parity games are represented as relational
structures $\tuple{V,E,C_1,\ldots,C_{\bfC},V_0,V_1}$, where
$V=V_0\uplus V_1 = \biguplus_{1\leq j\leq\bfC} C_j$, with disjoint
sets~$V_0$ and~$V_1$ and pairwise-disjoint sets~$(C_{j})_{1\leq
  j\leq\bfC}$.  Given a universe $V$, our initial input
  then consists in a tuple $(\sigma,\mathcal{V},\mathcal{C},E_\star)$
  where $\sigma \in V$ is an initial state of the game, $\mathcal{V} =
  (V_0,V_1)$ and $\mathcal{C} = (C_1,\ldots,C_{\bfC})$ are partitions
  of the state set $V$, and $E_\star \subseteq V^2$ is a maximal set
  of edges (they are all given by predicates over universe
    $V$).

  In this extended abstract, we~only focus on the operations of
  insertion and deletion of edges that belong to $E_\star$. More
  precisely, we~let $\Sigma_{E_\star} = \{\textsf{ins}(x,y),
  \textsf{del}(x,y) \mid (x,y) \in E_\star\}$. The effect of a
  sequence of update operations, represented as a
  word~$w\in\Sigma_{E_\star}^\ast$, over a set~$E\subseteq V^2$ of
  edges, is denoted with $w(E)$, and is defined inductively~as:
  \[
  \begin{array}{ll>{\qquad\qquad\qquad}ll}
    E &\quad\text{ if $w=\epsilon$} & 
    E\cup\{(x,y)\} & \quad\text{ if $w=\textsf{ins}(x,y)$} \\
    w'(a(E)) &\quad\text{ if $w=w'\cdot a$} &
    E\setminus\{(x,y)\} &\quad \text{ if $w=\textsf{del}(x,y)$} 
  \end{array}
  \]
  For~$w\in\Sigma_{E_\star}^{\ast}$, we~write $G_w$ for the graph with
  edge set~$w(\emptyset)$ (in~particular, $G_\epsilon$ is the edgeless graph)
  or, by abuse of notation, for the parity game $\tuple{V,w(\emptyset),C_1,\ldots,C_{\bfC},V_0,V_1}$.
  It is to be noted that $G_w$ is a subgraph
  of $(V,E_\star)$. Finally, we~let
  \[
  \Dyn\NonUnifParity_{\mathbf{C}} =
  \{(\sigma,\mathcal{V},\mathcal{C},E_\star) \cdot w \mid w\in\Sigma_{E_\star}^* \text{, Player~$P_0$ wins
    from~$\sigma$ in $G_w$}\}.
  \] 

  As mentioned in the introduction, applying arguments
  of~\cite[Corollary 5.7]{PI97}, it~is unlikely that the above problem
  be solvable in~\PTIME-\Dyn\FO (taking $E_{\star} = V^2$), even with only
  two colors. We therefore
  adopt the idea of bounding the tree-width of graph $(V,E_\star)$.
  In addition to fixing $\bfC$, we also fix a positive integer~$\kappa$
  and we restrict the set of admissible initial inputs:
  the graph $(V,E_\star)$ should be of tree-width at most $\kappa$.
  Hence we restrict the above problem as follows:
  \[
  \Dyn\NonUnifParity_{\mathbf{C},\kappa} =
  \{(\sigma,\mathcal{V},\mathcal{C},E_\star) \cdot w \mid (V,E_\star) \text{ has tree-width at most } \kappa\} \cap
  \Dyn\NonUnifParity_{\mathbf{C}}.
  \]
  Then
  we~design a dynamic algorithm for deciding
  $\Dyn\NonUnifParity_{\mathbf{C},\kappa}$,
  taking as initial graph instance the edgeless graph. Additionally, our
  algorithm maintains information to output a winning strategy for
  player~$P_0$ when such a strategy exists:

  \begin{restatable}{thm}{mainthm}
    \label{mainresult}\label{thm-main}
    Fix positive integers $\bfC$ and $\kappa$.
    The problem $\Dyn\NonUnifParity_{\mathbf{C},\kappa}$
        can be solved in
    $\Dyn[\LOGSPACE]{\FO}$.
\end{restatable}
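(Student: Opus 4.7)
My plan is to reduce the dynamic parity-game problem on $(V,E_\star)$ to maintaining the existence of a Dyck-labelled path in an acyclic auxiliary automaton, and then to invoke (a slight extension of) the $\Dyn\FO$ algorithm of~\cite{WS07} for this latter problem. The leverage provided by bounded tree-width is that local information at each bag of a tree decomposition suffices to describe optimal plays; the leverage from parity-game determinacy is that this local information is of bounded size.

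As the $\LOGSPACE$ precomputation, I would compute a tree decomposition $(\mathcal{T},\mathbf{T})$ of $(V,E_\star)$ of width at most $\kappa$, and normalize it into a rooted binary ``nice'' decomposition; this is feasible in logspace since $\kappa$ is fixed. For each node $v$ of $\mathcal{T}$, the bag $\mathbf{T}(v)$ is a separator, and I would attach to $v$ its set of possible \emph{local strategy summaries}. Informally, a summary records, for a candidate memoryless $V_0$-selector restricted to the subtree rooted at $v$, how plays entering the subtree through a vertex of $\mathbf{T}(v)$ behave: which vertex of $\mathbf{T}(v)$ is exited next (if any), the maximal color witnessed in between, and, for plays trapped in the subtree forever, the maximal color seen infinitely often. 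Uniform memoryless determinacy (recalled in the excerpt) ensures that this is the only information we need. Since $|\mathbf{T}(v)| \leq \kappa+1$ and colors lie in $\{1,\ldots,\bfC\}$, the number of possible summaries per node is a constant.

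I would then assemble the summaries into an acyclic automaton $\mathcal{A}$ whose states are (tree-node, summary) pairs. Transitions of $\mathcal{A}$ express the compatibility of a parent's summary with those of its children; they are labelled by bracket symbols so that descending into a child pushes and ascending pops, with matching pairs carrying the information about how a play crosses the corresponding bag. The reduction then says: $P_0$ wins from $\sigma$ in $G_w$ iff $\mathcal{A}$ contains a path labelled by a Dyck word between two distinguished states determined by $\sigma$. Crucially, the transitions of $\mathcal{A}$ split into a \emph{static} part, depending only on $(V,E_\star,c,V_0,V_1)$ and constructed once during precomputation, and a \emph{dynamic} part whose presence witnesses which members of $E_\star$ currently belong to $G_w$; a single $\textsf{ins}$ or $\textsf{del}$ update touches only a constant number of dynamic transitions. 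Feeding $\mathcal{A}$ into the (slightly extended) $\Dyn\FO$ algorithm of~\cite{WS07} for Dyck reachability in acyclic automata then yields the desired $\Dyn[\LOGSPACE]{\FO}$ procedure for $\Dyn\NonUnifParity_{\mathbf{C},\kappa}$.

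The hard part, I expect, lies in designing the summaries and the bracket-matching so that the Dyck-path condition faithfully encodes the $\limsup$-parity winning condition and the existence of a memoryless optimal $V_0$-selector simultaneously. In particular, plays that cycle inside a subtree forever must contribute the correct maximal infinitely-occurring color, and the summaries composed along the tree decomposition must coincide with the summary of a genuine memoryless strategy over the whole game $G_w$. Ensuring in addition that only a bounded number of edges of $\mathcal{A}$ depend on the current edge set of $G_w$ --- indispensable for $\FO$-updatability --- is a non-trivial bookkeeping task that the tree decomposition is exactly tailored for.
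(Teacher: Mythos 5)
Your roadmap is the paper's roadmap --- \LOGSPACE\ precomputation of a nice tree decomposition, local strategy summaries at bags (the paper's \emph{selector views}), an acyclic labelled automaton whose Dyck paths encode selectors, a static/dynamic split of its edges, and an extension of the \Dyn\FO\ Dyck-reachability algorithm of~\cite{WS07} --- but the proposal defers exactly the steps where the proof lives, and two of your concrete design choices would fail as stated. First, states of the form (tree-node, one summary) with ``a constant number of possible summaries per node'' cannot certify a winning $V_0$-selector: a $V_0$-selector $g$ fixes only $P_0$'s moves, so from one entry vertex of a bag there are many plays, one per extension of $g$ by $P_1$'s choices. The paper therefore works with \emph{sets of view-functions} --- elements of $2^{\Lambda(v)}$, where each member is the whole view $f_i(v)$ of one extension $f$, not a per-vertex set of behaviors (this retains correlations within a bag induced by a single $f$) --- and it needs a genuine soundness lemma (Lemma~\ref{lem:soundness}) showing that any compatible choice of one view per node is simultaneously realized by a single extension; without this gluing argument, the purely existential Dyck-path query could assemble summaries realized by no selector at all. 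Moreover, because the automaton linearizes a depth-first traversal, its states must carry this set-valued information jointly for all nodes on the current ancestor path, which is why the paper balances the decomposition to logarithmic diameter (Lemma~\ref{lemma:puredecomposition}, via~\cite{EJT10}) so that the resulting alphabet $\Lambda_i=\prod_{v\in\Theta_i}2^{\Lambda(v)}$ is merely polynomial; your sketch mentions neither the powerset nor the balancing.

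Second, your claim that ``a single \textsf{ins} or \textsf{del} touches only a constant number of dynamic transitions'' is asserted without the mechanism that makes it true. If parent--child compatibility transitions at a node $v$ depend on which edges among $\mathbf{T}(v)$ are currently present, then an edge $(x,y)\in E_\star$, which lies in \emph{every} bag of the connected subtree $\mathcal{T}^x\cap\mathcal{T}^y$, would affect unboundedly many transitions. The paper's fix is to attribute each edge to the unique \emph{critical} step $i$ at which its deeper endpoint $\theta_i$ is first explored ($v_i=r(\theta_i)$), to make every other transition independent of the current edge set, and to use the Dyck pairing $(i,Z)/\overline{(i,Z)}$ and $(i,\mathbf{B},\mathbf{B}')/\overline{(i,\mathbf{B},\mathbf{B}')}$ \emph{locally within one step} to decouple the $G$-dependent choice edges $(i,0)\to(i,1)$ from the static application of the update function $\Omega_i$ --- a different use of Dyck words than your push-on-descend/pop-on-ascend tree traversal. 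Two further omissions: the resolution of the parity condition when a lasso closes across bags (the paper's $\pi_i^3$ turning $\varphi(\theta_i)\in\{\theta_i\}\times(2\mathbb{N})$ or $\{\theta_i\}\times(2\mathbb{N}+1)$ into $\top$ or $\bot$), which you yourself flag as ``the hard part'' without resolving it; and initialization --- the reduction is not a bfo-reduction, since the edgeless game maps to an automaton already containing all static edges, so the auxiliary predicates of the Dyck algorithm ($T$ and $\pi_2$) must themselves be shown \LOGSPACE-computable on $\Gamma_{G_\epsilon}$, which the paper proves using the layered structure of the graph and which your precomputation step silently presupposes.
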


\looseness=-1
We~give a short overview of the proof here. Our~algorithm consists in
transforming our parity-game problem into an equivalent Dyck-path
problem over a labelled acyclic graph. The~latter problem is known to be
in \Dyn\FO~\cite{WS07} (but we had to adapt the algorithm to our setting).
Our approach for building this acyclic graph 
benefits from
ideas of~\cite{Obd03}:
along some linearization of a tree decomposition of the maximal graph,
we can compute inductively local information about the effects of
strategies.
These computations can be represented as finding a path in an acyclic
graph. However, we~have to resort to \emph{Dyck paths} in order to
make our acyclic graph efficiently updatable when the parity game is
modified.

\section{Precomputation: nice tree decompositions and depth-first traversals}
\label{sec-treedec}

Before looking into parity games themselves, we introduce a special kind
of rooted tree decomposition, called \emph{nice
  decomposition}~\cite{Klo94}.

\begin{dfn}
  Let $\mathcal{D} = (\mathcal{T},\mathbf{T})$ be a tree decomposition
  of a graph $G = (V,E)$, with $\mathcal{T} = (\mathcal{V},\mathcal{E})$.
  We say that the tree decomposition $\mathcal{D}$ is \emph{nice} if:
  \begin{itemize}
   \item for all $s \in V$, there exists a node $v_s$ of $\mathcal{T}$
   such that $\mathbf{T}(v_s) = \{s\}$;
   \item for all leaves $v$ of $\mathcal{T}$, there exists a vertex $s \in V$ such that $v = v_s$;
   \item for all edges $(v,w)$ of $\mathcal{T}$, either
   $|\mathbf{T}(v)| = |\mathbf{T}(w)|+1$ and $\mathbf{T}(v) \supseteq \mathbf{T}(w)$
   or $|\mathbf{T}(v)|+1 = |\mathbf{T}(w)|$ and $\mathbf{T}(w) \subseteq \mathbf{T}(w)$.
  \end{itemize}
  
  In addition, for all vertices $s \in V$, we abusively say that $\mathcal{D}$ is
  \emph{rooted at $s$} if we root the tree $\mathcal{T}$ at the node $v_s$.
\end{dfn}
Nice decompositions ensure very small changes between adjacent nodes
of the tree, which is somehow required in a dynamic-complexity
perspective.

Extending~\cite{EJT10}, one can show the following
(see~Appendix~\ref{app-pureTD}):
\begin{restatable}{lem}{puredecomposition}
  \label{lemma:puredecomposition} Let $G$ be a graph with $n$ vertices
  and with tree-width~$\kappa$.  We~can construct in \LOGSPACE\ a nice
  tree decomposition $(\mathcal{T},\mathbf{T})$ of~$G$, with width at
  most $4 \kappa+3$ and diameter at most $c(\kappa) \cdot
  (\log_2(n)+1)$, where $c(\kappa)$
  only depends on~$\kappa$.
\end{restatable}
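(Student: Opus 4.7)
The plan is to start from the LOGSPACE tree-decomposition algorithm of Elberfeld, Jakoby, and Tantau~\cite{EJT10}, which on input a graph~$G$ of tree-width~$\kappa$ outputs, in logarithmic space, a tree decomposition $(\calT_0,\bfT_0)$ of $G$ of width exactly~$\kappa$. From~this, I~would apply three transformations: (a)~balance the tree to obtain diameter~$O(\log n)$; (b)~refine each edge so that adjacent bags differ by exactly one element; (c)~attach the singleton leaves required by the first two conditions of the definition. The~composition of these three transformations remains in~\LOGSPACE, since each step is either a local rewrite of the tree structure or a centroid-style recursion whose ``call stack'' can be encoded as a logarithmic-length string.

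For step~(a), I~would use a Bodlaender--Hagerup style centroid re-rooting: find a node~$v$ of~$\calT_0$ whose removal splits $\calT_0$ into components of size at most $|\calT_0|/2$, promote a bag derived from~$\bfT_0(v)$ (together with a constant number of neighbouring bags) to be the root, and recurse on each subtree. Because centroids in trees can be found in \LOGSPACE, and because the depth of the recursion is $O(\log n)$, each node of the output decomposition is addressable by a binary string of length~$O(\log n)$, so the balancing itself fits into logarithmic space. A~careful accounting of which old bags get merged into each new bag bounds the resulting width by $4\kappa+3$.

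For step~(b), I~would traverse every edge $(v,w)$ of the balanced tree: if $\bfT(v)$ and $\bfT(w)$ differ by more than one vertex, I~replace the edge by a chain of at most $2(4\kappa+4)$ intermediate nodes whose bags interpolate between $\bfT(v)$ and $\bfT(w)$ by inserting or removing one vertex at a time. Width is preserved and the diameter is multiplied by at most a factor linear in~$\kappa$. For~step~(c), for each $s\in V$ I~pick some node~$v$ with $s\in\bfT(v)$ (computable in \LOGSPACE\ by a single pass through the bags) and attach below~$v$ a chain of nodes that removes one vertex at a time until the bag is~$\{s\}$, and I~similarly prune any existing leaf whose bag is not already a singleton. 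Both operations add $O(\kappa)$ to the diameter per affected branch, so the final diameter is bounded by $c(\kappa)\cdot(\log_2 n + 1)$ for a suitable constant~$c(\kappa)$, and the width remains $4\kappa+3$.

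The~main obstacle is step~(a): standard balancing procedures are recursive in nature, so implementing them in \LOGSPACE\ requires that subproblems be re-solved on demand rather than stored. The~standard fix is to identify each output node by the sequence of centroid-choices leading to it—a string of $O(\log n)$ bits—and, given such an identifier, recompute its bag by replaying the recursion using the \LOGSPACE\ oracle of EJT10 for the underlying tree decomposition. Everything else reduces to local relabelings that a first-order formula (and a~fortiori a logspace transducer) can compute from the balanced decomposition.
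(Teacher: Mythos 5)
Your steps (b) and (c)---interpolating along each tree edge through the intersection of the two adjacent bags, attaching for every vertex $s$ a branch ending in the singleton bag $\{s\}$, and pruning the leaves that are not of this form---match the paper's proof almost verbatim, including the diameter accounting of the form $c(\kappa)\cdot(\log_2 n+1)$. The divergence, and the genuine gap, is your step (a). The paper does not balance anything itself: it invokes the result of Elberfeld, Jakoby and Tantau~\cite{EJT10} as directly providing, in \LOGSPACE, a tree decomposition of width at most $4\kappa+3$ \emph{and} diameter at most $c'(\kappa)(\log_2 n+1)$, so that only the nice-ification remains to be done. You instead start from an unbalanced width-$\kappa$ decomposition and sketch a centroid-recursion balancing in \LOGSPACE, and that sketch does not go through as stated. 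Concretely, a node of your output decomposition is identified by the sequence of branch choices made along the centroid recursion: this is $\Theta(\log n)$ choices, each naming a tree node and hence costing $\Theta(\log n)$ bits, so your identifiers have length $\Theta(\log^2 n)$, not $O(\log n)$ as claimed. The ``replay the recursion on demand'' fix suffers from the same problem: to decide membership in the level-$i$ component containing a target node you must know all previously removed centroids $c_1,\ldots,c_{i-1}$, and recomputing each $c_j$ on demand is a recursion of depth $\Theta(\log n)$ in which every level must retain its own loop counters and pointers---$\Theta(\log n)$ bits per level, hence $\Theta(\log^2 n)$ space overall. Composing two \LOGSPACE\ computations stays in \LOGSPACE, but a $\log n$-fold \emph{nested} composition does not, and this is exactly the obstruction that makes logspace balancing a nontrivial theorem (the techniques of~\cite{EJT10} circumvent it by tree-contraction-style methods rather than centroid recursion). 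In addition, your width bound $4\kappa+3$ after the centroid merges is asserted (``a careful accounting'') but never argued; that constant is precisely what the cited balanced-decomposition statement supplies for free.

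The repair is simple and is what the paper does: take the balanced, width-$(4\kappa+3)$, logarithmic-diameter decomposition as the \LOGSPACE-computable starting point from~\cite{EJT10}, and keep only your steps (b) and (c), which are correct and coincide with the paper's transformation into a nice decomposition.
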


However, we will not use nice tree decompositions directly.
Instead, we first need to perform \emph{depth-first traversals} of such decompositions.

\begin{dfn}
  Let $\mathcal{T}$ be a rooted tree, with root $\rho$.  For each
  vertex $v$ of $\mathcal{T}$, we denote by $\mathcal{A}(v)$ the set
  of ancestors of $v$ in $\mathcal{T}$, including $v$ itself.  A
  \emph{depth-first traversal} of $\mathcal{T}$ is a sequence
  $v_1,\ldots,v_\ell$ of vertices of $\mathcal{T}$ such that
  \begin{itemize}
  \item $v_1 = v_\ell = \rho$;
  \item all vertices of $\mathcal{T}$ belong to the set $\{v_1,\ldots,v_\ell\}$;
  \item for all $i \leq \ell-1$, $v_{i+1}$ is either a child or a parent of $v_i$;
  \item for all $x \leq y \leq z \leq \ell$, we have $\mathcal{A}(x) \cap \mathcal{A}(z) \subseteq \mathcal{A}(y)$.
  \end{itemize}
\end{dfn}

Performing a depth-first traversal on a rooted tree-decomposition of a
graph $G$ is a standard construction of a \emph{linear decomposition}
of~$G$, i.e., a~degenerate rooted tree decomposition of~$G$ in which
every internal node has one unique child~\cite{RS86}.  More precisely,
the sequence $(\mathcal{A}(v_i))_{1 \leq i \leq \ell}$ is a linear
decomposition of $\mathcal{T}$, from which a linear decomposition
of~$G$ is easily derived.  However, in the sequel, we do \emph{not}
focus on the latter linear decomposition, but rather on a nice
tree-decomposition $\mathcal{T}$ of $G$ and on a depth-first traversal
of~$\mathcal{T}$ itself.

\ifincludeex
Figure~\ref{fig:etendue} below represents such an extension.  A pure
tree decomposition of a 6-vertex graph (with nodes $v_3,\ldots,v_6$)
is placed inside the gray area.  In each node $v_i$, black vertices
are those that belong to $\mathbf{T}(v_i)$.  The distinguished vertex
is $1$ and belongs to $\mathbf{T}(v_3) = \{1,2,3\}$, hence we add two
nodes $v_1$ and $v_2$ outside of the gray area, and we root the
resulting tree decomposition at $v_1$.

\begin{figure}[!ht]
\begin{center}
\begin{tikzpicture}[scale=0.4]
\SetGraphUnit{2}

\placenode{0}{0}{1}{0}
\Vertex[Node]{1}

\placenode{7}{0}{2}{0}
\Vertex[Node]{1}
\Vertex[Node]{2}
\Edge(1)(2)

\placenode{7}{6}{5}{20}
\Vertex[Node]{1}
\Vertex[Node]{2}
\Vertex[Node]{4}
\Edge(1)(2)
\Edge(1)(4)
\Edge(2)(4)

\placenode{14}{0}{3}{20}
\Vertex[Node]{1}
\Vertex[Node]{2}
\Vertex[Node]{3}
\Edge(1)(2)
\Edge(2)(3)

\placenode{14}{6}{4}{20}
\Vertex[Node]{1}
\Vertex[Node]{2}
\Edge(1)(2)

\placenode{21}{0}{6}{20}
\Vertex[Node]{2}
\Vertex[Node]{3}
\Edge(2)(3)

\placenode{21}{6}{7}{20}
\Vertex[Node]{2}
\Vertex[Node]{3}
\Vertex[Node]{5}
\Edge(2)(3)
\Edge(2)(5)
\Edge(3)(5)

\placenode{28}{0}{8}{20}
\Vertex[Node]{2}
\Vertex[Node]{3}
\Vertex[Node]{6}
\Edge(2)(3)
\Edge(2)(6)
\Edge(3)(6)

\Edge(E1)(W2)
\Edge(E2)(W3)
\Edge(N3)(S4)
\Edge(W4)(E5)
\Edge(E3)(W6)
\Edge(N6)(S7)
\Edge(E6)(W8)

\node[anchor=south] at (N1) {\textbf{ROOT}};
\end{tikzpicture}
\end{center}
\caption{Extending a pure tree decomposition (in gray) by adding a root $v_1$ and a node $v_2$}
\label{fig:etendue}
\end{figure}
\fi

These constructions are illustrated in
Fig.~\ref{fig:etendue-0}, which displays 
a graph $G$, a nice tree decomposition $(\mathcal{T},\mathbf{T})$ of~$G$,
and depth-first traversal of~$\mathcal{T}$.

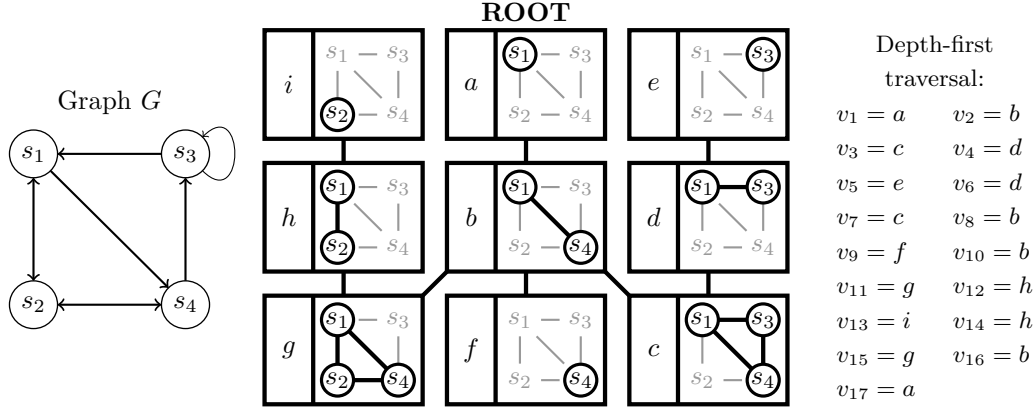
\begin{figure}[ht]
\begin{center}
\begin{tikzpicture}[scale=0.4]
\SetGraphUnit{5}
\Vertex[x=-9,y=5.5,L={$s_1$}]{1}
\SO[L={$s_2$}](1){2}
\EA[L={$s_3$}](1){3}
\SO[L={$s_4$}](3){4}
\node[anchor=south] at (-6.5,6.5) {Graph $G$};

\Edge[style={<->}](1)(2)
\Edge[style={<-}](1)(3)
\Edge[style={->}](1)(4)
\Edge[style={<->}](2)(4)
\Edge[style={<-}](3)(4)
\Loop[dist=2cm,dir=EA,style={->}](3)
 
\SetGraphUnit{2}

\placenode{0}{0}{g}{7}
\Vertex[Node,L={$s_1$}]{1}
\Vertex[Node,L={$s_2$}]{2}
\Vertex[Node,L={$s_4$}]{4}
\Edge(1)(2)
\Edge(1)(4)
\Edge(2)(4)

\placenode{0}{4.4}{h}{8}
\Vertex[Node,L={$s_1$}]{1}
\Vertex[Node,L={$s_2$}]{2}
\Edge(1)(2)

\placenode{0}{8.8}{i}{9}
\Vertex[Node,L={$s_2$}]{2}

\placenode{6}{0}{f}{6}
\Vertex[Node,L={$s_4$}]{4}

\placenode{6}{4.4}{b}{2}
\Vertex[Node,L={$s_1$}]{1}
\Vertex[Node,L={$s_4$}]{4}
\Edge(1)(4)

\placenode{6}{8.8}{a}{1}
\Vertex[Node,L={$s_1$}]{1}

\placenode{12}{0}{c}{3}
\Vertex[Node,L={$s_1$}]{1}
\Vertex[Node,L={$s_3$}]{3}
\Vertex[Node,L={$s_4$}]{4}
\Edge(1)(3)
\Edge(1)(4)
\Edge(3)(4)

\placenode{12}{4.4}{d}{4}
\Vertex[Node,L={$s_1$}]{1}
\Vertex[Node,L={$s_3$}]{3}
\Edge(1)(3)

\placenode{12}{8.8}{e}{5}
\Vertex[Node,L={$s_3$}]{3}

\Edge(N7)(S8)
\Edge(N8)(S9)
\Edge(NE7)(SW2)
\Edge(N6)(S2)
\Edge(N2)(S1)
\Edge(NW3)(SE2)
\Edge(N3)(S4)
\Edge(N4)(S5)

\node[anchor=south] at (N1) {\textbf{ROOT}};

\node[anchor=west] at (16.2,3.4){
\addtolength{\tabcolsep}{1pt}   
\begin{tabular}{ll}
\multicolumn{2}{c}{Depth-first} \\
\multicolumn{2}{c}{traversal:} \\
$v_1 = a$ & $v_2 = b$ \\ $v_3 = c$ &
$v_4 = d$ \\ $v_5 = e$ & $v_6 = d$ \\
$v_7 = c$ & $v_8 = b$ \\ $v_9 = f$ &
$v_{10} = b$ \\ $v_{11} = g$ & $v_{12} = h$ \\
$v_{13} = i$ & $v_{14} = h$ \\ $v_{15} = g$ &
$v_{16} = b$ \\ $v_{17} = a$ &
\end{tabular}
\addtolength{\tabcolsep}{-1pt}   
};

\end{tikzpicture}
\end{center}
\caption{Graph, nice tree decomposition and depth-first traversal}
\label{fig:etendue-0}\label{ex-xTD-0}
\end{figure}

%
%
%

\ifincludeex
Figure~\ref{fig:linear-decomposition} shows the linear decomposition
$\mathcal{D}(v_1)$ obtained for the above graphs, where nodes have
been numbered so that $v_i \prec v_j$ iff $i < j$.

\begin{figure}[!ht]
\begin{center}
\begin{tikzpicture}[scale=0.85]
\SetGraphUnit{1}

\placestep{0}{0}{1}
\Vertex[Node,L={$v_1$}]{1}

\placestep{5}{0}{2}
\Vertex[Node,L={$v_1$}]{1}
\Vertex[Node,L={$v_2$}]{2}
\Edge(1)(2)

\placestep{10}{0}{3}
\Vertex[Node,L={$v_2$}]{2}

\placestep{0}{-3}{4}
\Vertex[Node,L={$v_2$}]{2}
\Vertex[Node,L={$v_3$}]{3}
\Edge(2)(3)

\placestep{5}{-3}{5}
\Vertex[Node,L={$v_3$}]{3}

\placestep{10}{-3}{6}
\Vertex[Node,L={$v_3$}]{3}
\Vertex[Node,L={$v_4$}]{4}
\Edge(3)(4)

\placestep{0}{-6}{7}
\Vertex[Node,L={$v_3$}]{3}
\Vertex[Node,L={$v_4$}]{4}
\Vertex[Node,L={$v_5$}]{5}
\Edge(3)(4)
\Edge(4)(5)

\placestep{5}{-6}{8}
\Vertex[Node,L={$v_3$}]{3}
\Vertex[Node,L={$v_5$}]{5}

\placestep{10}{-6}{9}
\Vertex[Node,L={$v_3$}]{3}

\placestep{0}{-9}{10}
\Vertex[Node,L={$v_3$}]{3}
\Vertex[Node,L={$v_6$}]{6}
\Edge(3)(6)

\placestep{5}{-9}{11}
\Vertex[Node,L={$v_6$}]{6}

\placestep{10}{-9}{12}
\Vertex[Node,L={$v_6$}]{6}
\Vertex[Node,L={$v_7$}]{7}
\Edge(6)(7)

\placestep{0}{-12}{13}
\Vertex[Node,L={$v_6$}]{6}

\placestep{5}{-12}{14}
\Vertex[Node,L={$v_6$}]{6}
\Vertex[Node,L={$v_8$}]{8}
\Edge(6)(8)

\placestep{10}{-12}{15}
\Vertex[Node,L={$v_8$}]{8}
\end{tikzpicture}
\end{center}
\caption{Linear decomposition $\mathcal{D}(v_1)$ with width 2 and length 15}
\label{fig:linear-decomposition}
\end{figure}
\fi


\section{Towards a dynamic algorithm: selector views}
\label{sec-stratview}


In this section, we fix an input game $G$ and we assume that we have a
nice tree decomposition $\mathcal{D} = (\mathcal{T},\mathbf{T})$ of $G$
such as constructed in Lemma~\ref{lemma:puredecomposition},
rooted at a vertex $\sigma$ of $G$, and that we have a
depth-first traversal $v_1,\ldots,v_\ell$ of $\mathcal{T}$. We
transform our parity-game problem into a Dyck-path problem.
More precisely, we build a
graph~$\Gamma_G$, and establish a correspondence between some Dyck
paths in this graph and winning strategies in the original game.

Moreover, in order to simplify subsequent proofs, and without loss of
generality, we assume from this point on that $G$ contains at least
two vertices.

\subsection{Selector views}\label{sec-selview}

For all nodes $v$ of $\mathcal{T}$,
recall that $\mathcal{A}(v)$ is the set of ancestors of $v$.
For all integers $i \in [1,\ell]$, we define the sets $\Theta_i = \mathcal{A}(v_i)$
and $\Psi_i = \bigcup_{v\in\Theta_i} \bfT(v)$. Then, for all $i \in [0,\ell]$, we also set
\begin{xalignat*}4
\Theta_{\leq i} &= \textstyle\bigcup_{j=1}^i \Theta_j &
   \Theta_{>i} &= \calV \setminus \Theta_{\leq i} &
  \Psi_{\leq i} &= \textstyle\bigcup_{j=1}^i \Psi_j &
   \Psi_{>i} &= V \setminus \Psi_{\leq i} 
\end{xalignat*}
%
Observe that
$\Psi_{\leq 0} = \Psi_{>\ell} = \emptyset$ and that $\Psi_{\leq 1} = \bfT(v_1) = \{\sigma\}$.

By definition, for all $1 \leq i \leq \ell$, the sets $\Theta_i$ and $\Theta_{\leq i}$
are non-empty and ancestor-closed.
Therefore, each node $w \in \Theta_{>i-1}$ has a unique lowest ancestor in~$\Theta_{\leq i}$,
which we denote by~$n_i(w)$. Let $a \leq i$ and $b \geq i$ be integers such that
$n_i(w) \in \Theta_a = \calA(v_a)$ and that $w = v_b$.
Thus $n_i(w)$ appears in both $\mathcal{A}(v_a)$ and $\mathcal{A}(v_b)$,
which entails that $n_i(w) \in \mathcal{A}(v_i) = \Theta_i$.

Furthermore, for each vertex $s \in V$, we let
$\calT^s=\{v\in\calV\mid s\in \bfT(v)\}$.  Denoting by~$r(s)$ the root
of the sub-tree $\mathcal{T}^s$ of $\mathcal{T}$, we know that $s \in
\Psi_{>i-1}$ if, and only~if, $r(s) \in \Theta_{>i-1}$; in~that case,
we~also denote by~$n_i(s)$ the node $n_i(r(s))$, and it is easy to see
that $n_i(v) = n_i(s)$ for all nodes $v \in \mathcal{T}^s$.

In what follows, for all nodes $v \in \Theta_i$, we denote by
$\Theta_{>i-1}(v)$, $\Psi_{>i-1}(v)$ the sets $\{w \in \Theta_{>i-1}
\mid n_i(w) = v\}$ and $\{s \in \Psi_{>i-1} \mid n_i(s) = v\}$, which
respectively form partitions (possibly containing empty sets) of the
sets $\Theta_{>i-1}$ and $\Psi_{>i-1}$.  Then every edge of~$G$ with
one end~$s$ in~$\Psi_{>i-1}(v)$ has its other end in $\Psi_{>i-1}(v)
\cup \bfT(v)$, since $\Theta_{>i-1}(v)\cup \{v\}$ is a subtree of
$\mathcal{T}$ that contains $\calT^s$.
%
%
%
%
Figure~\ref{fig:partition-theta-0} shows some sets $\Theta_{>i}(v)$ and $\Psi_{>i}(v)$
obtained from the depth-first traversal of Figure~\ref{fig:etendue-0}.

\begin{figure}[tb]
\begin{minipage}{.4\linewidth}
\begin{center}
\begin{tikzpicture}[scale=0.5]
\draw[draw=black!60,fill=black!5] (-0.3,0.7) -- (2.3,0.7) -- (2.7,0.3) --
(2.7,-4.3) -- (2.3,-4.7) -- (1.7,-4.7) -- (1.3,-4.3) -- (1.3,-3.7) -- (0.3,-2.7) --
(-0.3,-2.7) -- (-0.7,-2.3) -- (-0.7,0.3) -- cycle;
\draw[draw=black!60,fill=black!5] (-2.3,0.7) -- (-1.7,0.7) -- (-1.3,0.3) --
(-1.3,-2.9) -- (-0.9,-3.3) -- (0.3,-3.3) -- (0.7,-3.7) -- (0.7,-4.3) -- (0.3,-4.7) --
(-2.3,-4.7) -- (-2.7,-4.3) -- (-2.7,0.3) -- cycle;

\SetGraphUnit{2}

\tikzset{VertexStyle/.style = {color=black}}
\vs

\Vertex[L={$a$}]{1}
\SO[L={$b$}](1){2}

\tikzset{VertexStyle/.style = {color=black}}

\WE[L={$i$}](1){9}
\SO[L={$h$}](9){8}
\SO[L={$g$}](8){7}
\SO[L={$f$}](2){6}
\EA[L={$e$}](1){5}
\SO[L={$d$}](5){4}
\SO[L={$c$}](4){3}

\Edge(1)(2)
\Edge(2)(3)
\Edge(3)(4)
\Edge(4)(5)
\Edge(2)(6)
\Edge(2)(7)
\Edge(7)(8)
\Edge(8)(9)

\node[anchor=north] at (2,-4.7) {$\Theta_{\leq 7}$};
\node[anchor=north] at (-1,-4.7) {$\Theta_{>7}(b)$};

\node[anchor=west] at (3.1,-0.5) {$\Theta_{8} = \{a,b\}$};
\node[anchor=west] at (3.1,-1.5) {$\Theta_{>7}(a) = \emptyset$};
\node[anchor=west] at (3.1,-2.5) {$\Psi_{>7}(a) = \emptyset$};
\node[anchor=west] at (3.1,-3.5) {$\Psi_{>7}(b) = \{s_2\}$};

\node at (0,1.1) {};
\end{tikzpicture}
\end{center}
\caption{Sets $\Theta_{>i}(v)$ and $\Psi_{>i}(v)$}
\label{fig:partition-theta-0}\label{ex-psi-0}
\end{minipage}\hfill
\begin{minipage}{.58\linewidth}
\begin{center}
\begin{tikzpicture}[scale=1]
\draw[draw=black!60,thick,fill=black!5] (-0.6,0.4) -- (-0.4,0.6) -- (0.4,0.6) -- (0.6,0.4) -- (0.6,-0.4) -- (1.1,-0.9) --
(1.9,-0.9) -- (2.1,-1.1) -- (2.1,-1.9) -- (1.9,-2.1) -- (1.1,-2.1) -- (0.9,-1.9) -- (0.9,-1.1) -- (0.4,-0.6) --
(-0.4,-0.6) -- (-0.6,-0.4) -- cycle;
\draw[draw=black!75,thick,fill=black!15] (-0.5,0.36) -- (-0.36,0.5) -- (0.36,0.5) -- (0.5,0.36) -- (0.5,-0.36) -- (0.36,-0.5) -- (-0.36,-0.5) -- (-0.5,-0.36) -- cycle;
\draw[draw=black!60,thick,fill=black!5] (-0.6,-1.1) -- (-0.4,-0.9) -- (0.4,-0.9) -- (0.6,-1.1) -- (0.6,-1.9) -- (0.4,-2.1) -- (-0.4,-2.1) -- (-0.6,-1.9) -- cycle;
\draw[draw=black!75,thick] (-0.6,0) -- (-1.2,0) -- (-1.2,-0.2);
\node[anchor=north] at (-1.2,-0.1) {$\mathbf{T}(a)$};
\node[anchor=north] at (0,-2.1) {$\Psi_{>7}(b)$};
\node[anchor=north] at (1.5,-2.1) {$\mathbf{T}(b)$};

\SetGraphUnit{1.5}
\vs
\Vertex[L={$s_1$}]{1}
\SO[L={$s_2$}](1){2}
\EA[L={$s_3$}](1){3}
\SO[L={$s_4$}](3){4}

\tikzset{EdgeStyle/.append style = {>=stealth}}

\Edge[style={<->,ultra thick}](1)(2)
\Edge[style={<-,densely dotted}](1)(3)
\Edge[style={->,densely dotted}](1)(4)
\Edge[style={<-,ultra thick}](2)(4)
\Edge[style={->,bend right=40,densely dotted}](2)(4)
\Edge[style={<-,densely dotted}](3)(4)
\Loop[dist=1cm,dir=EA,style={ultra thick,->}](3)

\node[anchor=west] at (2.7,0.5) {$c(s_1) = 1$};
\node[anchor=west] at (4.5,0.5) {$c(s_2) = 2$};
\node[anchor=west] at (2.7,0) {$c(s_3) = 1$};
\node[anchor=west] at (4.5,0) {$c(s_4) = 3$};

\node[anchor=west] at (2.7,-1) {$f_8(a,s_1) = \mathord?$};
\node[anchor=west] at (2.7,-1.5) {$f_8(b,s_1) = (s_1,2)$};
\node[anchor=west] at (2.7,-2){$f_8(b,s_4) = (s_1,3)$};
\end{tikzpicture}
\end{center}
\caption{Selector views~$f_i(v)$ of a $V$-selector~$f$}
\label{fig:strategie-projection-0}\label{ex-SV-0}
\end{minipage}
\end{figure}

\ifincludeex
Figure~\ref{fig:partition-theta} displays some sets $\Theta_{>i}(n)$ and
$\Psi_{>i}(n)$ associated with the linear
decomposition~$\calD(v_1)$ of
Figure~\ref{fig:linear-decomposition}.

\begin{figure}[!ht]
\begin{center}
\begin{tikzpicture}[scale=0.5]
\draw[draw=black!60,fill=black!5] (7.3,1.3) -- (7.3,2.7) -- (8.7,2.7) -- (8.7,1.3) -- cycle;
\draw[draw=black!60,fill=black!5] (5.3,-1.3) -- (6,-1.3) -- (7.3,0) -- (7.3,0.7) -- (8.7,0.7) -- (8.7,-2.7) -- (5.3,-2.7) -- cycle;
\draw[draw=black!60,fill=black!5] (-0.7,0.7) -- (4,0.7) -- (5.3,2) -- (5.3,2.7) -- (6.7,2.7) -- (6.7,1.3) -- (4.7,-0.7) -- (-0.7,-0.7) -- cycle;

\SetGraphUnit{2}

\tikzset{VertexStyle/.style = {color=black}}

\Vertex[L={$v_1$}]{1}
\EA[L={$v_2$}](1){2}

\vs

\EA[L={$v_3$}](2){3}
\NOEA[L={$v_4$}](3){4}

\tikzset{VertexStyle/.style = {color=black}}

\EA[L={$v_5$}](4){5}
\SOEA[L={$v_6$}](3){6}
\NOEA[L={$v_7$}](6){7}
\EA[L={$v_8$}](6){8}

\Edge(1)(2)
\Edge(2)(3)
\Edge(3)(4)
\Edge(4)(5)
\Edge(3)(6)
\Edge(6)(7)
\Edge(6)(8)

\node[anchor=south] at (8,2.7) {$\Theta_{>6}(v_4)$};
\node[anchor=north] at (7,-2.7) {$\Theta_{>6}(v_3)$};
\node[anchor=south] at (2,0.7) {$\Theta_{\leq 6}$};
\node[anchor=west] at (-0.6,-4) {$\Theta_6 = \{v_3,v_4\}$};
\node[anchor=west] at (-0.6,-5) {$\Psi_{\leq 6} = \{1,2,3\}$};
\node[anchor=west] at (-0.6,-6) {$\Psi_{>6}(v_3) = \{5,6\}$};
\node[anchor=west] at (-0.6,-7) {$\Psi_{>6}(v_4) = \{4\}$};

\draw[draw=black!60,fill=black!5] (17.3,-1.3) -- (18,-1.3) -- (19.3,0) -- (19.3,0.7) -- (20.7,0.7) -- (20.7,-2.7) -- (17.3,-2.7) -- cycle;
\draw[draw=black!60,fill=black!5] (11.3,0.7) -- (16,0.7) -- (17.3,2) -- (17.3,2.7) -- (20.7,2.7) -- (20.7,1.3) -- (18.7,1.3) -- (16.7,-0.7) -- (11.3,-0.7) -- cycle;

\SetGraphUnit{2}

\tikzset{VertexStyle/.style = {color=black}}

\Vertex[x=12,y=0,L={$v_1$}]{1}
\EA[L={$v_2$}](1){2}

\vs

\EA[L={$v_3$}](2){3}
\NOEA[L={$v_4$}](3){4}
\EA[L={$v_5$}](4){5}

\tikzset{VertexStyle/.style = {color=black}}

\SOEA[L={$v_6$}](3){6}
\NOEA[L={$v_7$}](6){7}
\EA[L={$v_8$}](6){8}

\Edge(1)(2)
\Edge(2)(3)
\Edge(3)(4)
\Edge(4)(5)
\Edge(3)(6)
\Edge(6)(7)
\Edge(6)(8)

\node[anchor=south] at (16,2.7) {$\Theta_{>7}(v_4) = \Theta_{>7}(v_5) = \emptyset$};
\node[anchor=north] at (19,-2.7) {$\Theta_{>7}(v_3)$};
\node[anchor=south] at (14,0.7) {$\Theta_{\leq 7}$};
\node[anchor=west] at (11.4,-4) {$\Theta_7 = \{v_3,v_4,v_5\}$};
\node[anchor=west] at (11.4,-5) {$\Psi_{\leq 7} = \{1,2,3,4\}$};
\node[anchor=west] at (11.4,-6) {$\Psi_{>7}(v_3) = \{5,6\}$};
\node[anchor=west] at (11.4,-7) {$\Psi_{>7}(v_4) = \Psi_{>7}(v_5) = \emptyset$};
\end{tikzpicture}
\end{center}
\caption{Sets $\Theta_{>i}$, $\Psi_{>i}$, $\Theta_{>i}(v)$ and $\Psi_{>i}(n)$ for $i = 6$ and $i = 7$}
\label{fig:partition-theta}
\end{figure}
\fi

\medskip
It is worth noting that $\Psi_{>i}(\cdot)$ only slightly differs
  from $\Psi_{>i+1}(\cdot)$. 
  This will allow us to propagate useful information on $V$-selectors in a backward
  manner, from $\Psi_{>\ell}$
  to $\Psi_{>0}$.
In order to define the information we need about $V$-selectors, 
we~introduce three new symbols~$?$, $\top$ and~$\bot$,
and for every node $v \in \mathcal{V}$, we~denote by~$\Lambda(v)$ the
set of functions $\mathbf{T}(v) \to \{?,\top,\bot\} \cup (\bfT(v)
\times \{1,\ldots,\mathbf{C}\})$.
With every $V$-selector~$f$, every index $1 \le i \le \ell$, and every node $v
  \in \Theta_i$, we associate such a function, which encodes
  local
  information about the behavior of~$f$ on~$\Psi_{>i-1}(v)$.
We make it explicit now.

\begin{dfn}\label{dfn:selector-view-f}
  Consider an integer $1\leq i \leq \ell$, a node $v \in \Theta_i$,
  a vertex $s \in \mathbf{T}(v)$ and a $V$-selector~$f$. Let $\pi = s \cdot
  s_1 \cdot s_2 \dots$ be the unique (maximal) outcome of~$f$ from~$s$.
  
  We call \emph{selector view}, and denote by $f_i(v, s)$, the element
  of $\{?,\top,\bot\} \cup (\bfT(v) \times \{1,\ldots,\mathbf{C}\})$
  defined as follows:
  \begin{itemize}
  \item if $\pi$ stays in~$\Psi_{>i-1}(v)$ after the first step (or if
    $s \in \Psi_{>i-1}(v) \setminus \mathrm{dom}(f)$) and is winning
    for~$P_0$, then $f_i(v,s) = \top$;
  \item if $\pi$ stays in~$\Psi_{>i-1}(v)$ after the first step (or if
    $s \in \Psi_{>i-1}(v) \setminus \mathrm{dom}(f)$) and is losing
    for~$P_0$, then $f_i(v,s) = \bot$;
  \item if the first step of $\pi$ does not enter $\Psi_{>i-1}(v)$ (or
    if $s \notin \Psi_{>i-1}(v)$ and $s \notin \mathrm{dom}(f)$), then
    $f_i(v,s) = \mathord?$;
  \item if $\pi$ stays in~$\Psi_{>i-1}(v)$ from the first step until
    some step~$j\geq 2$, and $c_m$~is the maximal color
    seen along~$\pi$ between~$s$ and~$s_{j-1}$, then $f_i(v,s) =
    (s_j,c_m)$.
  \end{itemize}
  By extension, we also call selector views the functions $f_i(v) : s
  \mapsto f_i(v,s)$.
\end{dfn}

%
%
%

Figure~\ref{fig:strategie-projection-0} displays examples of selector views.
The~$V$-selector~$f$ is represented by solid edges, and dotted edges are
those graph edges that are not used by~$f$. The~color~$c(s_i)$ of
each vertex $s_i \in V$ is indicated on the right.

\ifincludeex
Figure~\ref{fig:strategie-projection} shows possible values of such selector
views. The~$V$-selector~$f$ is represented by solid edges, and dotted edges are
those graph edges that are not used by~$f$. Furthermore, the~color~$c(v)$ of
each vertex $v \in V$ is displayed next to the vertex~$v$ itself.
\begin{figure}[!ht]
\begin{center}
\begin{tikzpicture}
\draw[draw=black!50,thick,fill=black!15] (-1.2,0.5) arc (180:0:1.2) -- (1.2,0) arc (180:360:0.8) -- (2.8,0.5) arc
(180:0:1.2) -- (5.2,0) arc (360:270:1.2) arc (90:180:0.8) -- (3.2,-2.5) arc (360:180:1.2) -- (0.8,-2) arc (0:90:0.8) arc (270:180:1.2) -- cycle;
\draw[draw=black!75,thick,fill=black!25] (-1,0.5) arc (180:0:1) -- (1,0) arc (180:270:1) arc
(90:0:1) -- (3,-2.5) arc (360:180:1) -- (1,-2) arc (0:90:1) arc (270:180:1) -- cycle;
\draw[draw=black!50,thick] (-1.2,0.5) -- (-1.8,0.5) -- (-1.8,0.7);
\node[anchor=south] at (-1.8,0.6) {\color{black!50}$\mathbf{T}(v_3)$};
\draw[draw=black!75,thick] (-1,0) -- (-1.8,0) -- (-1.8,-0.2);
\node[anchor=north] at (-1.8,-0.1) {\color{black!75}$\mathbf{T}(v_4)$};

\SetGraphUnit{2}
\vs
\Vertex{1}
\EA(1){6}
\EA(6){3}
\SO(1){4}
\EA(4){2}
\EA(2){5}

\tikzset{EdgeStyle/.append style = {>=stealth}}

\Edge[style={<->,bend right=12.5,densely dotted}](1)(2)
\Edge[style={->,ultra thick}](1)(4)
\Edge[style={<-,bend right=25,densely dotted}](1)(4)
\Edge[style={<->,densely dotted}](2)(3)
\Edge[style={->,bend left=25,densely dotted}](2)(4)
\Edge[style={<-,ultra thick}](2)(4)
\Edge[style={<->,densely dotted}](2)(5)
\Edge[style={->,ultra thick}](2)(6)
\Edge[style={<-,bend left=25,densely dotted}](2)(6)
\Edge[style={<->,ultra thick}](3)(5)
\Edge[style={<-,ultra thick}](3)(6)
\Edge[style={->,bend right=25,densely dotted}](3)(6)
\Loop[dist=1cm,dir=NO,style={thick,->,densely dotted}](1)
\Loop[dist=1cm,dir=NO,style={thick,->,densely dotted}](6)
\Loop[dist=1cm,dir=NO,style={thick,->,densely dotted}](3)
\Loop[dist=1cm,dir=SO,style={thick,->,densely dotted}](2)
\Loop[dist=1cm,dir=SO,style={thick,<-,densely dotted}](4)
\Loop[dist=1cm,dir=SO,style={thick,->,densely dotted}](5)

\node[anchor=south] at (0,0.7) {$c(1) = 2$};
\node[anchor=south] at (2,0.7) {$c(6) = 2$};
\node[anchor=south] at (4,0.7) {$c(3) = 1$};
\node[anchor=north] at (0,-2.7) {$c(4) = 3$};
\node[anchor=north] at (2,-2.7) {$c(2) = 1$};
\node[anchor=north] at (4,-2.7) {$c(5) = 4$};

\node[anchor=south west] at (5.5,-0.1) {$f_6(v_3,1) = \mathord?$};
\node[anchor=south west] at (5.5,-0.7) {$f_6(v_3,2) = (3,2)$};
\node[anchor=south west] at (5.5,-1.3) {$f_6(v_3,3) = (3,4)$};
\node[anchor=south west] at (9,-0.1) {$f_6(v_4,1) = (2,3)$};
\node[anchor=south west] at (9,-0.7) {$f_6(v_4,2) = \mathord?$};
\node[anchor=south west] at (7,-2.5) {$\Psi_{>6}(v_3) = \{5,6\}$};
\node[anchor=south west] at (7,-3.1) {$\Psi_{>6}(v_4) = \{4\}$};
\end{tikzpicture}
\end{center}
\caption{Selector views~$f_i(v)$ of a $V$-selector~$f$}
\label{fig:strategie-projection}
\end{figure}
\fi

%
%
%
%
%

We~explain in the next section how those views can be computed
iteratively by efficient local 
operations.  We~then proceed in the same way with $V_0$-strategies
(which we target); this will provide us with a graph in which a simple
Dyck-path query will answer our problem.

\subsection{Computing selector views}\label{sec-refselview}


We aim at computing the sequence of selector views $(f_i)_{1\leq i\leq \ell}$ in a
backward manner, computing $(f_i(v))_{v\in\Theta_i}$
from~$(f_{i+1}(v))_{v\in\Theta_{i+1}}$. However, knowing the latter views is not
sufficient for reconstructing the former ones: intuitively speaking, the
functions $f_{i+1}(v)$ provide us with knowledge about the behavior
of~$f$ in each domain~$\Psi_{>i}(v)$, i.e.~in~$\Psi_{>i}$, and we need
to get information on how $f$ behaves on~$\Psi_{>i-1}$. When $\Psi_{>i}$
is a strict subset of~$\Psi_{>i-1}$, we~need some additional
information, as~we explain~now.

Fix an index~$i$ ($1 \le i \le \ell$) such that $\Psi_{>i} \subsetneq
\Psi_{>i-1}$. Such an index~$i$ is said to be
\emph{critical}. It~corresponds to steps of the depth-first traversal
where a new node is being explored.  More precisely, this means that
the node $v_i$ is a child of $v_{i-1}$ (or that $i = 1$) and that
there exists a (necessarily unique) vertex $\theta_i$ of $G$ such that
$v_i = r(\theta_i)$.  Then, we have $\Psi_{>i-1} = \Psi_{>i} \uplus \{\theta_i\}$.

Consequently, in order to know ``how~$f$ behaves on~$\Psi_{>i-1}$'',
we~need to identify which vertex $s \in \mathbf{T}(v_i)$, if~any, is such that
$f(\theta_i) = s$, and which vertices $t \in \mathbf{T}(v_i)$
are such that $f(t) = \theta_i$.


One can show the following result, whose proof is postponed to
  Appendix~\ref{section:decremental-I}:
\begin{restatable}{proposition}{propdecrI}
    \label{prop:decremental-I}
    Fix a $V$-selector~$f$. For all integers $i \leq \ell$, it holds:
    \begin{itemize}
    \item if $i = \ell$, then $f_i$ is \FO-definable;
    \item if $i \leq \ell-1$ is non-critical, then $f_i$ is \FO-definable using $f_{i+1}$;
    \item if $i \leq \ell-1$ is critical, then $f_i$ is \FO-definable using
      $f_{i+1}$ and a predicate that indicates which
      vertices $s \in \mathbf{T}(v_i)$ are such that $f(\theta_i) =
      s$, and which vertices $t \in \mathbf{T}(v_i)$ are such that
      $f(t) = \theta_i$.
    \end{itemize}
  %
\end{restatable}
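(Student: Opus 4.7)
I plan to prove this by downward induction on $i$.

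For the base case $i = \ell$, I argue that $\Psi_{>\ell-1} = \emptyset$, which forces $f_\ell$ to be constantly equal to $?$ and hence trivially FO-definable. This is because $\mathcal{D}$ is nice and rooted at $\sigma$, so $\mathbf{T}(v_\ell) = \mathbf{T}(\rho) = \{\sigma\}$; moreover $\sigma \in \Psi_{\leq 1}$, and every other vertex $s$ has an introduce-node $v_s$ visited at some step $1 < j < \ell$, placing $s$ in $\Psi_{\leq \ell - 1}$.

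For the inductive step, with $i \leq \ell - 1$, I exploit the fact that $\Theta_i$ and $\Theta_{i+1}$ differ by exactly one node, namely the node attached or detached by the traversal edge $(v_i, v_{i+1})$. For every $v \in \Theta_i$ unaffected by this change, $\Psi_{>i-1}(v) = \Psi_{>i}(v)$, and I simply set $f_i(v, s) = f_{i+1}(v, s)$ via an FO formula. The substantive work concentrates on $v_i$, the only node whose partition class may genuinely shift. In the non-critical case (where $\Psi_{>i-1} = \Psi_{>i}$), $\Psi_{>i-1}(v_i)$ decomposes as a union of at most two classes $\Psi_{>i}(v')$ already understood via $f_{i+1}$. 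An outcome in $\Psi_{>i-1}(v_i)$ breaks into a bounded number of sub-trajectories, each confined to a single $\Psi_{>i}(v')$, with transitions occurring at vertices of $\mathbf{T}(v_i) \cap \mathbf{T}(v_{i+1})$. An FO formula consults $f_{i+1}$ at the finitely many candidate transition vertices, composes the pieces by tracking the maximum colour, and outputs either an exit vertex together with its associated colour or a $\top/\bot$ verdict.

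In the critical case, $\Psi_{>i-1} = \Psi_{>i} \uplus \{\theta_i\}$, and the auxiliary predicates provide $t_0 = f(\theta_i)$ (when defined) together with $S = \{t \in \mathbf{T}(v_i) \mid f(t) = \theta_i\}$. The trace of an outcome in $\Psi_{>i-1}(v_i)$ follows $f_{i+1}(v_i, \cdot)$ piecewise, with the additional rule that whenever a piece ends at a vertex of $S$, the trajectory hops to $\theta_i$ (picking up colour $c(\theta_i)$) and restarts at $t_0$. Because $t_0$ is a fixed vertex, the behaviour after a first bounce through $\theta_i$ depends only on $f_{i+1}(v_i, t_0)$, so the whole trace is described by a bounded combination of $f_{i+1}$-lookups, one $\theta_i$-hop, and the colour function, yielding an FO expression for $f_i(v_i, s)$.

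The main obstacle is verifying the boundedness of this piecewise composition in the critical case, since the outcome may in principle cycle through $\theta_i$ many times. The resolution is that after the first visit to $\theta_i$ the trajectory restarts deterministically from the fixed vertex $t_0$, so either it exits $\Psi_{>i-1}(v_i)$ without revisiting $\theta_i$, or it returns to $\theta_i$ and thereafter cycles along the fixed pattern $t_0 \to \cdots \to (\text{a vertex of } S) \to \theta_i \to t_0$. The maximum colour along such a cycle, which determines the $\top/\bot$ verdict, is computable from a bounded number of $f_{i+1}$-values together with $c(\theta_i)$ and $c(t_0)$, completing the FO description.
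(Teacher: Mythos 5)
Your base case and your top-level architecture (downward induction, with the pair of predicates $f(\theta_i)$ and $\{t \in \mathbf{T}(v_i) \mid f(t) = \theta_i\}$ injected exactly at critical steps) agree with the paper's proof. The gap is in your picture of what $f_i$ records. By Definition~\ref{dfn:selector-view-f}, $f_i(v_i,s)$ tracks the outcome only until its \emph{first} exit from $\Psi_{>i-1}(v_i)$, and every vertex of $\mathbf{T}(v_i)$ except (at a critical step) $\theta_i$ lies \emph{outside} that region, since $\mathbf{T}(v_{i-1}) \subseteq \Psi_{\leq i-1}$. Moreover, in the non-critical case an outcome inside $\Psi_{>i-1}(v_i) = \Psi_{>i}(v_i) \uplus \Psi_{>i}(v_{i+1})$ can never switch class: an edge leaving $\Psi_{>i}(v_{i+1})$ lands in $\Psi_{>i}(v_{i+1}) \cup \mathbf{T}(v_{i+1})$, and the only vertex of $\mathbf{T}(v_{i+1})$ lying in the region is $\theta_{i+1}$, which already belongs to $\Psi_{>i}(v_{i+1})$. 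So there are no ``transitions occurring at vertices of $\mathbf{T}(v_i) \cap \mathbf{T}(v_{i+1})$'' to compose: hitting a bag vertex \emph{terminates} the view. The paper's merge is accordingly a one-shot case split (its function $\pi^1_i$: take $f_{i+1}(v_{i+1},s)$ when it is not $?$, else $f_{i+1}(v_i,s)$); your composed formula, implemented literally, follows the outcome past its first exit and therefore defines a different object than $f_i$ (e.g.\ when the play exits at $t$ and later re-enters the region, the correct value is the first exit pair, not a composite).

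The same overshoot breaks your critical case. Your rule ``whenever a piece ends at a vertex of $S$, hop to $\theta_i$ and restart at $t_0$'' is wrong: a vertex $t \in \mathbf{T}(v_{i-1})$ with $f(t)=\theta_i$ is an \emph{exit} of the region, so the correct value is $f_i(v_i,s) = (t,c_m)$ with no hop; the predicate is needed only to seed the first step of the view \emph{at $t$ itself} (the paper's $\pi^2_i$ sets $\psi(t) = (\theta_i,c(t))$ for $t \in T$). The unique point where a continuation must be composed is $\theta_i$, and the unique cycle to resolve is the view from $\theta_i$ returning to $\theta_i$ (the paper's $\pi^3_i$, which checks whether the composite value at $\theta_i$ lies in $\{\theta_i\}\times\mathbb{N}$ and decides by parity); your cycles of the shape $t_0 \to \cdots \to (\text{vertex of } S) \to \theta_i$ pass through bag vertices and cannot occur inside the region. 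Two further concrete gaps: when $i$ is critical and $v_{i+1}$ is a child of $v_i$, the region also absorbs $\Psi_{>i}(v_{i+1})$ (this happens, e.g., at $i=2$ in Figure~\ref{fig:etendue-0}, where $\Psi_{>1}(v_2) = \{s_2,s_3,s_4\}$), so computing $f_i(v_i)$ requires $f_{i+1}(v_{i+1})$ as well, which your recipe never consults; and when $v_i$ is a leaf, $v_{i+1}$ is its parent and $f_{i+1}(v_i,\cdot)$ is simply undefined, since $v_i \notin \Theta_{i+1}$. The paper sidesteps both issues by interposing the intermediate views $f^\ast_i$ (re-anchoring $n_{i+1}$ to $n_i$, with $f^\ast_i(v_i) \equiv\, ?$ in the parent case) before handling $\theta_i$; your proof needs that intermediate stage, and needs the termination-at-first-exit semantics, to go through.
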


The correctness of the approach is then straightforward:
\begin{lem}\label{lem-wintop}
  Let $\sigma$ be the vertex of $G$ at which the nice tree decomposition $\mathcal{T}$ is rooted:
  we have $\Theta_1 = \{v_1\}$ and $\mathbf{T}(v_1)=\{\sigma\}$.
  Furthermore, let~$\pi = \sigma \cdot s_1 \cdot s_2 \dots$ be the unique
  (maximal) outcome of~$f$ from~$\sigma$. Then
  (a)~$f_1(v_1,\sigma) \in \{\bot,\top\}$; and (b)~$\pi$~is winning for
  $P_0$ if, and only~if, $f_1(v_1,\sigma) = \top$.
\end{lem}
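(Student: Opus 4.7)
The statement is a base-case bookkeeping result, and I expect the entire argument to consist of unfolding the definitions of Section~\ref{sec-selview}; there is no real obstacle, only a careful identification of $\Psi_{>0}(v_1)$, on which everything hinges. The plan therefore splits into two short steps: (i)~compute the relevant index sets at $i=1$, and (ii)~read off the value of $f_1(v_1,\sigma)$ from Definition~\ref{dfn:selector-view-f}.

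First, I would observe that since $\mathcal{T}$ is rooted at $\sigma$, the node $v_1$ is, by the rooting convention, the node $v_\sigma$ with $\mathbf{T}(v_\sigma)=\{\sigma\}$. Because $v_1$ is the root of $\mathcal{T}$, its only ancestor is itself, hence $\Theta_1 = \mathcal{A}(v_1) = \{v_1\}$, so in particular $\mathbf{T}(v_1)=\{\sigma\}$. Next, by the convention $\Theta_{\leq 0} = \emptyset = \Psi_{\leq 0}$, we have $\Psi_{>0} = V$ and $\Theta_{>0} = \mathcal{V}$. For any vertex $s \in V$, the root $r(s)$ of $\mathcal{T}^s$ lies in $\Theta_{>0}$, and its unique lowest ancestor inside $\Theta_{\leq 1} = \{v_1\}$ must be $v_1$ itself; consequently $n_1(s) = v_1$ for every $s\in V$, and the partition $(\Psi_{>0}(w))_{w\in\Theta_1}$ collapses to the single block
\[
\Psi_{>0}(v_1) \;=\; V.
\]

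Now I would apply Definition~\ref{dfn:selector-view-f} to the pair $(v_1,\sigma)$ with $i=1$. The maximal outcome $\pi = \sigma\cdot s_1 \cdot s_2\dots$ of $f$ from $\sigma$ uses only vertices of $V = \Psi_{>0}(v_1)$, so $\pi$ cannot leave $\Psi_{>0}(v_1)$ at any step. This rules out the two cases of the definition that produce the values $\mathord?$ or $(s_j,c_m)$: neither ``the first step of $\pi$ does not enter $\Psi_{>i-1}(v)$'' nor ``$\pi$ stays in $\Psi_{>i-1}(v)$ from the first step until some later step~$j$'' can ever hold here. The same argument covers the degenerate case $\sigma\notin\mathrm{dom}(f)$, where $\pi = \sigma$ is itself maximal and trivially contained in $\Psi_{>0}(v_1)$. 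Only the first two cases of the definition remain available, giving $f_1(v_1,\sigma)\in\{\top,\bot\}$; this proves~(a). Moreover, those two cases distinguish between ``$\pi$ winning for $P_0$'' (value $\top$) and ``$\pi$ losing for $P_0$'' (value $\bot$), which is exactly~(b).
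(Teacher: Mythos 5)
Your proof is correct and is precisely the argument the paper intends: the paper states this lemma without proof (introducing it as ``straightforward''), and the intended justification is exactly your definitional unfolding. Computing $\Theta_1=\{v_1\}$, $\mathbf{T}(v_1)=\{\sigma\}$, $n_1(s)=v_1$ for all $s$, hence $\Psi_{>0}(v_1)=V$, and then observing that only the first two cases of Definition~\ref{dfn:selector-view-f} can apply (including the degenerate case $\sigma\notin\mathrm{dom}(f)$) matches the paper's setup in Section~\ref{sec-selview} exactly.
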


\ifincludeex
Figure~\ref{fig:strategie-projection-raffinee} shows such refined selector
views $g_{i,a,t}(v,s)$ for the graph illustrated in
Figure~\ref{fig:strategie-projection}, with a new selector~$g$ and for the
critical index~$i = 12$, for $v = \theta_{12}= v_7$
(and~$\theta_{12}^{\ast}=5$), and for $s$ ranging over $\mathbf{T}(v_7)
= \{2,3,5\}$. We~then have $\psi^{\ast}_{>i}(v_7)=\{5\}$ and
$\psi^{\ast}_{>i}(v_6)=\{6\}$. 
The table at the end of
Figure~\ref{fig:strategie-projection-raffinee} can be read as in the following
examples: $g_{12,1,2}(v_7,2) = (2,3)$, $g_{12,1,3}(v_7,2) = (5,1)$,
$g_{12,2,3}(v_7,3) = (5,4)$, $g_{12,2,4}(v_7,4) = ?$, $g_{12,3,2}(v_7,5) =
(2,3)$ and $g_{12,3,3}(v_7,5) = ?$.

\begin{figure}[!ht]
\begin{center}
\begin{tikzpicture}
\draw[draw=black!50,thick,fill=black!15] (10,6.5) arc (180:0:1) -- (12,3.5) arc (360:270:1) -- (9,2.5) arc
(270:180:1) -- (8,4) arc (180:90:1) arc (270:360:1) -- cycle;
\draw[draw=black!50,thick] (12,5) -- (12.6,5);
\node[anchor=west] at (12.5,5) {$\mathbf{T}(v_7) = \{2,3,5\}$};
\node[anchor=west] at (12.5,7) {$\theta_{12} = v_7$};
\node[anchor=west] at (12.5,6.5) {$\theta_{12}^\ast = 5$};
\node[anchor=west] at (12.5,6) {$\Psi_{>12}(v_7) = \emptyset$};

\SetGraphUnit{2}

\vs
\Vertex[x=7,y=6]{1}
\EA(1){6}
\EA(6){3}
\SO(1){4}
\EA(4){2}
\EA(2){5}

\tikzset{EdgeStyle/.append style = {>=stealth}}

\Edge[style={<->,bend right=12.5,densely dotted}](1)(2)
\Edge[style={->,ultra thick}](1)(4)
\Edge[style={<-,bend right=25,densely dotted}](1)(4)
\Edge[style={<->,densely dotted}](2)(3)
\Edge[style={->,bend left=25,densely dotted}](2)(4)
\Edge[style={<-,ultra thick}](2)(4)
\Edge[style={<->,ultra thick}](2)(5)
\Edge[style={->,densely dotted}](2)(6)
\Edge[style={<-,bend left=25,densely dotted}](2)(6)
\Edge[style={->,ultra thick}](3)(5)
\Edge[style={<-,bend left=25,densely dotted}](3)(5)

\Edge[style={<-,ultra thick}](3)(6)
\Edge[style={->,bend right=25,densely dotted}](3)(6)
\Loop[dist=1cm,dir=NO,style={thick,->,densely dotted}](1)
\Loop[dist=1cm,dir=NO,style={thick,->,densely dotted}](6)
\Loop[dist=1cm,dir=NO,style={thick,->,densely dotted}](3)
\Loop[dist=1cm,dir=SO,style={thick,->,densely dotted}](2)
\Loop[dist=1cm,dir=SO,style={thick,<-,densely dotted}](4)
\Loop[dist=1cm,dir=SO,style={thick,->,densely dotted}](5)

\node[anchor=south] at (7,6.7) {$c(1) = 2$};
\node[anchor=south] at (9,6.7) {$c(6) = 2$};
\node[anchor=south] at (11,6.7) {$c(3) = 4$};
\node[anchor=north] at (7,3.3) {$c(4) = 3$};
\node[anchor=north] at (9,3.3) {$c(2) = 1$};
\node[anchor=north] at (11,3.3) {$c(5) = 3$};

\node at (9.625,1.8) {Refined selector views};

\node[anchor=north] at (9.625,1.3) {%
\begin{tabular}{|c|c|c|c|c|c|c|c|c|c|}
\cline{2-10}
\multicolumn{1}{c|}{} & \multicolumn{3}{c|}{$a = 1$} & \multicolumn{3}{c|}{$a = 2$} & \multicolumn{3}{c|}{$a = 3$} \\
\cline{2-10}
\multicolumn{1}{c|}{} & $(v_7,2)$ & $(v_7,3)$ & $(v_7,5)$ & $(v_7,2)$ & $(v_7,3)$ & $(v_7,5)$ & $(v_7,2)$ & $(v_7,3)$ & $(v_7,5)$ \\
\hline
$g_{12,a,\infty}$ & $(5,1)$ & $(5,4)$ & $(2,3)$ & ? & ? & $(2,3)$ & ? & ? & ? \\
\hline
$g_{12,a,6}$ & $(5,1)$ & $(5,4)$ & $(2,3)$ & ? & ? & $(2,3)$ & ? & ? & ? \\
\hline
$g_{12,a,5}$ & $(5,1)$ & $(5,4)$ & $(2,3)$ & ? & ? & $(2,3)$ & ? & ? & ? \\
\hline
$g_{12,a,4}$ & $(5,1)$ & $(5,4)$ & $(2,3)$ & ? & ? & $(2,3)$ & ? & ? & ? \\
\hline
$g_{12,a,3}$ & $(5,1)$ & $(2,4)$ & $(2,3)$ & ? & $(5,4)$ & $(2,3)$ & ? & ? & ? \\
\hline
$g_{12,a,2}$ & $(2,3)$ & $(2,4)$ & $(2,3)$ & $(5,1)$ & $(5,4)$ & $(2,3)$ & ? & ? & $(2,3)$ \\
\hline
$g_{12,a,1}$ & $(2,3)$ & $(2,4)$ & $(2,3)$ & $(5,1)$ & $(5,4)$ & $(2,3)$ & ? & ? & $(2,3)$ \\
\hline
$g_{12,a,0}$ & $(2,3)$ & $(2,4)$ & $(2,3)$ & $(5,1)$ & $(5,4)$ & $(2,3)$ & ? & ? & $(2,3)$ \\
\hline
\end{tabular}
};
\end{tikzpicture}
\end{center}
\caption{Refined selector views $g_{i,a,t}$ of a memoryless $V$-selector $g$}
\label{fig:strategie-projection-raffinee}
\end{figure}
\fi

\subsection{Views of \texorpdfstring{\(V_0\)}{V0}-selectors}
\label{section:decremental-II}

Let $g$ be a $V_0$-selector and let~$f$ be a $V$-selector. We~say that
$f$ is an \emph{extension} of~$g$ if $g$ is the restriction of~$f$ to
the domain~$V_0$. For all $i$, we would like to compute set
  $\{f_i \mid f \text{ is an extension of } g \}$. 
  One may wish to view each set as a subset of $\prod_{v \in \Theta_i}
  \Lambda(v)$.  Unfortunately, this latter set may be polynomially
  large, since $\Theta_i = \calA(v_i)$ may have size up to $c(\kappa)
  (\log_2(n)+1)$.  Hence, no \FO formula with domain $V$ and with a
  given arity can distinguish all the subsets of $\prod_{v \in
    \Theta_i} \Lambda(v)$ when $n$ grows, and we shall use a less
  direct approach.
%
%
%

We over-approximate those sets by projecting over every $v
\in \Theta_i$ as follows.
Let $\Lambda_i = \prod_{v \in \Theta_i} 2^{\Lambda(v)}$.
We call \emph{selector view} of the $V_0$-selector~$g$, and denote by~$\omega_i^g$,
the~element of $\Lambda_i$ defined~by: 
$\omega_i^g : v \mapsto = \{f_i(v) \mid f \text{ is an
  extension of } g\}$.
Since every set $\Lambda(v)$ has a size at most
$(3 + 4(\kappa+1)\mathbf{C})^{4(\kappa+1)}$ and
since $|\Theta_i|\leq c(\kappa)\cdot (\log_2(n)+1)$,
the cardinality of $\Lambda_i$ is polynomial in~$n$.

  %

We~then argue that selector views of $V_0$-selectors are
sufficient for deciding whether the vertex~$\sigma$ is winning
for~$P_0$, and that they can be computed in a backward
manner by \FO formulas.

\begin{restatable}{lem}{lemvictory}
  \label{lem:victory}
  The vertex $\sigma$ is winning for $P_0$ if, and only~if, there exists
  a $V_0$-selector~$g$ such that $\omega_1^g(v_1)$ is the
  singleton set $\{\sigma \mapsto \top\}$.
\end{restatable}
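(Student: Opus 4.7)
The plan is to unfold the definitions and reduce the statement to the positional determinacy of parity games. Since $\mathbf{T}(v_1) = \{\sigma\}$, every element of $\omega_1^g(v_1) \subseteq \Lambda(v_1)$ is a function of the single vertex $\sigma$; by Lemma~\ref{lem-wintop}, for any $V$-selector $f$ extending $g$ the value $f_1(v_1,\sigma)$ lies in $\{\top,\bot\}$, being $\top$ exactly when the outcome of $f$ from $\sigma$ is winning for $P_0$. Moreover $g$ itself qualifies as a $V$-selector extending~$g$, so $\omega_1^g(v_1)$ is always non-empty. Hence the equality $\omega_1^g(v_1) = \{\sigma \mapsto \top\}$ amounts to the single requirement that \emph{every} extension $f$ of $g$ produces an outcome from $\sigma$ that is winning for $P_0$.

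For the forward direction, assume that $\sigma$ is winning for $P_0$. I would invoke the memoryless (in fact, uniform memoryless) determinacy of parity games to extract a $V_0$-selector $g$ that is winning for $P_0$ from~$\sigma$. Given any extension $f$ of~$g$, an outcome of $f$ from $\sigma$ is in particular an outcome of $g$ from $\sigma$; therefore it is winning for $P_0$. Lemma~\ref{lem-wintop} then yields $f_1(v_1,\sigma) = \top$ for every such $f$, whence $\omega_1^g(v_1) = \{\sigma \mapsto \top\}$.

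For the converse, assume that some $V_0$-selector $g$ satisfies $\omega_1^g(v_1) = \{\sigma \mapsto \top\}$. For an arbitrary $V_1$-selector~$h$, the union $g \cup h$ is a $V$-selector whose restriction to~$V_0$ is $g$, hence an extension of~$g$; by hypothesis its outcome from $\sigma$ is winning for~$P_0$, so $g$ defeats every memoryless strategy of $P_1$ at~$\sigma$. If, towards a contradiction, $\sigma$ belonged to $W_1$, then memoryless determinacy would furnish a $V_1$-selector $h^\star$ that is winning for $P_1$ from~$\sigma$; the extension $g \cup h^\star$ would then exhibit an outcome from $\sigma$ that is winning for $P_1$, contradicting our assumption. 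Hence $\sigma \in W_0$.

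The only subtlety, and in my view the main obstacle to a fully rigorous write-up, lies in the bookkeeping around the definition of \emph{extension}: I would verify carefully that giving a $V$-selector $f$ with $f|_{V_0} = g$ is the same data as giving the $V_0$-selector $g$ together with an arbitrary $V_1$-selector $f|_{V_1}$. This is exactly what makes the quantification over extensions of $g$ coincide with the quantification over memoryless strategies of~$P_1$, and once this dictionary is set the lemma reduces to a direct application of positional determinacy through Lemma~\ref{lem-wintop}.
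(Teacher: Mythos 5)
Your proof is correct and takes essentially the same route as the paper's: unfold the definition of $\omega_1^g(v_1)$ over the singleton $\mathbf{T}(v_1)=\{\sigma\}$, translate $f_1(v_1,\sigma)=\top$ into ``the outcome from $\sigma$ is winning'' via Lemma~\ref{lem-wintop}, and invoke memoryless determinacy to pass between winning vertices and winning $V_0$-selectors. The paper compresses this into two lines (treating ``$g$ wins at $\sigma$ iff all extensions of $g$ win'' as immediate and leaving determinacy implicit), whereas you make the determinacy applications explicit in both directions---including the clean contradiction with a memoryless winning $P_1$-strategy in the converse, which sidesteps having to argue that a memoryless $P_0$-strategy beaten by some adversary is already beaten by a memoryless one.
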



Moreover, for all $V$-selectors~$f$, the collection
$(f_i(v))_{v \in \Theta_i}$ 
satisfies the 
following \emph{compatibility} requirement: 
\[
\forall v,w\in\Theta_i.\ \bigl[
(v\not=w)  \Rightarrow
\forall s\in \bfT(v)\cap \bfT(w).\
\mathord?\in\{f_i(v,s), f_i(w,s)\}\bigr].
\]
  This is because $\Psi_{>i}(v)
  \cap \Psi_{>i}(w) = \emptyset$. We therefore denote by
  $\mathcal{C}^0$ the set of those tuples $(\varphi_v)_{v \in
  X}$ with $X \subseteq \mathcal{T}$ and $\varphi_v \in \Lambda(v)$ for all $v \in X$,
  such that for all nodes $v, w \in X$, if $v \neq w$, then for every $s \in
  \mathbf{T}(v) \cap \mathbf{T}(w)$, $\mathord?\in
  \{\varphi_v(s),\varphi_w(s)\}$.
%
%
For all $V_0$-selectors~$g$, the selector views
$\omega_{i}^g$ enjoy a similar compatibility requirement.

\begin{restatable}{lem}{lemsoundness}
  \label{lem:soundness}
  Let $g$ be a $V_0$-selector, and let $X$ be a subset of $\Theta_i$.
  For every tuple $(\varphi_v)_{v \in X}$ in $\mathcal{C}^0 \cap \prod_{v \in X}
  \omega_{i}^g(v)$, there exists an extension $f$ of $g$ such that
  $\varphi_v = f_{i}(v)$ for all nodes $v \in X$.
\end{restatable}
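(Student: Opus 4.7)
The plan is to pick, for each $v \in X$, some extension $f^v$ of $g$ with $f^v_i(v) = \varphi_v$ (guaranteed to exist by $\varphi_v \in \omega_i^g(v)$), and then splice the $f^v$'s into a single extension $f$ of $g$. I would define $f|_{V_0} := g$ and, for each $s \in V_1$, apply the first applicable clause among: (i)~if $s \in \Psi_{>i-1}(v)$ for some $v \in X$---necessarily unique since the sets $\Psi_{>i-1}(\cdot)$ partition $\Psi_{>i-1}$---set $f(s) := f^v(s)$; (ii)~else, if there is a (necessarily unique, by the $\mathcal{C}^0$-compatibility) $v \in X$ with $s \in \mathbf{T}(v)$ and $\varphi_v(s) \neq \mathord{?}$, set $f(s) := f^v(s)$; (iii)~otherwise, leave $s$ outside $\mathrm{dom}(f)$.

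The verification hinges on a structural observation about the depth-first traversal: the set $\Theta_i \setminus \Theta_{\leq i-1}$ always has cardinality at most one, because passing from step $i-1$ to step $i$ either adds a single fresh vertex to the ancestor set or removes an existing one. Consequently, for any $s \in \Psi_{>i-1}$, the subtree $\mathcal{T}^s$ is contained in $\Theta_{>i-1}$ and meets $\Theta_i$ in at most one node; in particular, $s \in \mathbf{T}(v)$ with $v \in \Theta_i$ and $s \in \Psi_{>i-1}$ together force $s \in \Psi_{>i-1}(v)$.

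To check $f_i(v,s) = \varphi_v(s)$ for each $v \in X$ and each $s \in \mathbf{T}(v)$, I would split on whether $s \in \Psi_{>i-1}$. If $s \in \Psi_{>i-1}(v)$, then by clause~(i) the functions $f$ and $f^v$ agree on $\{s\} \cup \Psi_{>i-1}(v)$, so the outcomes of $f$ and $f^v$ from $s$ coincide as long as they remain in $\Psi_{>i-1}(v)$---which is precisely the information $f_i(v,s)$ records---giving $f_i(v,s) = \varphi_v(s)$. Otherwise $s \in \Psi_{\leq i-1}$; if $\varphi_v(s) \neq \mathord{?}$, compatibility forces clause~(ii) to select $v$, yielding $f(s) = f^v(s) \in \Psi_{>i-1}(v)$ and the previous argument concludes; if $\varphi_v(s) = \mathord{?}$, then either clause~(ii) selects some $v' \neq v$ with $f(s) = f^{v'}(s) \in \Psi_{>i-1}(v')$ (disjoint from $\Psi_{>i-1}(v)$) or clause~(iii) leaves $s$ outside $\mathrm{dom}(f)$, and since also $s \notin \Psi_{>i-1}(v)$, the first step of the outcome does not enter $\Psi_{>i-1}(v)$, giving $f_i(v,s) = \mathord{?}$. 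The hard part will be cleanly handling shared vertices $s \in \mathbf{T}(v) \cap \mathbf{T}(w)$ for distinct $v,w \in X$; this is resolved jointly by the $\mathcal{C}^0$-compatibility (which rules out two non-$\mathord{?}$ values) and the structural single-node property (which rules out two $\Theta_i$-nodes both containing $s$ when $s \in \Psi_{>i-1}$).
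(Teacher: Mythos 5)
Your proposal is correct and takes essentially the same route as the paper: the paper's proof also chooses witnessing extensions $f^v$ with $f^v_i(v)=\varphi_v$ and splices them into a single extension of $g$, setting $f(s)=f^v(s)$ on $\Psi_{>i-1}(v)\cup\{t\in\mathbf{T}(v)\mid \varphi_v(t)\neq\mathord{?}\}$ and defaulting to $g$ (hence leaving $V_1$-vertices undefined) elsewhere, which is exactly your clauses (i)--(iii). The detailed verification you give---uniqueness of the selecting node via the partition of $\Psi_{>i-1}$ and $\mathcal{C}^0$-compatibility, plus the observation that $s\in\mathbf{T}(v)\cap\Psi_{>i-1}$ with $v\in\Theta_i$ forces $s\in\Psi_{>i-1}(v)$---is precisely the ``straightforward check'' the paper leaves implicit, and it is sound.
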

Thanks to Lemma~\ref{lem:soundness}, Proposition~\ref{prop:decremental-I}
leads to a similar result for $V_0$-selectors, whose proof is given
in Appendix~\ref{app:decremental-II}.
\begin{restatable}{proposition}{propdecrII}
    \label{prop:decremental-II}
    Fix a $V_0$-selector~$g$.  For all integers $i \leq \ell$,
    there exists a \FO-definable function $\Omega_i$ such that:
  \begin{itemize}
  \item if $i = \ell$, then $\Omega_i$ is a constant, and $\omega_i^g = \Omega_i$;
  \item if $i \leq \ell-1$ is non-critical, then $\omega_i^g = \Omega_i(\omega_{i+1}^g)$;
  \item if $i \leq \ell-1$ is critical, then $\omega_i^g = \Omega_i(\omega_{i+1}^g,(\mathbf{B}_{\theta_i,s})_{s \in \mathbf{T}(v_i)},(\mathbf{B}'_{s,\theta_i})_{s \in \mathbf{T}(v_i)})$,
  where $(\mathbf{B}_{\theta_i,s})_{s \in \mathbf{T}(v_i)}$ and $(\mathbf{B}'_{s,\theta_i})_{s \in \mathbf{T}(v_i)}$ are vectors whose entries are defined by:
      \[
      \mathbf{B}'_{x,y} = \mathbf{B}_{x,y} = \left\{\begin{array}{ll} \{\bot\} & \text{if}\
          (x,y) \notin E,\ \text{or}\ x \in V_0 \setminus \mathrm{dom}(g), \\
          & \phantom{\text{if}\ (x,y) \notin E,\ {}}\text{or}\ x \in
          \mathrm{dom}(g)\ \text{and}\ g(x) \ne y \\
          \{\top\} & \text{if}\ x \in \mathrm{dom}(g)\ \text{and}\ g(x)=y \\
          \{\bot,\top\} & \text{if}\ (x,y) \in E\ \text{and}\ x \notin
          V_0
        \end{array}\right.
      \]
    \end{itemize}
\end{restatable}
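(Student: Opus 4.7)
The plan is to proceed by downward induction on $i$, lifting Proposition~\ref{prop:decremental-I} pointwise from individual $V$-selectors $f$ to the projected sets $\omega_i^g$. The bridge between the individual-selector level and the projection level is Lemma~\ref{lem:soundness}: it guarantees that any compatible tuple in $\mathcal{C}^0$ whose components lie in $\omega_{i+1}^g$ is actually realized by some extension $f$ of $g$, which lets us replace a quantification ``$\exists\, f$ extending $g$'' by ``$\exists$ compatible tuple with components in $\omega_{i+1}^g$''.

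For the base case $i = \ell$, Proposition~\ref{prop:decremental-I} gives an \FO-definition of $f_\ell$ with no reference to later views. After fixing $g$, the only remaining degrees of freedom are the values of $f$ on the vertices of $\mathbf{T}(v_\ell)\setminus\mathrm{dom}(g)$, a set of bounded size (at most the width of $\mathcal{D}$). Enumerating these finitely many choices inside an \FO-formula and taking the union of the resulting $f_\ell(v_\ell)$ values yields the constant $\Omega_\ell$. For a non-critical index $i<\ell$, Proposition~\ref{prop:decremental-I} expresses each $f_i(v,s)$ by an \FO-formula in $f_{i+1}$. We define $\Omega_i(\omega_{i+1}^g)(v)$ to be the set of those $\varphi\in\Lambda(v)$ for which there exists a tuple $(\psi_w)_{w\in W}$, with $\psi_w\in\omega_{i+1}^g(w)$ and $(\psi_w)_w\in\mathcal{C}^0$, ranging over the bounded set $W\subseteq\Theta_{i+1}$ of nodes whose selector views are actually consulted by the formula, such that the formula yields $\varphi$. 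Soundness is immediate; completeness uses Lemma~\ref{lem:soundness} to extend the partial compatible tuple $(\psi_w)_{w\in W}$ into a full selector view at level $i+1$ that is realized by an extension $f$ of $g$ with $f_i(v)=\varphi$.

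The critical case follows the same template, with the additional input required by Proposition~\ref{prop:decremental-I}: a boolean predicate describing which vertices $s\in\mathbf{T}(v_i)$ satisfy $f(\theta_i)=s$ and which $t\in\mathbf{T}(v_i)$ satisfy $f(t)=\theta_i$. Because $f$ must extend $g$, only a bounded set of such truth-value assignments are admissible: if $\theta_i\in\mathrm{dom}(g)$ then $f(\theta_i)=g(\theta_i)$ is fixed, if $s\in\mathrm{dom}(g)$ then $f(s)=g(s)$ is fixed, and otherwise the value is free to range over the corresponding edges of $E$. These admissible choices are exactly the entries of $\mathbf{B}_{\theta_i,s}$ and $\mathbf{B}'_{s,\theta_i}$ displayed in the statement. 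We existentially guess these predicates in the \FO-formula and feed them, together with the consulted part of $\omega_{i+1}^g$, into the formula provided by Proposition~\ref{prop:decremental-I}.

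The main obstacle is expressing this existential projection within \FO, despite the fact that $|\Theta_{i+1}|$ may grow with~$n$. This is resolved by the locality of Proposition~\ref{prop:decremental-I}'s formula: for each pair $(v,s)$, only a bounded neighbourhood of $(v,s)$ in the tree decomposition is consulted, so only boundedly many $\psi_w$'s need to be quantified existentially. The $\mathcal{C}^0$-compatibility constraints on that bounded neighbourhood are pairwise local and hence \FO-definable. Lemma~\ref{lem:soundness} then closes the gap between compatibility of the projections on the consulted positions and the existence of a globally realized extension of $g$, which is precisely what is needed to convert the pointwise lifting of Proposition~\ref{prop:decremental-I} into an \FO-definition of $\omega_i^g$ from $\omega_{i+1}^g$.
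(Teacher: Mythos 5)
Your overall route is the paper's: downward induction on $i$, per-node projection, Lemma~\ref{lem:soundness} as the bridge that turns ``$\exists f$ extending $g$'' into ``$\exists$ compatible tuple of views in $\mathcal{C}^0$'', and, at critical indices, an existential guess of the predicates $S=\{s\in\mathbf{T}(v_i)\mid f(\theta_i)=s\}$ and $T=\{t\in\mathbf{T}(v_i)\mid f(t)=\theta_i\}$ constrained through $\mathbf{B}$ and $\mathbf{B}'$. (The paper routes the induction through intermediate views $\omega_i^{\ast g}(v)=\{f_i^\ast(v)\mid f \text{ extends } g\}$, mirroring the two-step structure of the proof of Proposition~\ref{prop:decremental-I}, and observes that Lemma~\ref{lem:soundness} applies verbatim to them; your ``pointwise lifting'' implicitly requires the same. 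Your base case is also needlessly indirect: since $\Psi_{>\ell-1}=\emptyset$, every view $f_\ell(v)$ is constantly $\mathord?$, with no enumeration of free values of $f$ needed.)

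There is, however, a genuine gap in your critical case. You let $(S,T)$ range over all assignments admissible with respect to $g$ and $E$ --- i.e., consistent with $\mathbf{B}$ and $\mathbf{B}'$ --- and feed them into the formula of Proposition~\ref{prop:decremental-I}. But that formula is only meaningful when its inputs are \emph{jointly} realizable by a single extension $f$ of $g$, and consistency with $\mathbf{B},\mathbf{B}'$ alone does not guarantee this: the guessed pair $(S,T)$ must also be compatible with the guessed view $\varphi$ at $v_i$. The paper makes this explicit by restricting to \emph{adapted} triples $(\varphi,S,T)$: $S$ has cardinality at most one; $S=\emptyset$ whenever $\varphi(\theta_i)\neq\mathord?$; $S=\{\theta_i\}$ if and only if $\theta_i\in T$; and $\varphi(s)=\mathord?$ for every $s\in T$. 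The last condition is essential: if $\varphi(s)\neq\mathord?$, any extension realizing $\varphi$ sends $s$ into $\Psi^\ast_{>i}(v_i)$ at its first step, whereas $s\in T$ demands $f(s)=\theta_i\notin\Psi_{>i}$, so no extension of $g$ realizes both --- yet the total functions $\pi^2_i$ and $\pi^4_i$ from the proof of Proposition~\ref{prop:decremental-I} would still output a candidate view. Consequently, as written your $\Omega_i$ over-approximates at critical steps: you establish $\omega_i^g\subseteq\Omega_i(\omega_{i+1}^g,\mathbf{B},\mathbf{B}')$ but not the converse inclusion, and the proposition claims an exact equality. The repair is immediate, since the adaptedness constraints are plainly \FO-expressible over the bounded set $\mathbf{T}(v_i)$, but it must be stated: without it the ``soundness is immediate'' claim you make in the non-critical case does not transfer to the critical one.
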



\subsection{Reduction to the Dyck problem}
\label{subsec:reductiontodyck}



In this section, we~present the reduction of
$\Dyn\NonUnifParity_{\bfC,\kappa}$ to a Dyck-path problem on an
acyclic labelled graph. Our reduction is such that any update (of the
edges) in the parity game corresponds to a simple update of the
acyclic graph.
As~we explain in the next section, this reduction will prove
Theorem~\ref{mainresult}.  
  Our proof is built on a transformation of the
  graph~$G$ of the parity game into an acyclic
  labelled graph~$\Gamma_G$ for the Dyck-path problem.
  This construction takes advantage of the iterative computation of the selector views:
  most of the computation of $\omega_i^{g}$ depends only of $\omega_{i+1}^g$,
  and only few steps (for critical indices $i \leq \ell-1$)
  of the computation actually directly depend on~$g$ itself.


\smallskip    
    A ``nominal'' vertex of graph~$\Gamma_G$ is a pair $(i,Z)$, where $Z \in \Lambda_i$
    stores the possible values for~$\omega_i^g$ (with~$g$ ranging
    over the set of $V_0$-selectors).

    Using Proposition~\ref{prop:decremental-II}, for all non-critical values of     
    $i \leq \ell-1$, the value of $\omega_i^g$ can be computed from only
    $\omega_{i+1}^g$ (for every $g$). Hence we add ``neutral'' edges,
    labelled with~$\bullet$ (those are labels that will not need
      to be well-balanced in Dyck paths):
    \[(i+1,Z) \xrightarrow{\bullet} (i,Z') \text{ where $Z' = \Omega_i(Z)$.}\]
    Note that these transitions always exist, whatever the set of edges
    in~$G$. In~particular, they have to be precomputed, and are not affected
    by insertions or deletions of edges in~$G$.

    Now, if $i \leq \ell-1$ is critical, we have to abstract the choice of~$g$, and
    to give all possible options in the graph~$\Gamma_G$. For $x,y \in V$, let
    $\choice(x,y)$ be defined as follows:
    \[
    \choice(x,y) = \begin{cases}\{\{\bot\}\} & \text{if } (x,y) \notin E \\
      \{\{\bot\},\{\top\}\} & \text{if } (x,y) \in E \text{ and } x \in V_0 \\
      \{\{\bot,\top\}\} & \text{if } (x,y) \in E \text{ and } x \in
      V_1\end{cases}
    \]
    This represents the possible options of player~$P_0$ from state~$x$:
    in~the first case, edge~$(x,y)$ is not available; in
    the second case, edge~$(x,y)$ exists in~$G$, and player~$P_0$ can decide to
    take it or~not; in~the last case, player~$P_0$ should take both
    possibilities into account. 

    The natural way of building~$\Gamma_G$ would consist in adding neutral edges
    $(i,Z) \xrightarrow{\bullet} (i,Z')$ for all
    ``relevant'' vectors $\mathbf{B}$ and $\mathbf{B}'$, with $Z' =\Omega_i(Z,\mathbf{B},\mathbf{B}')$.
    However, this would require too many changes in~$\Gamma_G$ when an edge is updated
    in~$G$. Instead, we~split such an edge into several edges, and
    consider Dyck paths instead.
    
    Let us say that a pair of vectors $(\mathbf{B}_{\theta_i,s})_{s \in \mathbf{T}(v_i)}$
    and $ (\mathbf{B}'_{s,\theta_i})_{s \in \mathbf{T}(v_i)}$
    is \emph{suitable at step $i$} if their entries are non-empty subsets of $\{\bot,\top\}$.
    The graph $\Gamma_G$ additionally
    contains the following edges and vertices: 
    \begin{itemize}
    \item $(i+1,Z) \xrightarrow{i,Z} (i,0)$ for all $Z \in \Lambda_i$;
    \item $(i,0) \xrightarrow{i,\mathbf{B},\mathbf{B}'} (i,1)$ for all vectors $\mathbf{B}$ and $\mathbf{B}'$
    suitable at step $i$ and with entries $\mathbf{B}'_{x,y}, \mathbf{B}_{x,y} \in \choice(x,y)$;
    \item $(i,1) \xrightarrow{\overline{i,\mathbf{B},\mathbf{B}'}} (i,\mathbf{B},\mathbf{B}')$
    for all vectors $\mathbf{B}$ and $\mathbf{B}'$ suitable at step $i$;
    \item $(i,\mathbf{B},\mathbf{B}') \xrightarrow{\overline{i,Z}} (i,Z')$ where $Z' = \Omega_i(Z,\mathbf{B},\mathbf{B}')$,
    for all $Z \in \Lambda_i$ and for all vectors $\mathbf{B}$ and $\mathbf{B}'$ suitable at step $i$.
    \end{itemize}

\medskip
    Observe that every path of the form $(i+1,Z) \xrightarrow{\bullet} (i,Z')$ or
    \[(i+1,Z) \xrightarrow{i,Z} (i,0) \xrightarrow{i,\mathbf{B},\mathbf{B}'} (i,1) \xrightarrow{\overline{i,\mathbf{B},\mathbf{B}'}} 
    (i,\mathbf{B},\mathbf{B}') \xrightarrow{\overline{i,Z}} (i,Z')\]
    is a Dyck path in $\Gamma_G$: we call such a path a \emph{nominal Dyck path}.
Then, due to the layered structure of the graph $\Gamma_G$, every Dyck
path in $\Gamma_G$ is either a sub-path of a nominal Dyck path (i.e., an empty path or
a 2-step-long path of the form
$(i,0) \xrightarrow{i,\mathbf{B},\mathbf{B}'} (i,1) \xrightarrow{\overline{i,\mathbf{B},\mathbf{B}'}} 
(i,\mathbf{B},\mathbf{B}')$) or
a concatenation of nominal Dyck paths; in the latter case, we call the path a \emph{generic Dyck path}.

%
    Finally, we~notice that insertion and deletion
    of an edge~$(x,y)$ in~$G$ can be quickly propagated in~$\Gamma_G$:  
    such changes only affect
    the edges $(i,0) \xrightarrow{i,\mathbf{B},\mathbf{B}'} (i,1)$
    when $x=\theta_i$ and $y \in \mathbf{T}(v_i)$, or $y = \theta_i$ and
    $x \in \mathbf{T}(v_i)$ (where $\mathbf{B}$ and $\mathbf{B}'$ are then given by the $\choice$
    function). They can easily be performed in \FO.

\smallskip
    We focus now on the subclass $\mathcal{P}_\Gamma$ of generic Dyck paths in
    $\Gamma_G$ with source $(\ell,\Omega_\ell)$
    and with sink in $\{(1,Z) \mid Z \subseteq \Lambda(v_1)\}$.  For
    each path $\pi \in \mathcal{P}_\Gamma$, we~let
    \begin{itemize}
    \item  $Z(\pi)$ be the subset of $\Lambda(v_1)$ such that
      $(1,Z(\pi))$ is the sink of~$\pi$;
    \item $E(\pi)$ be the subset of $E$ defined as
 \[\arraycolsep=1pt\begin{array}{lll}E(\pi) & = &
   \left\{(\theta_i,s) \left|\begin{array}{l} i \text{ is critical, } \theta_i \in V_0 \text{, } s \in \mathbf{T}(v_i) \text{ and } \\ \pi \text{ goes through a vertex }
   (i,\mathbf{B},\mathbf{B}') \text{ with } \mathbf{B}_{\theta_i,s} = \{\top\}\end{array}\right\}\right. \cup \\
   & & \left\{(s,\theta_i) \left|\begin{array}{l} i \text{ is critical, } s \in V_0 \cap \mathbf{T}(v_i) \text{ and } \\ \pi \text{ goes through a vertex }
   (i,\mathbf{B},\mathbf{B}') \text{ with } \mathbf{B}'_{s,\theta_i} = \{\top\}\end{array}\right\}\right..\end{array}\]
\end{itemize}

The correctness of the reduction follows from the construction in
Section~\ref{section:decremental-II}.

\begin{proposition}
\label{prop:accessibility}
For every $V_0$-selector $g$ in $G$, there exists a unique Dyck path
$\pi^g \in \mathcal{P}_\Gamma$ such that $E(\pi^g) = \{(s,g(s)) \mid s
\in \mathrm{dom}(g)\}$.  The function $g \mapsto \pi^g$ is a
bijection from the set of $V_0$-selectors to $\mathcal{P}_\Gamma$.
Moreover, we have $Z(\pi^g) = \omega_{i}^g$, i.e. $g$ ensures
$P_0$'s victory when the play is in~$\sigma$ if, and only~if, $Z(\pi^g)
= \{\sigma \mapsto \top\}$.  
\end{proposition}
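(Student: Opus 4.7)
The plan is to construct $\pi^g$ explicitly, working backward in~$i$ from $(\ell,\Omega_\ell)$ down to level~$1$, and to mirror step by step the iterative computation of $(\omega_i^g)_{1\le i\le \ell}$ provided by Proposition~\ref{prop:decremental-II}. At level~$\ell$, note that $\omega_\ell^g = \Omega_\ell$ is independent of~$g$, which gives the source. For each decrement $i+1 \to i$ with $i$ non-critical, I would append the unique neutral edge $(i+1,\omega_{i+1}^g)\xrightarrow{\bullet}(i,\omega_i^g)$; this edge exists by construction and is legitimate because Proposition~\ref{prop:decremental-II} gives $\omega_i^g = \Omega_i(\omega_{i+1}^g)$. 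For each critical~$i$, the step is more delicate: using~$g$, I read off the vectors $\mathbf{B}=(\mathbf{B}_{\theta_i,s})_s$ and $\mathbf{B}'=(\mathbf{B}'_{s,\theta_i})_s$ dictated by the case analysis in Proposition~\ref{prop:decremental-II}, then append the 4-edge nominal Dyck path $(i+1,\omega_{i+1}^g)\xrightarrow{i,\omega_{i+1}^g}(i,0)\xrightarrow{i,\mathbf{B},\mathbf{B}'}(i,1)\xrightarrow{\overline{i,\mathbf{B},\mathbf{B}'}}(i,\mathbf{B},\mathbf{B}')\xrightarrow{\overline{i,\omega_{i+1}^g}}(i,\omega_i^g)$. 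The second edge exists because entries of~$\mathbf{B}$ and~$\mathbf{B}'$ chosen from~$g$ indeed fall into the corresponding set $\choice(\cdot,\cdot)$, by direct inspection of the three cases of $\choice$.

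Concatenating these pieces gives a generic Dyck path from $(\ell,\Omega_\ell)$ to $(1,\omega_1^g)$, hence an element $\pi^g\in\mathcal{P}_\Gamma$ with $Z(\pi^g)=\omega_1^g$ automatically (this deals with the last sentence of the proposition, combined with Lemma~\ref{lem:victory}). Verifying that $E(\pi^g) = \{(s,g(s))\mid s\in\mathrm{dom}(g)\}$ is a direct unfolding of the definitions: the only edges $(x,y)$ that can appear in $E(\pi^g)$ come from critical indices with $x=\theta_i$ or $y=\theta_i$, and at such an index the corresponding $\mathbf{B}_{\theta_i,s}$ (resp.~$\mathbf{B}'_{s,\theta_i}$) equals $\{\top\}$ precisely when $\theta_i\in V_0$ and $g(\theta_i)=s$ (resp. $s\in V_0$ and $g(s)=\theta_i$).

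For the bijection, I would argue injectivity and surjectivity together by showing that from any path $\pi\in\mathcal{P}_\Gamma$ one can uniquely reconstruct a $V_0$-selector~$g_\pi$ such that $\pi=\pi^{g_\pi}$. Since $\pi$ is a generic Dyck path in the layered graph, its decomposition into nominal sub-paths is forced: between levels $i+1$ and $i$ either a unique $\bullet$-edge is available (non-critical~$i$) or the path must be a 4-edge nominal Dyck path carrying a well-defined pair $(\mathbf{B}^{(i)},\mathbf{B}'^{(i)})$ matched across the opening and closing brackets. For every critical~$i$ with $\theta_i\in V_0$, exactly one $s\in \mathbf{T}(v_i)$ satisfies $\mathbf{B}^{(i)}_{\theta_i,s}=\{\top\}$ (if any), which defines $g_\pi(\theta_i)$; symmetrically for $s\in V_0\cap \mathbf{T}(v_i)$. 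The consistency requirements (these values lie in $\choice$ and are singletons for $V_0$-controlled vertices) ensure well-definedness as a partial mapping with domain in~$V_0$, and by construction $\pi=\pi^{g_\pi}$.

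The last claim, $Z(\pi^g)=\omega_1^g$, is then a straightforward backward induction on~$i$ along~$\pi^g$: at every step the intermediate vertex reached at level~$i$ is $(i,\omega_i^g)$, because the transition $\Omega_i$ dictated by Proposition~\ref{prop:decremental-II} is precisely the one performed along the nominal sub-path we appended. The winning equivalence then follows from Lemma~\ref{lem:victory}, since $\omega_1^g(v_1)=\{\sigma\mapsto\top\}$ iff~$g$ is a winning $V_0$-selector from~$\sigma$. The main obstacle I anticipate is bookkeeping rather than conceptual: making sure that the pairs $(\mathbf{B},\mathbf{B}')$ chosen from~$g$ at a critical index lie in the suitable set and that the reverse reading (from $\pi$ to $g_\pi$) is unambiguous when several $s$'s satisfy the $\{\top\}$-condition—which it is, precisely because $g$~is a (partial) function on~$V_0$.
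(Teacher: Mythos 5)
Your forward construction of $\pi^g$ and the backward induction giving $Z(\pi^g)=\omega_1^g$ match what the paper intends: the paper gives no detailed proof of this proposition (it only asserts that correctness ``follows from the construction'' of the selector views, i.e., from Proposition~\ref{prop:decremental-II} together with the observation that every generic Dyck path in $\Gamma_G$ decomposes uniquely into nominal Dyck paths), and your selector-to-path direction fills this in correctly, including the check that the vectors read off from $g$ have entries in $\choice(\cdot,\cdot)$ and the identification of $E(\pi^g)$ with the graph of~$g$. For that last identification you should also make explicit a fact the construction uses silently: every pair $(x,y)\in E_\star$ is examined at exactly one critical index (because in a nice decomposition each node is the root $r(s)$ of at most one vertex~$s$, so the capture conditions ``$x=\theta_i$, $y\in\mathbf{T}(v_i)$'' and ``$y=\theta_j$, $x\in\mathbf{T}(v_j)$'' cannot both hold); otherwise $E(\pi^g)$ could miss or double-report $g$-edges.

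The genuine gap is in your bijectivity argument. The edge $(i,0)\xrightarrow{i,\mathbf{B},\mathbf{B}'}(i,1)$ exists for \emph{every} suitable pair whose entries lie, independently entry by entry, in $\choice(x,y)$; nothing in the graph forces a path to make function-consistent choices. Concretely, a path in $\mathcal{P}_\Gamma$ may traverse a vertex $(i,\mathbf{B},\mathbf{B}')$ with $\mathbf{B}_{\theta_i,s_1}=\mathbf{B}_{\theta_i,s_2}=\{\top\}$ for distinct $s_1,s_2$ (possible whenever $\theta_i\in V_0$ has two outgoing edges in~$E$), or may assert $\mathbf{B}'_{s,\theta_i}=\{\top\}$ at one critical index and $\mathbf{B}'_{s,\theta_j}=\{\top\}$ at another, for the same source $s\in V_0$ — two assertions that no single selector satisfies. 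Your closing remark that the reverse reading is unambiguous ``precisely because $g$ is a (partial) function on $V_0$'' is circular: when reconstructing $g_\pi$ from an arbitrary $\pi\in\mathcal{P}_\Gamma$ there is no $g$ yet, and such paths define no selector at all, so surjectivity as you argue it fails on them. What rescues the construction — and what a complete proof must say — is the semantics of $\Omega_i$ through adapted triples (appendix proof of Proposition~\ref{prop:decremental-II}): for an inconsistent choice no adapted triple exists, the corresponding component of $Z$ becomes the empty set, emptiness propagates through the subsequent merges (products restricted to $\mathcal{C}^0$-compatible pairs), and the sink then satisfies $Z(\pi)=\emptyset\neq\{\sigma\mapsto\top\}$. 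So one must either restrict $\mathcal{P}_\Gamma$ to paths whose sink component is nonempty to obtain the stated bijection (an imprecision present in the paper's own statement, which your attempt inherits rather than resolves), or prove the winning-equivalence directly by showing that inconsistent paths can never realize $Z(\pi)=\{\sigma\mapsto\top\}$.
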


Consequently, answering our parity-game problem
amounts to checking if there exists a Dyck path $\pi \in
\mathcal{P}_\Gamma$ such that $Z(\pi) = \{\sigma \mapsto \top\}$.
A~\Dyn\FO algorithm has been proposed for this problem in~\cite{WS07}
in the case of two-letter alphabets; extending it
to alphabets of polynomial size, as is required here, is straightforward.

\section{Overall complexity analysis}
\label{section:overall}

In this section, we~analyze the complexity of our dynamic algorithm.
It~is worth noticing that our transformation above is not a
\emph{bfo-reduction} in the sense of~\cite{PI97,HI02}: the ``easy''
(edgeless) instances of the parity-game problem are not mapped to the
``easy'' instances of the Dyck-path problem. In~order to accommodate
with this, our precomputation phase also has to compute the auxiliary
information for the Dyck-path algorithm in the image of the
  edgeless game. As~we~explain in Appendix~\ref{app:complexity}, this
can be achieved in \LOGSPACE.


In the end, during the precomputation phase, 
the algorithm computes:
\begin{itemize}
\item a nice tree decomposition $\mathcal{D} = (\mathcal{T},\mathbf{T})$ of the maximal graph $(V,E_\star)$,
rooted at $\sigma$, and a depth-first traversal of $\mathcal{T}$;
\item the vertices and edges of the graph~$\Gamma_{G_\epsilon}$, where $G_\epsilon$ is the edgeless graph
  $(V,\emptyset,c,V_0,V_1)$;
\item useful
  predicates over the graph~$\Gamma_{G_\epsilon}$, in order to set up
  our dynamic algorithm for solving the Dyck
  reachability problem.
\end{itemize}

\smallskip 

During the update phases, whenever introducing or deleting an
edge~$e$ in~$G$, we~have to delete and introduce edges of the form
$(i,0) \xrightarrow{i,\mathbf{B},\mathbf{B}'} (i,1)$ in $\Gamma_G$,
for a given index $i$ that depends on $e$.
A constant (independent of $e$) number of such edges is concerned,
and, as mentioned in Section~\ref{subsec:reductiontodyck},
these edges are identified by \FO formulas taking the
edge~$e$ into account.
Moreover, since $\mathcal{D}$ is a tree decomposition of the maximal graph $(V,E_\star)$,
it remains a tree decomposition of $G$ throughout the entire computation.
Consequently, updating the edge-membership
predicate of~$\Gamma_G$ and the useful auxiliary predicates
can be done with \FO formulas 
  since the Dyck reachability problem is in
  $\Dyn\FO$.
Finally, deciding
whether~$\sigma$ is winning from these predicates can be done using
\FO formulas again, which completes the proof of
Theorem~\ref{mainresult}.

\section{Conclusion}
\label{sec-concl}

We have presented in this extended abstract a dynamic algorithm for
deciding whether player~$P_0$ is winning from a given state~$\sigma$ of
a parity game in which some edges may be added or deleted.

As~we explain in Appendices~\ref{app:dyck} to~\ref{app-extensions}, our algorithm can
readily be extended in several directions:
\begin{itemize}
\item \emph{non-uniform strategy synthesis}: besides deciding
  whether a given state is winning for player~$P_0$, the algorithm also
  maintains a predicate characterizing a winning strategy,
  should one exist;
\item \emph{uniform winning strategy}: our algorithm
  also computes the entire set of winning states for player $P_0$, as~well as
  a uniform memoryless winning strategy (i.e., a~single memoryless
  strategy that is winning from all the winning states);
%
\item \emph{additional updates}: our algorithm is extended to also
  handle modifications of the color of a state (within the fixed
  range~$[1,\bfC]$), or the change of owner of a state. 
\end{itemize}
%

\smallskip

As a final remark, let~us mention that our algorithm can be used for
dynamic model checking of a fixed $\mu$-calculus formula over Kripke structures
of bounded tree-width: $\mu$-calculus model checking can be expressed as
a parity game~\cite{EJ91,BW16}, by considering the product of the structure with the
graph of the $\mu$-calculus formula. Assuming that the formula is fixed
and that we~have a tree decomposition of the input structure, we~can
build the parity game and a tree decomposition for this game in \LOGSPACE. Moreover,
any addition and deletion of an edge in the Kripke structure translates
to the addition or deletion of a fixed number of edges in the parity
game. It~follows that dynamic model checking of a fixed $\mu$-calculus
formula over bounded-tree-width Kripke structures is in
\Dyn[\logspace]\FO. This can be compared to the $\Dyn{\TC[0]}$ algorithm
of~\cite{KW03} for dynamic model checking of fixed \LTL formulas over
arbitrary Kripke structures.



%



\label{finalpage}


\makeatletter
\newcounter{suitebib}
\let\@save@bibitem\@bibitem
\def\@bibitem#1{\@save@bibitem{#1}
  \global\setcounter{suitebib}{\value{\@listctr}}}
\makeatother

\putbib[nmbib]
\end{bibunit}


\ifnoapp\else
\clearpage
\appendix
\begin{bibunit}


\centerline{\noindent{\LARGE\bfseries --- Technical Appendix ---}}

\bigskip\bigskip


This appendix contains proofs and extensions of the results presented in
the main part of our paper:
\begin{itemize}
\item Appendix~\ref{app-dynfo} contains a more precise exposition of
  dynamic complexity;
\item Appendix~\ref{app-treedec} contains the proofs of the results of
  Section~\ref{sec-treedec}, concerned with tree decompositions and
  linear decompositions;
\item Appendix~\ref{app-stratviewI} contains the additional material and
  proofs about selector views;
\item Appendix~\ref{app-overall} gives a detailed account of the
  overall complexity of our algorithm, explaining why it is in
  \Dyn[\LOGSPACE]\FO;
\item Appendix~\ref{app-dyck} extends the algorithm of~\cite{WS07} to
  also maintain the auxiliary information for computing a witnessing
  Dyck path, should one exist;
\item Appendix~\ref{app-access} defines and studies \emph{accessibility
    sets}, which will be useful for computing the winning region of
  parity games;
\item Appendix~\ref{app-gammaG} contains a refined construction of the
  graph $\Gamma_G$, which we query for Dyck paths. The construction uses
  accessibility sets, in order to be able to compute the winning region
  and a uniform winning strategy;
\item Appendix~\ref{app-overall-2} extends the complexity results to the
  framework of Appendices~\ref{app-dyck} to~\ref{app-gammaG};
\item Appendix~\ref{app-extensions} explains how our algorithm can be
  extended to solve the uniform version of our problem, and to take
  additional update operations into account.
\end{itemize}

\section{Dynamic complexity classes}
\label{app-dynfo}

In this section, we~briefly introduce the formalism of descriptive
complexity, and extension to dynamic complexity. We~refer to
e.g.~\cite{Imm99,PI97,Hes03,Zeu15} for more details.




\smallskip
A~\emph{vocabulary} is a tuple $\tau=\tuple{R_i^{a_i},c_j}$ of relations
and constant symbols. A~\emph{$\tau$-structure} is a tuple
$\calS=\tuple{|\calS|, R_i, c_j}$ where $|\calS|=[0;n-1]$ is the
universe of size~$n\geq 1$, $R_i\subseteq |\calS|^{a_i}$ are relations,
and $c_j\in |\calS|$ are constants. We~write $\mathsf{Struct}_n(\tau)$ for the set
of finite structures of size~$n$ over vocabulary~$\tau$.
Any~subset~$S\subseteq \mathsf{Struct}_n(\tau)$ defines a static decision problem,
where the aim is to decide whether a given structure in $\mathsf{Struct}_n(\tau)$
belongs to~$S$. Descriptive complexity aims at characterizing such
subsets using logical formulas. As~an example, a~graph can be
represented with its set of states as universe and its edges as a binary
relation; that each state has at least one outgoing transition can be
expressed as the first-order formula $\forall x.\ \exists y.\ E(x,y)$.
The~class~\FO of problems that can be characterized by first-order
formulas is known to correspond to the class~$\AC[0]$ in circuit
complexity~\cite{BIS90}. 

\smallskip
In order to define dynamic-complexity problems, we~have to define
\emph{operations} on structures. Given an operation symbol~$\sigma$ (from a finite
alphabet~$\Sigma$) with arity~$m$, an integer~$n$, and $m$
elements~$a_1$, $a_2$,~...,~$a_m$ of~$|\calS|=[0;n-1]$,
the~operation~$\sigma(a_1,...,a_m)$ maps structures of size~$n$ to structures
of size~$n$. We~write~$\Sigma_n$ for the set of those operations.
The~semantics of those operations is defined as update functions~$g$ mapping a
structure~$A\in\mathsf{Struct}_n(\tau)$ and an operation
in~$w\in\Sigma_n$ to a structure in~$\mathsf{Struct}_n(\tau)$, which
we denote~$w(A)$. For a finite sequence of
operations~$W=(w_i)_{1\leq i\leq k}$ of operations in~$\Sigma_n$, we~write
$W(A)$ for the structure $w_k(w_{k-1}(\cdots(w_1(A))\cdots))$.

We further need to define \emph{initializations} on structures.
  Given a vocabulary~$\tau$ and an initialization vocabulary~$\tau^{\mathrm{init}}$,
  an initialization is a function $f$ mapping a structure
  $A \in \mathsf{Struct}_n(\tau^{\mathrm{init}})$ to a structure in~$\mathsf{Struct}_n(\tau)$,
  which we denote by $\calS_A^n$.
  Hence, we may view the set $\mathsf{Struct}_n(\tau^{\mathrm{init}})$ as an
  initialization alphabet $\Xi_n$.

  Now, given a vocabulary~$\tau$, a static decision problem~$S$
  over~$\tau$, an initialisation alphabet $\Xi = \bigcup_n
      \Xi_n$, an initialisation function $f$ and a set operation
  symbols~$\Sigma = \bigcup_n \Sigma_n$, the output of
  the~\emph{dynamic problem} associated with~$S$ and~$\Sigma$ is the
  set of non-empty words~$w$ such that $w\in\Xi_n \cdot \Sigma_n^\ast$
  for some~$n\geq 1$, and $w(\calS_A^n)\in S$.

As a special case, which we are concerned with in this article,
the initialization may consist in restricting
the set of allowed operations to a set $\Sigma_A \subseteq \Sigma_n$.
This case may be achieved when the vocabulary $\tau^{\mathrm{init}}$ consists only
in the relations $\sigma \in \Sigma$, i.e. when $\Xi_n = 2^{\Sigma_n}$
and when $\tau$ is the union of $\tau^{\mathrm{init}}$ with another vocabulary $\tau'$.
Then, the structure $\calS_A^n$ associated with a structure $A \in \Xi_n$
may simply be seen as a tuple $\tuple{A,\calS_0^n}$
where $\calS_0^n$ is a reference structure over the vocabulary $\tau'$
(e.g. the empty structure).
%

\smallskip

A~dynamic algorithm is a uniform family $(M_n)_{n\in\bbN}$ of
deterministic finite-state automata
$M_n=\tuple{Q_n,\Xi_n,\Sigma_n,\delta_n^{\mathrm{init}},\delta_n^{\mathrm{up}},s_n,F_n}$,
where $Q_n$ is the finite
set of $\tau^{\mathrm{aux}}$-structures of size~$n$ (where $\tau^{\mathrm{aux}}$ is the vocabulary
of \emph{auxiliary information}), $\Xi_n$~is the set of initializations, $\Sigma_n$~is the set of
update operations, $\delta_n^{\mathrm{init}}$ is the deterministic \emph{initial} transition function,
$\delta_n^{\mathrm{up}}$ is the deterministic \emph{update} transition function,
$s_n$~is the initial state and $F_n$~is the set of final states.
The initial transition function is used for reading the initial letter (chosen from $\Xi_n$) only,
and the update transition function is used for reading all subsequent letters (chosen from $\Sigma_n$).

For two
static complexity classes~$\calC$ and~$\calC'$, a~dynamic algorithm
belongs to $\Dyn[\calC']{\calC}$ (written $\Dyn\calC$ when
$\calC'=\calC$) when
\begin{itemize}
\item the update transition function can be computed in~$\calC$ (more
  precisely, the values of the predicates in the new state can be
  computed in~$\calC$ from the values in the previous state and from the
  operation being performed);
\item whether a state is in~$F_n$ can be decided in~$\calC$;
\item the initial state and the initial transition function
can be computed in~$\calC'$.
\end{itemize}
In~the sequel, we only consider the case where $\calC=\FO$, meaning
that first-order formulas will be used to describe how predicates are
updated along transitions.

\section{Additional material for Section~\ref{sec-treedec}}
\label{app-treedec}

\label{app-pureTD}

\puredecomposition*

\begin{proof}
%
  It is known~\cite{EJT10} that we can contruct in \LOGSPACE~a
  tree decomposition $\mathcal{D} = (\mathcal{T},\mathbf{T})$
  of $G = (V,E)$ with width at most $4 \kappa+3$ and
  with diameter at most $c'(\kappa) (\log_2(n)+1)$, for some integer constant $c'(\kappa)$.
  It remains to transform $\mathcal{D}$ into a nice tree decomposition.
  
  First, for all vertices $s \in V$, we select a node $v$ of $\mathcal{T}$ such that
  $\sigma \in \mathbf{T}(v)$; we add a new node $v_s$ and a new edge $(v_s,v)$
  to $\mathcal{T}$, and we set $\mathbf{T}(v_s)$.
  
  Second, for every edge $(v,w)$ of $\mathcal{T}$,
  let us write $\mathbf{T}(v) = \{s_1,\ldots,s_k\}$ and $\mathbf{T}(w) = \{t_1,\ldots,t_\ell\}$,
  and let us assume that we have $\mathbf{T}(v) \cap \mathbf{T}(w) =
  \{s_1,\ldots,s_m\} = \{t_1,\ldots,t_m\}$ for some $m$.
  If $k = \ell = m$, then we just merge both nodes.
  
  Otherwise, without loss of generality we assume that $m < \ell$.
  We add further nodes $v_m,v_{m+1},\ldots,v_{k-1}$ and $w_{m+1},\ldots,w_{\ell-1}$
  and edges $(v_m,v_{m+1}),\ldots,(v_{k-2},v_{k-1})$, $(v_{k-1},v)$,
  $(v_m,w_{m+1}),\ldots,(w_{\ell-2},w_{\ell-1})$, $(w_{\ell-1},w)$ to $\mathcal{T}$,
  and we set $\mathbf{T}(v_i) = \{s_1,\ldots,s_i\}$ and $\mathbf{T}(w_i) = \{t_1,\ldots,t_i\}$.
  
  Finally, we repeat the following two operation whenever available:
  if there exists a leaf $v$ such that $v \notin \{v_s \mid s \in V\}$,
  then we delete the node $v$ and its (only) incident edge.
  
  Performing all these transformations can be done in \LOGSPACE~and
  leaves us with a nice tree decomposition of width at most $4 \kappa+3$
  and with diameter at most $c(\kappa)(\log_2(n)+1)$, where
  $c(\kappa) = 8(\kappa+1)(c'(\kappa)+2)$.
\end{proof}

\section{Additional material for Subsections~\ref{sec-refselview}
  and~\ref{section:decremental-II}}
\label{app-stratviewI}

\subsection{Decremental computation of selector views: Proof of
  Proposition~\ref{prop:decremental-I}}
\label{section:decremental-I}

\propdecrI*


\begin{proof}
Let $f$ be a $V$-selector. We compute
selector views~$f_i$ and in a \emph{decremental} way.

First, since $|V| \geq 2$, the nice tree decomposition $(\mathcal{T},\mathbf{T})$
must contain at least two nodes, hence $\ell \geq 2$.
It follows that $\Psi_{\ell} = \Psi_1 \subseteq \Psi_{\leq \ell-1}$,
which proves that $\ell$ is not critical and that $\Psi_{>\ell-1} = \emptyset$.
Therefore, we have $\Psi_{>\ell-1}(v) = \emptyset$ for all $v \in \Theta_\ell$,
whence $f_\ell(v,s) = ?$ for all $s \in \mathbf{T}(v)$.

We assume now that $i \leq \ell-1$ and we wish to compute the selector view $f_i$ using~$f_{i+1}$.
The value of $f_i(v)$ depends on the set $\Psi_{>i-1}(v) = \{s \in \Psi_{>i-1} \mid n_i(s) = v\}$.
Hence, in order to compute $f_i(v)$, we must take into account two facts:
(a) for some vertices $s \in V$, the nodes $n_i(s)$ and $n_{i+1}(s)$ may differ
(if $v_{i+1}$ is a child of $v_i$), and
(b) the sets $\Psi_{>i-1}$ and $\Psi_{>i}$ may differ (if $i$ is critical).
We first deal with the former fact,
and will tackle the latter one afterwards.

For all nodes $v \in \Theta_i$, we denote by
$\Theta^\ast_{>i}(v)$, $\Psi^\ast_{>i}(v)$ the sets $\{w \in \Theta_{>i}
\mid n_i(w) = v\}$ and $\{s \in \Psi_{>i} \mid n_i(s) = v\}$.
Every edge of~$G$ with
one end~$s$ in~$\Psi_{>i}^\ast(v)$ has its other end in $\Psi_{>i}^\ast(v)
\cup \bfT(v)$, since $\Theta_{>i}^\ast(v)\cup \{v\}$ is a subtree of
$\mathcal{T}$ that contains $\calT^s$.

Then, we introduce the following variant of the selector views.
Consider an integer $1\leq i \leq \ell$, a node $v \in \Theta_i$,
a vertex $s \in \mathbf{T}(v)$ and a $V$-selector~$f$. Let $\pi = s \cdot
s_1 \cdot s_2 \dots$ be the unique (maximal) outcome of~$f$ from~$s$.
We denote by $f_i^\ast(v, s)$, the element
of $\{?,\top,\bot\} \cup (\bfT(v) \times \{1,\ldots,\mathbf{C}\})$
defined as follows:
  \begin{itemize}
  \item if $\pi$ stays in~$\Psi_{>i}^\ast(v)$ after the first step (or if
    $s \in \Psi_{>i}^\ast(v) \setminus \mathrm{dom}(f)$) and is winning
    for~$P_0$, then $f_i^\ast(v,s) = \top$;
  \item if $\pi$ stays in~$\Psi_{>i}^\ast(v)$ after the first step (or if
    $s \in \Psi_{>i}^\ast(v) \setminus \mathrm{dom}(f)$) and is losing
    for~$P_0$, then $f_i^\ast(v,s) = \bot$;
  \item if the first step of $\pi$ does not enter $\Psi_{>i}^\ast(v)$ (or
    if $s \notin \Psi_{>i}^\ast(v)$ and $s \notin \mathrm{dom}(f)$), then
    $f_i(v,s) = \mathord?$;
  \item if $\pi$ stays in~$\Psi_{>i}^\ast(v)$ from the first step until
    some step~$j\geq 2$, and $c_m$~is the maximal color
    seen along~$\pi$ between~$s$ and~$s_{j-1}$, then $f_i^\ast(v,s) =
    (s_j,c_m)$.
  \end{itemize}
Note this definition is the same as Definition~\ref{dfn:selector-view-f},
where we use the set $\Psi_{>i}^\ast(v)$ instead of $\Psi_{>i-1}(v)$.
We also denote by $f_i^\ast(v)$ the function $s \mapsto f_i^\ast(v,s)$.

Now, we compute $f_i^\ast$ using $f_{i+1}$.
\begin{itemize}
\item If $v_{i+1}$ is the parent of $v_i$, then
  $\Theta_{i+1} = \mathcal{A}(v_{i+1}) \subseteq \mathcal{A}(v_i) = \Theta_i$,
  whence $\Theta_{\leq i} = \Theta_{\leq i+1}$, and thus
  $n_i(s) = n_{i+1}(s)$ for all vertices $s \in \Psi_{>i}$.
  It follows that $\Psi_{>i}^\ast(v_i)$ is empty (since $v_i \notin \Theta_{i+1}$)
  and that $\Psi_{>i}^\ast(w) = \Psi_{>i}(w)$ for all $w \in \Theta_{i+1} = \Theta_i \setminus \{v_i\}$.
  
  Hence, we have $f_i^\ast(v_i) : s \mapsto \mathord?$ and $f_i^\ast(w) = f_i(x)$
  for all nodes $w \in \Theta_i \setminus \{v_i\}$.
 
\item If $v_{i+1}$ is a child of $v_i$, then
  $v_{i+1} \notin \Theta_i$, hence $v_{i+1} \notin \Theta_{\leq i}$ and
  $\Theta_{\leq i+1} = \Theta_{\leq i} \uplus \{v_{i+1}\}$.
  Consequently, we have $n_i(s) = v_i$ for all vertices $s \in \Psi_{>i}(v_{i+1})$, and
  $n_i(s) = n_{i+1}(s)$ for all other vertices $s \in \Psi_{>i}$.
  It follows that $\Psi_{>i}^\ast(w) = \Psi_{>i}(w)$ (and thus that $f_i^\ast(w) = f_{i+1}(w)$)
  for all nodes $w \in \Theta_i \setminus \{v_i\}$, and that 
  $\Psi_{>i}^\ast(v_i) = \Psi_{>i}(v_i) \uplus \Psi_{>i}(v_{i+1})$. 
  
  Then, let $\Lambda(v_{i+1},v_i)$ be the set of functions
  $\varphi \in \Lambda(v_{i+1})$ such that $\varphi(s) \in \{\mathord?,\top,\bot\} \cup
  (\mathbf{T}(v_i) \times \{1,\ldots,\mathbf{C}\})$ for all $s \in \mathbf{T}(v_{i+1})$.
%
  If $i+1$ is critical, then $\theta_{i+1} \in \Psi_{>i}(v_{i+1})$, hence
  $\mathbf{T}(v_{i+1}) = \mathbf{T}(v_i) \uplus \{\theta_{i+1}\}$, and
  we have $f_{i+1}(v_{i+1}) \in \Lambda(v_{i+1},v_i)$.
  If $i+1$ is non-critical, then $\mathbf{T}(v_{i+1}) \subseteq \mathbf{T}(v_i)$,
  and we also have $f_{i+1}(v_{i+1}) \in \Lambda(v_{i+1},v_i)$.
  
  It follows easily that $f_i(v_i) = \pi_i^1(f_{i+1}(v_i),f_{i+1}(v_{i+1}))$, where the
  function $\pi_i^1 \colon \Lambda(v_i) \times \Lambda(v_{i+1},v_i) \to \Lambda(v_i)$ is defined
  by:
  \[
  \pi_i^1(\varphi_1,\varphi_2) \colon s \mapsto \begin{cases} \varphi_2(s) &
    \text{if } \varphi_2(s) \neq \mathord ? \\
    \varphi_1(s) & \text{otherwise}.\end{cases}
  \]
\end{itemize}

\medskip

Finally, we compute $f_i$ when $i \leq \ell-1$. If $i$ is non-critical, then $\Psi_{>i-1} = \Psi_{>i}$,
and therefore $f_i = f_i^\ast$. However, if $i$ is critical,
then computing the selector view $f_i$ requires using~$f_i^\ast$
and the sets $S = \{s \in \mathbf{T}(v_i) \mid f(\theta_i) = s\}$ and $T = \{s \in \mathbf{T}(v_i) \mid f(s) = \theta_i\}$.

First, observe that $\theta_i$ is the only element of $\Psi_{>i-1} \setminus \Psi_{>i}$.
Since $n_i(\theta_i) = v_i$, it follows that, for all
nodes $w \in \Theta_i \setminus \{v_i\}$, we have
$\Psi_{>i-1}(w) = \Psi_{>i}^\ast(w)$ and thus $f_i(w) = f_i^\ast(w)$.
Hence, it remains to compute $f_i(v_i)$.

In addition, observe that $S$ is either empty or a singleton.
Then, we compute $f_i(v_i)$ in three steps:
\begin{itemize}
 \item We first take $f_i^\ast$, $S$ and $T$ into account into describing the outcome of $f$ from vertices $s \in \mathbf{T}(v_i)$.
    More precisely, we define a fuction $\pi^2_i \colon \Lambda(v_i) \times 2^{\mathbf{T}(v_i)} \times 2^{\mathbf{T}(v_i)} \mapsto \Lambda(v_i)$ by:
    \[
      \pi^2_i(\varphi,S,T) : x \mapsto \begin{cases} (s,c(\theta_i)) & \text{if } x = \theta_i \text{ and if } S \text{ is a singleton set } \{s\} \\
      (\theta_i,c(x)) & \text{if } x \in T \\
      \varphi(x) & \text{otherwise.}\end{cases}
    \]
    Note that, $\theta_i \in T$ if, and only if, $S = \{\theta_i\}$, so that both definitions of $\pi^2_i(\varphi,S,T)(\theta_i)$ coincide.
 \item Then, we compute $f_i(v_i,\theta_i)$ using $\pi^2_i(f_i^\ast(v_i),S,T)$:
    it comes directly that $f_i(v_i,\theta_i) = \pi^3_i(\pi^2_i(f_i^\ast(v_i),S,T))$, where the function
    $\pi^3_i \colon \Lambda(v_i) \to \{?,\top,\bot\} \cup (\bfT(v) \times \{1,\ldots,\mathbf{C}\})$
    is defined by:
    \[
      \pi^3_i(\varphi) = \begin{cases} \top & \text{if } \varphi(\theta_i) = ? \text{ and } \theta_i \in V_1 \text{, or if } \varphi(\theta_i) \in \{\theta_i\} \times (2\mathbb{N}) \\
      \bot & \text{if } \varphi(\theta_i) = ? \text{ and } \theta_i \in V_0 \text{, or if } \varphi(\theta_i) \in \{\theta_i\} \times (2\mathbb{N}+1) \\
      \varphi(x) & \text{otherwise.}\end{cases}
    \]
 \item Finally, we compute $f_i(v_i)$ using $\pi^2_i(f_i^\ast(v_i),S,T)$ and its image by $\pi^3_i$:
     again, if comes at once that $f_i(v_i) = \pi^4_i(\pi^2_i(f_i^\ast(v_i),S,T))$, where the function
     $\pi^4_i \colon \Lambda(v_i) \mapsto \Lambda(v_i)$ is defined by:
     \[
       \pi^4_i(\varphi) : x \mapsto \begin{cases} \pi^3_i(\varphi) & \text{if } x = \theta_i \text{, or if }
       \varphi(x) \in \{\theta_i\} \times \mathbb{N} \text{ and } \pi^3_i(\varphi) \in \{\bot,\top\}\\
       (s,\max\{c_1,c_2\}) & \text{if } \varphi(x) = (\theta_i,c_1) \text{ and } \pi^3_i(\varphi) = (s,c_2) \text{ for some } s \text{, } c_1 \text{ and } c_2\\
       \varphi(x) & \text{otherwise.}\end{cases}
     \]
\end{itemize}

The computations performed above were all \FO-definable.
More precisely, $f_i^\ast$ is always \FO-definable using $f_{i+1}$ for $i\leq \ell-1$;
furthermore, $f_i$ is \FO-definable using $f_i^\ast$ if $i$ is non-critial,
or using $f_i^\ast$ and the sets $S$ and $T$ if $i$ is critical.
This completes the proof.
\end{proof}

\subsection{Decremental computation of views of \texorpdfstring{\(V_0\)}{V0}-selectors:
  Proof of Proposition~\ref{prop:decremental-II}}
\label{app:decremental-II}


\lemvictory*

\begin{proof}
  A $V_0$-selector~$g$ is winning for $P_0$ at $\sigma$ if all its
  extensions lead to $P_0$'s victory when starting from~$\sigma$, i.e.
  if $f_{1}(v_1,\sigma) = \top$ for all extensions~$f$ of~$g$ or,
  equivalently, if $\omega_{1}^g(v_1) = \{\sigma 
  \mapsto \top\}$.  Lemma~\ref{lem:victory} follows from Lemma~\ref{lem-wintop}.
\end{proof}

\lemsoundness*

\begin{proof}
Consider a subset $X$ of $\Theta_i$ and a tuple $(\varphi_v)_{v \in X} \in \mathcal{C}^0 \cap
\prod_{v \in X} \omega_{i}^g(v)$. Let $(f^v)_{v \in X}$ be
extensions of~$g$ such that $\varphi_v = f^v_{i}(v)$ for all $v \in
X$. We construct an extension~$f$ of~$g$ in three successive steps, as
follows:
\begin{enumerate}
\item for all nodes $v \in X$ and for all vertices $s \in \Psi_{>i-1}(v)
  \cup \{t \in \mathbf{T}(v) \mid \varphi_v(t) \neq \mathord?\}$,
  we set $f(s) = f^v(s)$;
\item for all vertices $s$ not treated above, we set $f(s) = g(s)$,
\end{enumerate}
with the convention that, if a vertex $s$ lies outside of the domain of a selector~$\varphi$, then
setting $f(s) = \varphi(s)$ amounts to excluding $s$ from the domain of $f$.
It is then straightforward to check that the $V$-selector $f$ is well-defined and extends $g$
and that $\varphi_v = f^v_{i}(v)$ for all $v \in X$.
%
\end{proof}

\propdecrII*

\begin{proof}
  Thanks to Lemma~\ref{lem:soundness}, the above proof of Proposition~\ref{prop:decremental-I}
  leads to the following decremental recursive characterization of $\omega_{i}^g$.
  
  We started by showing that $\omega_\ell^g(v)$ is the singleton set
  $\{s \mapsto ?\}$, for all nodes $v \in \Theta_\ell$ and all $V_0$-selectors $g$.
  Hence, $\omega_\ell^g$ is \FO-definable and does not depend on $g$.
  
  Then, we follow the two-step approach used while proving Proposition~\ref{prop:decremental-I},
  and we introduce sets $\omega_i^{\ast g}(v) = \{f_i^\ast(v) \mid f \text{ is an extension of } g\}$.
  Lemma~\ref{lem:soundness} also applies to sets $\omega_i^{\ast g}$:
  for every $V_0$-selector $g$, every set $X \subseteq \Theta_i$ and every tuple $(\varphi_v)_{v \in X} \in
  \mathcal{C}^0 \cap \prod_{v \in X}\omega_i^{\ast g}(v)$, there exists an extension $f$ of $g$ such that
  $\varphi_v = f_i^\ast(v)$ for all nodes $v \in X$.
  
  Hence, we compute the selector view $\omega_i^{\ast g}$ as follows for $i \leq \ell-1$.
  We have $\omega_i^{\ast g}(x) = \omega_{i+1}^g(w)$ for all nodes
  $w \in \Theta_i \setminus \{v_i\}$ and, furthermore:
  \begin{itemize}
   \item if $v_{i+1}$ is the parent of $v_i$, then $\omega_i^{\ast g}(v_i) = \{s \mapsto \mathord?\}$;
   \item if $v_{i+1}$ is a child of $v_i$, then $X_{v_{i+1}} \subseteq \Lambda(v_{i+1},v_i)$, and therefore
   $\omega_i^{\ast g}(v_i) = \{\pi^1_i(\varphi_{v_i},\varphi_{v_{i+1}}) \mid (\varphi_{v_i},\varphi_{v_{i+1}})
   \in \mathcal{C}^0 \cap (X_{v_i} \times X_{v_{i+1}})\}$.
  \end{itemize}
  
  Finally, let us compute the selector view $\omega_i^g$.
  If $i$ is non-critical, then we showed that $\omega_i^g = \omega_i^{\ast g}$
  while proving Proposition~\ref{prop:decremental-I}.
  Hence, we assume that $i$ is critical.
  Again, we have $\omega_i^g(w) = \omega_i^{\ast g}(w)$ for all nodes $w \in \Theta \setminus \{v_i\}$,
  and it remains to compute $\omega_i^g(v_i)$.

  Consider the vectors $\mathbf{B} = (\mathbf{B}_{\theta_i,s})_{s \in \mathbf{T}(v_i)}$
  and $\mathbf{B}' = (\mathbf{B}'_{s,\theta_i})_{s \in \mathbf{T}(v_i)}$ whose entries are defined by:
      \[
      \mathbf{B}'_{x,y} = \mathbf{B}_{x,y} = \left\{\begin{array}{ll} \{\bot\} &
      \text{if}\
      (x,y) \notin E,\ \text{or}\ x \in V_0 \setminus \mathrm{dom}(g),\\
      & \phantom{\text{if}\ (x,y) \notin E,\ {}}
      \text{or}\ x \in
          \mathrm{dom}(g)\ \text{and}\ g(x) \ne y \\
          \{\top\} & \text{if}\ x \in \mathrm{dom}(g)\ \text{and}\ g(x)=y \\
          \{\bot,\top\} & \text{if}\ (x,y) \in E\ \text{and}\ x \notin
          V_0
        \end{array}\right.
      \]
      
  We say that a triple $(f,S,T) \in \omega_i^{\ast g}(v_i) \times 2^{\mathbf{T}(v_i)} \times 2^{\mathbf{T}(v_i)}$ is
  \emph{adapted to $g$ at step $i$} if it satisfies the following criteria:
  \begin{itemize}
   \item $S$ is of cardinality at most one, and $S$ is empty if $f(\theta_i) \neq \mathord?$;
   \item $S = \{\theta_i\}$ if and only if $\theta_i \in T$;
   \item $\top \in \mathbf{B}_{\theta_i,s}$ for all vertices $s \in S$, and
   $\bot \in \mathbf{B}_{\theta_i,s}$ for all vertices $s \in \mathbf{T}(v_i) \setminus S$;
   \item $\top \in \mathbf{B}'_{s,\theta_i}$ and $f(s) = ?$ for all vertices $s \in T$, and
   $\bot \in \mathbf{B}'_{s,\theta_i}$ for all vertices $s \in \mathbf{T}(v_i) \setminus T$.
  \end{itemize}
  
  It follows directly from Proposition~\ref{prop:decremental-I} and from Lemma~\ref{lem:soundness} that
  $\omega_i^g(v_i) = \{\pi^4_i(\pi^2_i(\varphi,S,T)) \mid (\varphi,S,T) \text{ is adapted to } g \text{ at step } i\}$.

The computations performed above were all \FO-definable, which completes the proof.
\end{proof}

\section{Complements to Section~\ref{section:overall}: Complexity
  analysis}
\label{app:complexity}\label{app-overall}
The results presented in Section~\ref{sec-stratview} (and complements
in the previous appendices) provide us with a dynamic algorithm for
solving Problem $\Dyn\NonUnifParity_{\mathbf{C},\kappa}$, which
consists in a precomputation phase 
and update phases.

During the precomputation phase, we first need to compute:
\begin{itemize}
\item a nice tree decomposition $\mathcal{D} = (\mathcal{T},\mathbf{T})$ rooted at $\sigma$
  and a depth-first traversal of $\mathcal{T}$;
\item the vertices and edges of the graph
  $\Gamma_{G_\epsilon}$ on which we will perform Dyck reachability
  tests, when $G_\epsilon$ is the graph $(V,\emptyset,c,V_0,V_1)$
  with empty set of edges;
\item the predicates $T$ and $\pi_2$ on the graph $\Gamma_{G_\epsilon}$,
\end{itemize}
where the 2-ary predicate $T$ and the 4-ary predicate $\pi_2$ are defined in~\cite{WS07} as follows.

The predicate $T(\mathbf{v}_1,\mathbf{v}_2)$
holds if, and only if, there exists a path from $\mathbf{v}_1$ to
$\mathbf{v}_2$ in $\Gamma_{\G_\epsilon}$.
The predicate $\pi_2(\mathbf{v}_1,\mathbf{v}_2,\mathbf{v}'_1,\mathbf{v}'_2)$
holds if, and only if, there exist a path $\pi$ from $\mathbf{v}_1$ to
$\mathbf{v}_2$, with label $\lambda$, and a path $\pi'$ from $\mathbf{v}'_1$ to
$\mathbf{v}'_2$, with label $\lambda'$, such that $\lambda \cdot \lambda'$ is a Dyck word.

We claim that such precomputations can be performed in \logspace.

\smallskip
First, Lemma~\ref{lemma:puredecomposition} proves that
the tree decomposition $\mathcal{D}$ can be
computed in \logspace, and has diameter at most $c(\kappa)(\log_2(n)+1)$.

\smallskip
A second issue is to compute the graph $\Gamma_{G_\epsilon}$.
We first bound the size of the tree decomposition $\mathcal{D}$,
the length $\ell$ of the depth-first traversal $v_1,\ldots,v_\ell$
and the size of the sets $\Lambda_i = \prod_{v \in \Theta_i} 2^{\Lambda(v)}$ for all $i \leq \ell$.

The tree $\mathcal{T}$ has at most $n$ leaves (since $\mathcal{D}$ is a nice tree decomposition)
and has diameter at most $c(\kappa)(\log_2(n)+1)$,
hence it contains at most $n c(\kappa)(\log_2(n)+1)$ nodes.
Then, each pair $(v_i,v_{i+1})$ of consecutive vertices of the depth-first traversal $v_1,\ldots,v_\ell$
is an oriented edge of $\mathcal{T}$, and no edge may appear twice, hence
$\ell \leq 2 n c(\kappa)(\log_2(n)+1) - 1 \leq 2n^2 c(\kappa)$.

Finally, for all nodes $v$ of $\mathcal{T}$, recall that
$|\Lambda(v)| \leq \mathbf{K}$, where $\mathbf{K} = (3 + 4(\kappa+1)\mathbf{C})^{4(\kappa+1)}$, and
that $\Theta_i$ is of size at most $c(\kappa) (\log_2(n)+1)$.
It follows that $\Lambda_i$ is of size at most $\mathbf{K}^{c(\kappa) (\log_2(n)+1)} \leq K^{c(\kappa)} n^{c(\kappa) \log_2(\mathbf{K})}$.

Consequently, we compute $\Gamma_{\G_\epsilon}$ as follows.
First, we choose some (arbitrary) linear order $\leq$ on $V$
and we perform a depth-first traversal of $\mathcal{T}$.
Then, we identify the set $[1;\ell]$ with a subset of
$V^2 \times [1;2 c(\kappa)]$, and we identify each set $\Lambda_i$
with a subset of $V^{\lceil c(\kappa) \log_2(\mathbf{K}) \rceil} \times [1;K^{c(\kappa)}]$.
Such identifications can be performed in \logspace,
for instance by deriving a linear order on each set $\Lambda_i$ from the order $\leq$.

Moreover, since each function $\Omega_i$ or $\Omega_i^\ast$ is
\FO-definable, we can decide in \logspace\ whether, for each tuple
$(\mathbf{v}_1,\mathbf{v}_2,\lambda)$ formed of two vertices and
one label of~$\Gamma_{G_\epsilon}$,
there exists an edge $(\mathbf{v}_1,\mathbf{v}_2)$ with label $\lambda$ in~$\Gamma_{G_\epsilon}$. 

\smallskip
Then, we precompute the predicate $T$, which indicates whether a vertex $\mathbf{v_1}$ is accesible from
another vertex $\mathbf{v}_2$ (regardless of labels).
Let us begin with preliminary remarks, that follow directly
from the construction of the graph $\Gamma_G$.

First, all paths of length $3$ contain at least one nominal vertex $(i,Z)$.
Henceforth, we say that a path is \emph{short} if its length is at most $3$.

Second, every nominal vertex $\mathbf{v}$ has at most one successor $\mathbf{w}$,
where a vertex $\mathbf{w}$ is said to be a \emph{successor} of a vertex $\mathbf{v}$
if there exists a labelled edge $\mathbf{v} \to \mathbf{w}$.
More precisely:
 \begin{itemize}
  \item if $\mathbf{v}$ is of the form $(i+1,Z)$ for some critical index $i$, then
    $\mathbf{w}$ exists and is equal to $(i,0)$;
  \item if $\mathbf{v}$ is of the form $(i+1,Z)$ for some non-critical index $i$, then
    $\mathbf{w}$ is of the form $(i,Z')$ if it exists.
 \end{itemize}

Consequently, for all nominal vertices $\mathbf{v}_1$ and
$\mathbf{v}_2$, checking whether there exists a path from $\mathbf{v}_1$ to $\mathbf{v}_2$
that does not use any vertex $(i,0)$ as an intermediary step is an easy task:
it suffices to start from the vertex $v = \mathbf{v}_1$,
and then to recursively replace $v$ by its unique successor,
until we either reach $\mathbf{v}_2$, or a vertex $(i,0)$, or a vertex $v$ with no outgoing edge.
Moreover, the same technique also allows us to compute the greatest integer $I$ such that
a vertex $(I,0)$ is reachable from $\mathbf{v}_1$, if any such integer exists.

Hence, consider two vertices $\mathbf{v}_1$ and $\mathbf{v}_2$ of $\Gamma_{G_\epsilon}$.
We first assume that $\mathbf{v}_1$ and $\mathbf{v}_2$ are nominal vertices, of the form
$(i_1,Z_1)$ and $(i_2,Z_2)$.
If there exists no critical index $I$ with $i_1 < I \leq i_2$,
then all paths from $\mathbf{v}_1$ to $\mathbf{v}_2$ avoid vertices of the form $(i,0)$,
thus one can check whether such paths exist.

Therefore, we also assume that there exists a critical index $I$ with $i_1 < I \leq i_2$.
Let $I_{\min}$ be the smallest such index, and let $I_{\max}$ be the greatest such index.
By construction, there exists a path from $(I_{\max},0)$ to $(I_{\min},0)$,
and every path from $\mathbf{v}_1$ to $\mathbf{v}_2$ must to through both $(I_{\max},0)$ and $(I_{\min},0)$.
Furthermore, if some vertex $(I,0)$ is reachable from $\mathbf{v}_1$, then
$I_{\max}$ is the greatest such integer.
Hence, one can check whether there exists a path from $\mathbf{v}_1$ to $(I_{\max},0)$,
and it remains to compute $I_{\min}$ and to check whether there exists a path from $(I_{\min},0)$ to $\mathbf{v}_2$.

By construction, there exist $|\Lambda(v_{I_{\min}})| \cdot 2^{|\Psi_{I_{\min}-1}|}$ nominal vertices $\mathbf{v}_3$ of the form
$(I_{\min},Z')$. For all such vertices $\mathbf{v}_3$, it is again easy to check whether $\mathbf{v}_3$ is reachable
from $(I_{\min},0)$ (since all paths from $(I_{\min},0)$ to $\mathbf{v}_3$ are short) and whether
$\mathbf{v}_2$ is reachable from $\mathbf{v}_3$ (since no path from $\mathbf{v}_3$ to $\mathbf{v}_2$ can use a vertex $(i,0)$).

It remains to treat the case where $\mathbf{v}_1$ and $\mathbf{v}_2$ are of no particular form.
There, either there exists a short path from $\mathbf{v}_1$ to $\mathbf{v}_2$,
or there exists nominal vertices $\mathbf{v}'_1$ and $\mathbf{v}'_2$ such that:
\begin{itemize}
 \item there exists a short path from $\mathbf{v}_1$ to $\mathbf{v}'_1$;
 \item there exists a short path from $\mathbf{v}'_2$ to $\mathbf{v}_2$;
 \item there exists a path from $\mathbf{v}'_1$ to $\mathbf{v}'_2$.
\end{itemize}
Enumerating all possible vertices $\mathbf{v}'_1$ and $\mathbf{v}'_2$
and checking whether such paths exist, using the method described above,
is feasible in \logspace.


\smallskip
We are left with computing the predicate $\pi_2$.
This latter computation is based on the
layered structure of the graph $\Gamma_G$ and of its labels.
Recall (from Section~\ref{subsec:reductiontodyck}) that
the Dyck paths are either sub-paths of nominal Dyck paths,
or generic Dyck paths, i.e. concatenations of nominal Dyck paths.

Hence, consider paths $\pi_1$ and $\pi_2$ in $\Gamma_G$, with
respective labels $\lambda_1$, $\lambda_2$,
and let $\pi$ be a sub-path of $\pi_1$.
If $\pi$ is a sub-path of a nominal (non-Dyck) path but
is not a sub-path of any nominal Dyck path, then,
by construction of the graph $\Gamma_G$,
the word $\lambda_1 \cdot \lambda_2$ cannot be a Dyck word.

From this remark, it follows that the word $\lambda_1 \cdot \lambda_2$ is a Dyck word if, and only if,
we are in one of the cases below:
\begin{itemize}
 \item both $\pi_1$ and $\pi_2$ are Dyck paths;
 \item there exists a sub-path $\rho$ of a nominal Dyck path such that both $\rho$ and $\pi_1 \cdot \rho \cdot \pi_2$ are Dyck paths
 (in particular, note that the source of $\rho$ must coincide with the sink of $\pi_1$,
 and that the sink of $\rho$ must coincide with the source of $\pi_2$);
 \item there exists paths $\rho_1$, $\rho_2$, $\rho_3$ and $\rho_4$ such that $\pi_1 = \rho_1 \cdot \rho_2$, $\pi_2 = \rho_3 \cdot \rho_4$
 and that all of $\rho_2$, $\rho_3$ and $\rho_1 \cdot \rho_4$ are Dyck paths.
\end{itemize}

Hence, the problem of computing $\pi_2$ is
\logspace-reducible to the problem of computing, for all vertices $\mathbf{v}_1,\ldots,\mathbf{v}_4$,
a Dyck path with source $\mathbf{v}_1$, sink $\mathbf{v}_2$, and that goes through $\mathbf{v}_3$ and $\mathbf{v}_4$
(if such a path exists).

Moreover, when $G = G_\epsilon$, for all nominal vertices $\mathbf{v}$,
there exists at most one nominal Dyck path with source $\mathbf{v}$;
that path is the unique minimal Dyck path with source $\mathbf{v}$ (if it exists).
This makes the problem of computing $\pi_2$ in \logspace~feasible,
which completes the
proof that the whole precomputation can be done in \logspace.


\medskip During the update phases, and whenever introducing or
deleting an edge $e$ in $G$, we must delete or introduce a bounded number
of edges in $\Gamma_G$; these edges are identified by \FO
formulas taking the edge $e$ into account.  Consequently, updating the
edge membership predicate of $\Gamma_G$ and the predicates $T$ and
$\pi_2$ can be done with \FO formulas.  Finally, deducing
whether $\sigma$ is winning from these predicates can be done using
\FO formulas again, which completes the proof of
Theorem~\ref{mainresult}.

\section{Dyck reachability with witness}
\label{app:dyck}\label{app-dyck}
Our dynamic algorithm for solving
$\Dyn\NonUnifParity_{\bfC,\kappa}$ goes through
an ad-hoc Dyck query over a graph. We present this problem, and then
analyze its dynamic complexity, which is $\Dyn\FO$.


Context-free graph query in acyclic graphs belong to the dynamic
complexity class
$\Dyn\FO$~\cite{MVZ15,WS07}.
We prove here a variant of this result.

Let $G=(V,E,L)$ with $E \subseteq V \times L \times V$ be a
  labelled acylic graph, whose labels are drawn from a set $L = L^+
  \cup L^= \cup L^-$, where $L^+$ and $L^-$ are in bijection with each
  other.  Let $x \mapsto \overline{x}$ be an involution of $L$ that
  maps $L^+$ to $L^-$ (and vice-versa) and fixes each element of $L^=$
  (those are ``neutral'' elements). 

  The Dyck reachability problem with witness asks for triples
  $(s,t,\rho)$, where $s$, $t$ are vertices of $G$ and $\rho$ is a
  path from $s$ to $t$ that is labelled by a string in the language
  $\mathbf{D}$ of \emph{Dyck words} built over the 
  grammar: $S \to \varepsilon \mid \ell \cdot S \cdot \overline{\ell}
  \cdot S \mid S \cdot \bullet \cdot S$, where $\ell \in L^+ \cup L^-$
  and $\bullet \in L^=$.

  Such a path may be viewed as a finite sequence $\mathbf{S} = s_0
  \cdot \ell_0 \cdot s_1 \cdot \ell_1 \cdot \ldots \cdot \ell_{k-1}
  \cdot s_k$, where $(s_i,\ell_i,s_{i+1}) \in E$ for all $i < k$, and
  we define its \emph{representative} as the predicate
  \[R^\mathbf{S}(u,\ell,v) \equiv u = v \vee \exists i \in
  \{0,\ldots,k-1\}, (u,\ell,v) = (s_i,a_i,s_{i+1}).\]

  The Dyck reachability problem with witness in acyclic graph is as
  follows: Given a labelled acyclic graph $G = (V,E,L)$ and two vertices
  $s$ and~$t$, is there a Dyck path with source $s$ and sink~$t$? If
  yes, give a predicate that represents such a path; if no, give an
  empty predicate. The dynamic version of that problem, denoted
  \Dyckwitness, assumes insertion and deletion of labelled edges (with
  the restriction that the graph should remain acyclic---which can be
  checked). We assume we start from an empty set of edges.

  \begin{thm}
    \label{thm:dyck-witness}
    Problem $\Dyckwitness$ can be solved in $\Dyn\FO$.
\end{thm}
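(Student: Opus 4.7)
The plan is to extend the $\Dyn\FO$ algorithm of \cite{WS07} for Dyck reachability in acyclic graphs along two orthogonal directions: from two-letter alphabets to our richer alphabet $L = L^+ \uplus L^- \uplus L^=$, and from a pure decision procedure to the maintenance of a witnessing path.

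For the alphabet extension, I would re-use the auxiliary predicates from \cite{WS07}---notably a binary Dyck-reachability predicate $D(u,v)$ and the quaternary predicate $\pi_2(u_1,v_1,u_2,v_2)$ asserting the existence of paths $u_1\to v_1$ and $u_2\to v_2$ whose label concatenation is a Dyck word---and verify that their FO update formulas adapt essentially verbatim. Neutral letters of $L^=$ are treated like $\epsilon$-transitions that do not participate in the matching, while the polynomially many label values only enlarge the domain of quantification without changing the structure of the updates. Acyclicity remains crucial to ensure that $D$ and $\pi_2$ can be updated after a single edge modification by an FO formula of bounded depth.

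To maintain a witness, I would fix arbitrary total orders on vertices and labels (available since the universe is fixed) and define the \emph{canonical} Dyck path from $s$ to $t$, when one exists, as the lexicographically smallest such path. The representative predicate $R^{\mathbf{S}}$ is then defined to contain the diagonal together with exactly the edges of this canonical path. The key step is to express ``$(u,\ell,v)$ lies on the canonical Dyck path from $s$ to $t$'' by a fixed FO formula over $D$ and $\pi_2$: for a neutral edge this is a local minimality condition over $D$; for an opening edge $(u,\ell,v)$ with $\ell\in L^+$, it additionally requires the existence of a matching closing edge $(u',\bar\ell,v')$ with Dyck pieces from $s$ to $u$, from $v$ to $u'$, and from $v'$ to $t$, which is directly captured by appropriate combinations of $D$ and $\pi_2$; and a closing edge is then forced by the unique opening it matches.

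The main obstacle will be eliminating the apparent recursion in ``lying on the canonical path'', which implicitly refers to the whole prefix up to $u$. I would handle this by exploiting the non-crossing property of Dyck matchings together with acyclicity: each opening edge on a Dyck path has a uniquely determined innermost matching closing and each maximal neutral block has a uniquely determined minimum next edge, so the canonical path is determined by independent local choices, each expressible in FO over $D$, $\pi_2$, and the vertex order; consistency of these local choices along a single path then follows automatically from the non-crossing structure. Since the auxiliary predicates are updated in FO after each atomic edge modification and $R^{\mathbf{S}}$ is a fixed FO formula over them, the overall algorithm lies in $\Dyn\FO$.
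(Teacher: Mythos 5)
There is a genuine gap, and it sits exactly where you flag it: the claim that ``lying on the canonical (lexicographically smallest) Dyck path'' can be expressed by a fixed \FO formula over $D$, $\pi_2$ and a vertex order, with consistency of local choices ``following automatically from the non-crossing structure.'' What $D$ and $\pi_2$ let you express is that an edge lies on \emph{some} Dyck path from $s$ to $t$ (via a matching closing edge and suitable Dyck pieces), but not that it lies on \emph{the} greedy-lex path. The greedy choice at a vertex $u$ depends on the stack of currently unmatched opening letters above $u$ --- whether a closing edge $\bar\ell$ is even admissible there, and what the remaining closing obligations are, is determined by an unbounded sequence of earlier choices. This stack is not recoverable from any fixed-arity predicate over pairs of endpoints: the innermost-matching decomposition is unique only \emph{given} a path, and different local choices induce incompatible global matchings, so edges individually satisfying your local minimality conditions need not chain into a single path. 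Defining the lex-least path amounts to a recursion of polynomial depth along the path, which \FO cannot perform; you would also still owe a proof that your predicate defines a function (a single path), which the proposal never addresses.

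This is precisely why the paper does not define the witness statically. Instead it maintains witness predicates as \emph{auxiliary dynamic data}: a predicate $\rho_2$ (and, in \FO from it, $\rho_k$ for the bounded values $k \le 6$ needed) that stores representative paths for tuples satisfying $\pi_k$, together with a linear order $\leq$ built incrementally on activated vertices and labels. After an insertion or deletion of $(s,\ell,t)$, the new witness is obtained by a bounded case analysis --- splicing previously stored pieces (from $\rho_4$, $\rho_6$) through the modified edge, with canonical choices made only inside the bounded-depth decomposition $\Pi_k^i$ by taking the $\preccurlyeq$-minimal tuple $(x,y,z,i)$. The recursion along the path is thus amortized into the dynamic maintenance rather than compressed into one formula. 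A secondary omission: with a polynomial-size alphabet, the update formulas must refer to ``a path whose label is exactly the letter $\mu$'' for a quantified label variable $\mu$; the paper handles this with auxiliary vertices and edges $(\ell,\ell,\bullet)$ so that $\ell \cdot \ell \cdot \bullet$ is the unique path realizing label $\ell$, a gadget your sketch would also need once labels are variables rather than constants. Your decision-part adaptation (treating $L^=$ as neutral and enlarging quantification domains) is essentially right and matches the paper; the witness-maintenance strategy is where the proposal would fail.
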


Before developing a complete proof, we begin with an intuitive sketch.


\subsection{Sketch of the proof}

  Our proof is inspired by that of~\cite[Theorem
  7.2]{WS07}, which proves that checking whether Dyck
  paths exist, in the case $L^+ = \{a,b\}$, $L^= = \emptyset$ and $L^-
  = \{\overline{a},\overline{b}\}$, is in $\Dyn\FO$.

  Their proof is two-fold.  First, they introduce predicates
  $\pi_k(u_1,v_1,\ldots,u_k,v_k)$ that hold iff there exists paths
  from $u_i$ to $v_i$ with strings $s_i$ such that $s_1 \cdots s_k$ is
  a Dyck word, and compute each predicate $\pi_k$ as a first-order
  formula over the vocabulary formed of $\pi_2$, of the transitive
  closure of the graph, and of predicates for checking membership in
  $E$.

Second, the devise first-order formulas, over the same vocabulary augmented with predicates $\pi_k$,
that allow updating the value of $\pi_2$, from which they deduce a solution to the Dyck reachability problem.
Such formulas rely on the two following principles.
\begin{itemize}
 \item if $G'$ is the graph obtained from $G$ by adding an edge $(s,\lambda,t)$, then
 each (not necessarily Dyck) path $a \to b$ in $G'$ can be identified either with a path $a \to b$ in $G$
 or with a pair $(a \to s, t \to b)$ of paths in $G$;
 \item if $G'$ is the graph obtained from $G$ by deleting an edge $(s,\lambda,t)$, then
 each path $a \to b$ in $G'$ can be identified either with a path $a \to b$ in $G$ (if $s$ is not accesible from $a$ or
 $b$ is not accessible from $s$) or with a triple $(a \to x,(x,\mu,y),y \to b)$ where $a \to x$, $y \to b$ are paths in $G$
 and $(x,\mu,y)$ is an edge of $G'$ such that $s$ is accessible from $x$ but not from $y$.
\end{itemize}

Our adaptation is as follows.  First, we introduce auxiliary vertices
$\bullet$ and $\ell$ (for all labels $\ell \in L$) and labelled edges
$(\ell,\ell,\bullet)$ to $G$.  This shall allow us to select paths
with label $\ell$ by selecting the unique path that goes from $\ell$
to $\bullet$, in order to easily select and simulate the behavior of
the above-mentioned edges $(s,\lambda,t)$ and $(x,\mu,y)$, while
avoiding the dissymetry between labels $\ell$ and $\overline{\ell}$
that complicates the proofs of~\cite[Lemmas 7.4 and
7.5]{WS07}.

Second, while following the construction of the predicates $\pi_k$
of~\cite{WS07}, we also maintain in \FO predicates $q_k$
of arity $5k$ that, for each tuple $(u_1,v_1,\ldots,u_k,v_k)$
satisfying $\pi_k$, represent simultaneously paths $\rho_i$ from $u_i$
to $v_i$ with labels $s_i$ such that $s_1 \cdots s_k$ is a Dyck word.
The proof of~\cite[Lemma 7.3]{WS07} provides a recursive
procedure for checking whether collections $(\rho_1,\ldots,\rho_k)$
exist; we select unambiguously one such collection by maintaining a
linear order on those vertices that may be traversed by some path
$\rho_i$, following this recursive procedure and, whenever this
procedure uses a universal quantification on vertices satisfying a
property $\varphi$, we chose the minimal vertex satisfying $\varphi$.

Having constructed \emph{witnesses} for the predicates $\pi_k$, we
adapt the principles used in~\cite{WS07} for updating the
predicate $\pi_2$, thereby updating the predicate $q_2$ as well, from
which we deduce a solution to the Dyck reachability problem with
witness.

\subsection{Complete proof of Theorem~\ref{thm:dyck-witness}.}


We first assume that the sets $V$ and $L$ are disjoint
(up to replacing $V$ and $L$ by sets $\{(x,V) \mid x\in V\}$ and $\{(x,L) \mid x\in L\}$),
and consider some \emph{fresh} symbol $\bullet$, which does not belong to~$V$ nor to~$L$.

We replace now the graph $G = (V,E,L)$ by the new labelled acyclic graph $\mathcal{G} = (\mathcal{V},\mathcal{E},L)$
defined by: $\mathcal{V} = V \cup L \cup \{\bullet\}$ and $\mathcal{E} = E \cup \{(\ell,\ell,\bullet) \mid \ell \in L\}$.
Observe that $\mathcal{G}$ can be computed from $G$ and from a first-order formula,
and that inserting/deleting one labelled edge in $G$ amounts to inserting/deleting one labelled edge in $\mathcal{G}$.
Hence, we work on the graph $\mathcal{G}$.

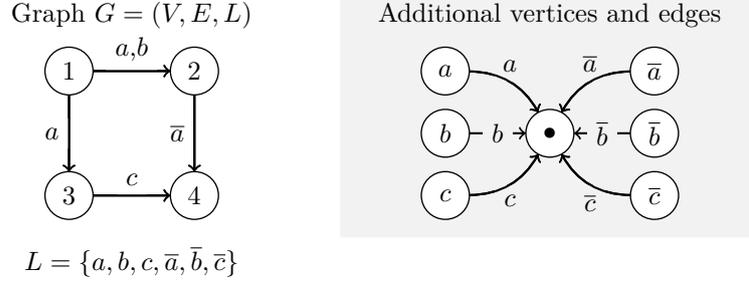
\begin{figure}[!ht]
\begin{center}
\begin{tikzpicture}[scale=0.55]
\draw[draw=black!5,fill=black!5] (6.5,-4) -- (16.5,-4) -- (16.5,1.9) -- (6.5,1.9) -- cycle;

\SetGraphUnit{3}
\tikzset{EdgeStyle/.style = {->}}
\Vertex{1}
\EA(1){2}
\SO(1){3}
\EA(3){4}
\Edge[label={$a$,$b$},labelstyle={above}](1)(2)
\Edge[label=$a$,labelstyle={left}](1)(3)
\Edge[label=$\overline{a}$,labelstyle={left}](2)(4)
\Edge[label=$c$,labelstyle={above}](3)(4)
\Edge(1)(2)
\Edge(1)(3)
\Edge(2)(4)
\Edge(3)(4)

\tikzset{LabelStyle/.style= {fill=black!5}}

\EA[unit=6,L=$a$](2){a}
\SO[unit=1.5,L=$b$](a){b}
\SO[unit=1.5,L=$c$](b){c}
\EA[unit=5,L=$\overline{a}$](a){a2}
\SO[unit=1.5,L=$\overline{b}$](a2){b2}
\SO[unit=1.5,L=$\overline{c}$](b2){c2}
\EA[unit=2.5,L=$\bullet$](b){bul}

\Edge[label={$a$},style={bend left},labelstyle={above}](a)(bul)
\Edge[label={$b$}](b)(bul)
\Edge[label={$c$},style={bend right},labelstyle={below}](c)(bul)
\Edge[label={$\overline{a}$},style={bend right},labelstyle={above}](a2)(bul)
\Edge[label={$\overline{b}$}](b2)(bul)
\Edge[label={$\overline{c}$},style={bend left},labelstyle={below}](c2)(bul)
\Edge[style={bend left}](a)(bul)
\Edge[style={bend right}](c)(bul)
\Edge[style={bend right}](a2)(bul)
\Edge[style={bend left}](c2)(bul)

\node[anchor=north] at (1.5,-4) {$L = \{a,b,c,\overline{a},\overline{b},\overline{c}\}$};
\node[anchor=south] at (1.5,0.8) {Graph $G = (V,E,L)$};
\node[anchor=south] at (11.5,0.83) {Additional vertices and edges};
\end{tikzpicture}
\end{center}
\caption{From $G$ to $\mathcal{G}$ via adding vertices and edges}
\label{fig:G-to-calG(2)}
\end{figure}

We recall now some facts that are proved in~\cite{WS07}.
First, observe that the edge relation $E$ can obviously be maintained
in \FO, since we track insertions and deletions of edges.

Second, the transitive closure of $\mathcal{G}$, ignoring the labels,
can be also maintained in \FO.  More precisely, in what follows, we
denote by $\mathcal{G}'$ the graph obtained from $\mathcal{G}$ by
inserting/deleting an edge and, for each predicate $X$ over
$\mathcal{G}$, we denote by $X'$ the analogous predicate over
$\mathcal{G}'$.  Here, denoting by $T$ the transitive closure in
$\mathcal{G}$, there exists update formulas:
\begin{itemize}
 \item $T'(u,v) \equiv T(u,v) \vee (T(u,s) \wedge T(t,v))$ after the insertion of some edge $(s,\ell,t)$;
 \item $T'(u,v) \equiv T(u,v) \wedge (T_1(u,v,s) \vee \exists (x,\mu,y) \in \mathcal{E}, T_2(u,v,x,\mu,y,s,\ell,t))$ after the deletion of some edge $(s,\ell,t)$,
\end{itemize}
where $T_1(u,v,s) \equiv T(v,s) \vee \neg T(u,s)$ and $T_2(u,v,x,\mu,y,s,\ell,t) \equiv 
(x,\mu,y) \neq (s,\ell,t) \wedge T(u,x) \wedge T(y,v) \wedge T(x,s) \wedge \neg T(y,s)$.
Indeed if $T_1(u,v,s)$ holds, then no path in $\mathcal{G}(u,v)$ can use the edge $(s,\ell,t)$;
if it does not hold, then $T'(u,v)$ holds if and only if there exits a path in $\mathcal{G}(u,v)$
that does not use the edge $(s,\ell,t)$ for exiting the set of vertices $\{z \in V \mid T(z,s)\}$,
to which $u$ belongs but $v$ does not belong.

\begin{figure}[!ht]
\begin{center}
\begin{tikzpicture}[scale=0.5]
\draw[draw=black!50,fill=black!15] (-4.5,4.5) -- (0,0) -- (4.5,4.5);

\SetGraphUnit{2}
\tikzset{EdgeStyle/.style = {->}}
\Vertex[L=$s$]{s}
\SOWE[L=$t$](s){t}
\Edge[label={$\lambda$},labelstyle={above left}](s)(t)

\NOEA[L=$x$](s){x}
\NOWE[L=$u$](x){u}
\SOEA[L=$y$](x){y}
\SOWE[L=$v$](y){v}

\Edge[label={$\mu$},labelstyle={above right}](x)(y)

\tikzset{EdgeStyle/.style = {densely dotted,->}}

\Edge[style={bend right}](u)(s)
\Edge[style={bend left}](u)(x)
\Edge[style={bend right}](x)(s)
\Edge(y)(v)
\end{tikzpicture}
\end{center}
\caption{Exiting the set $\{z \in V \mid T(z,s)\}$ (in gray) without using the edge $(s,\ell,t)$}
\label{fig:T-dans-dynFO(2)}
\end{figure}
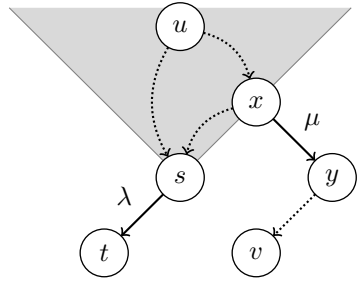

Third, consider the predicates $\pi_k$ defined as follows for all $k \geq 1$:
\[\pi_k(u_1,v_1,\ldots,u_k,v_k) \equiv \exists (\sigma_1,\ldots,\sigma_k) \in
\mathcal{G}(u_1,v_1) \times \ldots \times \mathcal{G}(u_k,v_k), [\sigma_1] \cdot \ldots \cdot [\sigma_k] \in \mathbf{D}.\]
It is proved in~\cite[Lemma~7.3]{WS07} that
all predicates $\pi_k$ belong to \FO[$\pi_2$] and, more precisely, that the following formulas hold for $k \geq 3$:
\[\pi_k(u_1,v_1,\ldots,u_k,v_k) \equiv
\bigvee_{i=2}^{k-1} \exists x,y,z \in \mathcal{V}, \Pi_k^i(u_1,v_1,\ldots,u_k,v_k,x,y,z),\]
where $\Pi_k^i$ is defined as:
\[\arraycolsep=1pt\begin{array}{lll}
\Pi_k(u_1,v_1,\ldots,u_k,v_k,x,y,z) \equiv \pi_2(x,v_1,u_2,z) & \wedge &
\pi_{k+2-i}(u_1,x,y,v_i,\ldots,u_k,v_k) \\ & \wedge & \pi_{i-1}(z,v_2,\ldots,u_i,y)
\end{array}\]

Third, we prove that $\pi_2$ can be maintained in \FO.  First, when
$G$ has no edge, $\pi_2$ can be computed from a \FO formula.  Then,
our proof is similar to that of~\cite{WS07}, with the
following difference: we will quantify over variables in $L$, whereas
such a quantification was implicit in~\cite{WS07}, where
$L$ was the fixed set $\{a,\overline{a},b,\overline{b}\}$.  Our
augmented structure $\mathcal{G}$ will mostly be useful here, since
$\ell \cdot \ell \cdot \bullet$ is the unique path in
$\mathcal{G}(\ell,\bullet)$ for all $\ell \in L$.

We first compute the updated version $\pi_2'$ of the predicate $\pi_2$ after inserting an edge $(s,\ell,t)$.
Consider a pair $(\mathbf{S}_1,\mathbf{S}_2)$ of paths in the updated graph $\mathcal{G}'$, such that $[\mathbf{S}_1] \cdot [\mathbf{S}_2] \in \mathbf{D}$.
Four cases are possible, depending on whether $\mathbf{S}_1$ and/or $\mathbf{S}_2$ use the edge $(s,\ell,t)$.
For each pair $(u,v) \in \mathcal{V}^2$,
every path $\mathbf{S} \in \mathcal{G}'(u,v)$ that uses the edge $(s,\ell,t)$ can be split into three smaller paths
$\mathbf{S}^1 \in \mathcal{G}(u,s)$, $\mathbf{S}^2 = s \cdot \ell \cdot t$ and $\mathbf{S}^3 \in \mathcal{G}(t,v)$,
such that $[\mathbf{S}] = [\mathbf{S}^1] \cdot [\mathbf{S}^2] \cdot [\mathbf{S}^3]$.
Finally, since $\mathbf{S}^2$ is not yet a path in $\mathcal{G}$, we replace it with the path $\ell \cdot \ell \cdot \bullet$,
whose label is also $\ell = [\mathbf{S}^2]$.
Hence, it follows that:
\[\arraycolsep=1pt\begin{array}{lll}\pi'_2(u_1,v_1,u_2,v_2) \equiv \pi_2(u_1,v_1,u_2,v_2) & \vee & \pi_4(u_1,s,\lambda,\bullet,t,v_1,u_2,v_2) \vee
\pi_4(u_1,v_1,u_2,s,\lambda,\bullet,t,v_2) \\
& \vee & \pi_4(u_1,s,\lambda,\bullet,t,v_1,u_2,s,\lambda,\bullet,t,v_2)\end{array}\]

Similarly, after deleting an edge $(s,\ell,t)$,
consider a pair $(\mathbf{S}_1,\mathbf{S}_2)$ of paths in the updated graph $\mathcal{G}$, such that $[\mathbf{S}_1] \cdot [\mathbf{S}_2] \in \mathbf{D}$.
For each pair $(u,v) \in \mathcal{V}^2$, a path $\mathbf{S} \in \mathcal{G}(u,v)$ still exists in $\mathcal{G}'$
if and only if either $T_1(u,v,s)$ holds, or if $\mathbf{S}$ does not use the edge $(s,\ell,t)$ to exit the set $\mathcal{Z} = \{z \in \mathcal{V} \mid T(z,s)\}$.
In the latter case, there exists some edge $(x,\mu,y)$ that exists $\mathcal{Z}$ and such that $\mathbf{S}$ can be split into three smaller paths
$\mathbf{S}^1 \in \mathcal{G}(u,x)$, $\mathbf{S}^2 = x \cdot \mu \cdot y$ and $\mathbf{S}^3 \in \mathcal{G}(y,v)$.
Here, observe that $\mathbf{S}^2$ may not be the only path in $\mathcal{G}(x,y)$, which might even contain paths using the edge $(s,\ell,t)$.
Hence, we ensure the choice of a path $\tilde{\mathbf{S}}^2$ with label $\mu = [\mathbf{S}^2]$ be choosing $\tilde{\mathbf{S}}^2$ inside the set
$\mathcal{G}(\mu,\bullet)$. It follows that
$\pi'_2(u_1,v_1,u_2,v_2) \equiv \psi_1 \vee \psi_2 \vee \psi_3 \vee \psi_4$, where:
\[\arraycolsep=1pt\begin{array}{lll}
\psi_1 & \equiv & \pi_2(u_1,v_1,u_2,v_2) \wedge T_1(u_1,v_1,s) \wedge T_1(u_2,v_2,s) \\
\psi_2 & \equiv & \exists (a,\mu,b) \in \mathcal{E}, T_2(u_1,v_1,a,\mu,b,s,\lambda,t) \wedge T_1(u_2,v_2,s) \wedge \pi_4(u_1,a,\mu,\bullet,b,v_1,u_2,v_2) \\
\psi_3 & \equiv & \exists (c,\nu,d) \in \mathcal{E}, T_1(u_1,v_1,x) \wedge T_2(u_2,v_2,c,\nu,d,x,\lambda,y) \wedge \pi_4(u_1,v_1,v_1,c,\nu,\bullet,d,v_2) \\
\psi_4 & \equiv & \exists (a,\mu,b),(c,\nu,d) \in \mathcal{E},
 T_2(u_1,v_1,a,\mu,b,x,\lambda,y) \wedge T_2(u_2,v_2,c,\nu,d,x,\lambda,y) \wedge \\
 & & \hspace*{34mm} \pi_6(u_1,a,\mu,\bullet,b,v_1,u_2,c,\nu,\bullet,d,v_2)
\end{array}\]

So far, we mostly rephrased the proof of~\cite{WS07}.  We
show now how witnesses can be extracted and maintained in \FO.

To that end, we first need to maintain a linear order $\leq$ (which we
may view as a binary predicate) on the a set $S$ that contains all
\emph{activated} vertices and labels of the \emph{original} graph $G$.
More precisely, we say that the elements of the set
\[\arraycolsep=1pt\begin{array}{lll}
\{\bullet\} & \cup & \{v \in V \mid \exists w \in V, \exists \ell \in L, (v,\ell,w) \in E \vee (w,\ell,v) \in E\} \\
& \cup & \{\ell,\overline{\ell} \in L \mid \exists v,w \in V, (v,\ell,w) \in E\}
\end{array}\]
are activated, and $S$ must contain all these elements.
However, as soon as this constraint is satisfied, the set $S$ and the order $\leq$ on $S$ are themselves arbitrary.

First, when $G$ contains no edge, we choose $S = \{\bullet\}$, and $\leq$ is the unique order on $S$.
In addition, for every vertex $z \in \mathcal{V}$ and every binary predicate $P$ over $\mathcal{V}$,
we denote by $\mathbf{Ext}[z,P]$ the binary predicate defined by:
$\mathbf{Ext}[z,P](u,v) \equiv P(u,v) \vee (v = z \wedge \neg P(z,z))$.
In the case where $P$ is a linear order on some subset $\Sigma$ of $\mathcal{V}$,
then $\mathbf{Ext}[z,P]$ either is equal to $P$, if $z \in \Sigma$, or is
obtained by extending $\leq$ to $\Sigma \cup \{z\}$, with $z$ as new greatest element.

Hence, after inserting an edge $(s,\ell,t)$, we choose $S' = S \cup \{s,t,\ell,\overline{\ell}\}$ and
we choose $\leq'$ to be the linear order on $S'$ defined as
$\leq' \equiv \mathbf{Ext}[s,\mathbf{Ext}[t,\mathbf{Ext}[\ell,\mathbf{Ext}[\overline{\ell},\leq]]]]$.
Conversely, after deleting an edge $(s,\ell,t)$, we do not change $S$ nor $\leq$.

Then, we follow the above proof that $\pi_2$ can be maintained in \FO,
but we do so in a constructive manner, maintaining witnesses for each
of the predicates $\pi_k$.  More precisely, for all $k \geq 1$, we
consider a predicate
$\rho_k(u_1,v_1,w_1,\lambda_1,x_1,\ldots,u_k,v_k,w_k,\lambda_k,x_k)$
that satisfies the following property $\mathcal{Q}_k$: ``\emph{for all
tuples $(u_1,v_1,\ldots,u_k,v_k)$ that satisfy $\pi_k$, there exists
paths $\mathbf{S}_1 \in \mathcal{G}(u_1,v_1),\ldots,\mathbf{S}_k \in
\mathcal{G}(u_k,v_k)$ such that $[\mathbf{S}_1] \cdot \ldots \cdot
[\mathbf{S}_k]$ is a Dyck word and such that
$\rho_k(u_1,v_1,w_1,\lambda_1,x_1,\ldots,u_k,v_k,w_k,\lambda_k,x_k)
\equiv \bigwedge_{i=1}^k R^{\mathbf{S}_i}(w_i,\lambda_i,x_i)$, where
we recall that $R^\mathbf{S}$ is the predicate that represents the
path~$\mathbf{S}$.}''

Note that, from such a predicate, recovering each of the predicates $R^{\mathbf{S}_i}$ is straightforward,
using universal quantification over the variables $w_j,\lambda_j,x_j$ for $j \neq i$.
In addition, it $\mathbf{S}_1$ is a path whose sink is the source of some other path $\mathbf{S}_2$,
then the concatenated path $\mathbf{S}_1 \cdot \mathbf{S}_2$ is represented by the predicate $R^{\mathbf{S}_1} \vee R^{\mathbf{S}_2}$.
Hence, it is harmless to say that the predicates $\rho_k$ provide us with \emph{tuples of paths}
instead of predicates representing such tuples, and
it will also be harmless to perform projections on such tuples or concatenation operations on pairs of words that share a sink and source.

Like the order $\leq$, such predicates $\rho_k$ are not uniquely defined. Hence,
we will use the order $\leq$ to build predicates $\rho_k$ unambiguously.
We prove now that such predicates $\rho_k$ belong to \FO[$\leq,\pi_2,\rho_2$].

We first choose
$\rho_1(u_1,v_1,w_1,\lambda_1,x_1) \equiv \rho_2(u_1,v_1,w_1,\lambda_1,x_1,u_1,u_1,u_1,\lambda_1,u_1)$.
We then treat the case $k \geq 3$ by induction,
noting that our task reduces to finding, using \FO formulas, a canonical tuple $(x,y,z,i)$ such that
$\Pi_k^i(u_1,v_1,\ldots,u_k,v_k,x,y,z)$ holds.
First, if $(x,y,z,i)$ is such a tuple, we know that each of $x$, $y$ and $z$ either are activated or belong to $\{u_1,v_1,\ldots,u_k,v_k\}$.
Indeed, if $x$ is not activated, then we must have $x = u_1 = v_1$, and similarly for $y$ and $z$.
Hence, we transform the order $\leq$ into the order
$\preccurlyeq~\equiv~\mathbf{Ext}[u_1,\mathbf{Ext}[v_1,\ldots,\mathbf{Ext}[u_k,\mathbf{Ext}[v_k,\leq]]]]$,
and we know that $x$, $y$, $z$ must belong to the domain of $\preccurlyeq$.

Consequently, there exists a minimal tuple $(x,y,z,i)$ (for the lexicographical order on $\mathrm{dom}(\preccurlyeq)^3 \times \{2,\ldots,k-1\}$)
such that $\Pi_k^i(u_1,v_1,\ldots,u_k,v_k,x,y,z)$ holds, and this tuple is definable in \FO[$\preccurlyeq,\pi_2,\ldots,\pi_{k-1}$], hence in \FO[$\leq,\pi_2$].
Then, by induction hypothesis, we know that $\rho_2$, $\rho_{k+2-i}$ and $\rho_{i-1}$
provide us with the path families
$(\tau_1,\tau_2) \in \mathcal{G}(x,v_1) \times \mathcal{G}(u_2,z)$,
$(\tau_0,\tau_5,\mathbf{S}_{i+1},\ldots,\mathbf{S}_k) \in \mathcal{G}(u_1,x) \times \mathcal{G}(y,v_i) \times
\mathcal{G}(u_{i+1},v_{i+1}) \times \ldots \times \mathcal{G}(u_k,v_k)$ and
$(\tau_3,\mathbf{S}_2,\ldots,\mathbf{S}_{i-1},\tau_4) \in \mathcal{G}(z,v_2) \times \mathcal{G}(u_3,v_3) \times \ldots \times \mathcal{G}(u_i,y)$.

Hence, we set $\mathbf{S}_1 = \tau_0 \cdot \tau_1$, $\mathbf{S}_2 = \tau_2 \cdot \tau_3$ and $\mathbf{S}_i = \tau_4 \cdot \tau_5$.
It follows that all of the predicates $R^{\mathbf{S}_1}, \ldots, R^{\mathbf{S}_k}$ belong to \FO[$\leq,\pi_2,\rho_1,\ldots,\rho_{k-1}$],
hence to \FO[$\leq,\pi_2,\rho_2$], and so does $\rho_k$.

Finally, we show that some predicate $\rho_2$ satisfying the property
$\mathcal{Q}_2$ can be maintained in \FO.  First, when $G$ does not
contain any edge, observe that each set $\mathcal{G}(u,v)$ is either
the singleton $\{u,u,v\}$, if $u \in L$ and $v = \bullet$, or the
empty set. Hence, there is a unique way of choosing $\rho_2$, which
can be done with first-order formulas.

Then, consider four vertices $u_1,v_1,u_2,v_2 \in \mathcal{V}$.
After inserting an edge $(s,\ell,t)$, deciding whether $\pi_2'(u_1,v_1,u_2,v_2)$ holds
can be done by considering four mutually exclusive cases:
\begin{enumerate}
 \item some paths $\mathbf{S}_1$ and $\mathbf{S}_2$,
 going from $u_1$ to $v_1$ and from $u_1$ to $v_2$, and such that $[\mathbf{S}_1] \cdot [\mathbf{S}_2] \in \mathbf{D}$, already exist in $\mathcal{G}$;
 in that case, $\rho_2$ provides us with two such paths $\mathbf{S}_1 \in \mathcal{G}(u_1,v_1)$ and $\mathbf{S}_2 \in \mathcal{G}(u_2,v_2)$;
 \item case \#1 is false, but such paths exist if we allow $\mathbf{S}_1$ to use the edge $(s,\ell,t)$;
 in that case, $\rho_4$ provides us with four paths
 $\tau_0 \in \mathcal{G}(u_1,s)$, $\tau_1 = \ell \cdot \ell \cdot \bullet$, $\tau_2 \in \mathcal{G}(t,v_1)$ and $\mathbf{S}_2 \in \mathcal{G}(u_2,v_2)$,
 and we set $\mathbf{S}_1 = \tau_0 \cdot (s \cdot \ell \cdot t) \cdot \tau_2$;
 \item cases \#1 and \#2 are false, but such paths exist if we allow $\mathbf{S}_2$ to use the edge $(s,\ell,t)$;
 in that case, $\rho_4$ provides us with four paths
 $\mathbf{S}_1 \in \mathcal{G}(u_1,v_1)$, $\tau_3 \in \mathcal{G}(u_2,s)$, $\tau_4 = \ell \cdot \ell \cdot \bullet$ and $\tau_5 \in \mathcal{G}(t,v_2)$,
 and we set $\mathbf{S}_2 = \tau_3 \cdot (s \cdot \ell \cdot t) \cdot \tau_5$;
 \item cases \#1 to \#3 are false, but such paths exist if we allow both $\mathbf{S}_1$ and $\mathbf{S}_2$ to use the edge $s,\ell,t)$;
 in that case, $\rho_6$ provides us with six paths $\tau_0,\ldots,\tau_5$, and we set $\mathbf{S}_1$ and $\mathbf{S}_2$ as above.
\end{enumerate}
Formulas in \FO[$\pi_2$] indicate in which of these cases we are,
then formulas in \FO[$\pi_2,\rho_2,\rho_4,\rho_6$] provide us with two paths $\mathbf{S}_1$ and $\mathbf{S}_2$
that are defined unambiguously, which proves that $\rho'_2$ may indeed be defined in \FO[$\leq,\pi_2,\rho_2$].

Similarly, after deleting an edge $(s,\ell,t)$, consider four vertices $u_1,v_1,u_2,v_2 \in \mathcal{V}$.
Deciding whether $\pi_2'(u_1,v_1,u_2,v_2)$ holds
can be done by considering four mutually exclusive cases, depending on whether $T_1(u_1,v_1,s)$ and/or $T_2(u_2,v_2,s)$ holds.
For instance, if $T_1(u_1,v_1,s)$ does not hold, there exists an edge $(a,\mu,b)$
such that $T_2(u_1,v_1,a,\mu,b,s,\ell,t)$ holds, and we can choose $(a,\mu,b)$ to be minimal
for the lexicographical order induced by $\leq$: this edge can be selected unambiguously
by using a formula in \FO[$T,\leq$].
Similarly, if $T_1(u_2,v_2,s)$ does not hold, we can choose unambiguously a minimal edge $(c,\nu,d)$
such that $T_2(u_2,v_2,c,\nu,d,s,\ell,t)$ holds.
This leads us to considering four mutually exclusive cases:
\begin{enumerate}
 \item if both $T_1(u_1,v_1,s)$ and $T_1(u_2,v_2,s)$ hold, then $\rho_2$ provides us with two paths
 $\mathbf{S}_1 \in \mathcal{G}(u_1,v_1)$ and $\mathbf{S}_2 \in \mathcal{G}(u_2,v_2)$ such that $[\mathbf{S}_1] \cdot [\mathbf{S}_2] \in \mathbf{D}$;
 \item if only $T_1(u_1,v_1,s)$ fails to hold, then $\rho_4$ provides us with four paths
 $\tau_0 \in \mathcal{G}(u_1,a)$, $\tau_1 = \mu \cdot \mu \cdot \bullet$, $\tau_2 \in \mathcal{G}(b,v_1)$ and $\mathbf{S}_2 \in \mathcal{G}(u_2,v_2)$,
 and we set $\mathbf{S}_1 = \tau_0 \cdot (a \cdot \mu \cdot b) \cdot \tau_2$;
 \item if only $T_1(u_2,v_2,s)$ fails to hold, then $\rho_4$ provides us with four paths
 $\mathbf{S}_1 \in \mathcal{G}(u_1,v_1)$, $\tau_3 \in \mathcal{G}(u_1,c)$, $\tau_4 = \nu \cdot \nu \cdot \bullet$ and $\tau_5 \in \mathcal{G}(b,v_2)$,
 and we set $\mathbf{S}_2 = \tau_3 \cdot (c \cdot \nu \cdot d) \cdot \tau_5$;
 \item if both $T_1(u_1,v_1,s)$ and $T_1(u_2,v_2,s)$ fail to hold, then $\rho_6$
 provides us with six paths $\tau_0,\ldots,\tau_5$, and we set $\mathbf{S}_1$ and $\mathbf{S}_2$ as above.
\end{enumerate}
Once again, it follows that $\rho'_2$ can be defined in \FO[$\leq,\pi_2,\rho_2$].

Overall, the predicates $T$, $\pi_2$, $\leq$ and $\rho_2$ can be
initialized and maintained with first-order formulas, i.e. can be
computed in \FO.  Hence, so can $\pi_1$ and $\rho_1$, which constitute
a solution of the Dyck reachability problem with witness.

\section{Computing \emph{uniform} winning strategies: Accessibility
  sets}
\label{section:decremental-III}\label{app-access}
In the core of the paper, for readability and page-limit concerns, we have
focused on 
a simple version of the dynamic parity problem, but the approach we
propose applies to the more demanding problem of synthesizing a
\emph{uniform} winning strategy, that is, the synthesis of a
$V_0$-selector which defines a winning strategy from \emph{every}
winning state of the game.

In the static (\textit{i.e.} non-dynamic) case, computing uniform
memoryless strategies in a parity game is known to be reducible to
computing several non-uniform winning strategies.  More precisely, let
$\sigma_1,\ldots,\sigma_k \in W_0$ be the winning states for the
player $P_0$, with associated non-uniform $V_0$-selectors
$f^1,\ldots,f^k$.  For each integer $i$, let $V^i$ be the set of those
states that are reachable from $\sigma_i$ when $P_0$ is bound to play
according to the $V_0$-selector $f^i$.

The following standard construction yields a uniform winning strategy
$g$ for $P_0$:
\begin{itemize}
\item for each vertex $s \in V$, let $i$ be the least integer, if any,
  such that $s \in V^i$: we set $g(s) = f^i(s)$, or we exclude $s$
  from the domain of $g$ if $s \notin \mathrm{dom}(f^i)$;
\item we exclude from the domain $\mathrm{dom}(g)$ each vertex $s \in
  V$ not yet treated.
\end{itemize}
Consequently, for every vertex $\sigma_i$ and every $V_0$-selector
$f^i$, being able to compute the set $V^i$ is crucial for computing
uniform winning strategies.  Such a computation will be made possible
by using \emph{accessibility sets}, which we define now.





Let $\sigma$ be the unique element of the set $\Psi_1$, let $g$ be a
$V_0$-selector, and let $i \in \{1,\ldots,\ell\}$ be some integer.  We
call \emph{accessibility set} of $g$, and denote by $\alpha_i^g$, the
subset of $\Psi_i = \bigcup_{v \in \Theta_i}\mathbf{T}(v)$ defined by:
$\alpha_i^g = \{s \in \Psi_i \mid \exists k \geq 0 \text{ and an
  extension } f \text{ of } g \text{ such that } f^k(\sigma) = s\}$,
where $f^k$ denotes the $k$-th iterate of $f$.

Unlike selector views, the accessibility sets of $g$ cannot be computed
directly in the same decremental fashion as above. However, once the selector
views $\omega_{i}^g$ and $\omega_i^{\ast g}$ are known, the~accessibility sets $\alpha_i^g$ can be
computed in an \emph{incremental} fashion.

Since $\Psi_1 = \{\sigma\}$, it is clear that $\alpha_1^g = \{\sigma\}$.
We compute then the sets $\alpha_i^g$ for $i \in \{2,\ldots,\ell\}$.
If $i$ is non-critical, then $\Psi_i \subseteq \Psi_{i-1}$, and therefore we have
$\alpha_i^g = \alpha_{i-1}^g \cap \Psi_i$.

Henceforth, we assume that $i \geq 2$ is critical, whence $\Psi_i = \Psi_{i-1} \uplus \{\theta_i\}$,
and computing $\alpha_i^g$ amounts to deciding whether $\theta_i \in \alpha_i^g$:
\begin{itemize}
 \item If $\theta_i \in \alpha_i^g$, let $f$ be an extension of $g$ such that $f^k(\sigma) = \theta_i$ for some integer $k \geq 1$,
 and let us assume that $k$ is minimal.
 
 If $\sigma \in \mathbf{T}(v_{i-1})$, then there exists a maximal integer $m \leq k-1$ such that $f^m(\sigma) \in \mathbf{T}(v_{i-1})$.
 Furthermore, if $\sigma \notin \mathbf{T}(v_{i-1})$, recall that $v_{i-1}$ must be the father of $v_i$, so that
 $\mathbf{T}(v_{i-1})$ is a separator of $V$ that leaves the vertices $\sigma$ and $\theta_i$
 in two distinct connected components of $V \setminus \mathbf{T}(v_{i-1})$.
 Hence, in that case too, there exists a maximal integer $m \leq k-1$ such that $f^m(\sigma) \in \mathbf{T}(v_{i-1})$.

 Let $s = f^m(\sigma)$. Then, we know that $s \in \mathbf{T}(v_{i-1}) \subseteq \mathbf{T}(v_i)$ and that $s \in \alpha_{i-1}^g$.
 By maximality of $m$ and minimality of $k$, the vertices $f^{m+1}(\sigma),\ldots,f^{k-1}(\sigma)$ all belong to $\Psi_{>i-1}(v_i)$.
 It follows that either $f(s) = \theta_i$ or $f_i(v_i,s) \in \{\theta_i\} \times \mathbb{N}$.
 
 \item Conversely, let us assume that there exists a vertex $s \in \mathbf{T}(v_i) \cap \alpha_{i-1}^g$
 such that $f(s) = \theta_i$ or $f_i(v_i,s) \in \{\theta_i\} \times \mathbb{N}$.
 In both cases, there exists a positive integer $\ell$ such that $f^\ell(s) = \theta_i$,
 and from now on we assume that $\ell$ is minimal.
 
 Let $m$ be an integer and $h$ be an extension of $g$ such that $h^m(\sigma) = s$.
 We define an extension $\varphi$ of $g$ by:
 \[\varphi\colon t \mapsto \begin{cases} f(t) & \text{if } t = f^u(s) \text{ for some integer } u \leq \ell-1 \\ h(t) & \text{otherwise}\end{cases}\]
 It is immediate that $\varphi$ is a well-defined extension of $g$ and that $\theta_i = \varphi^z(\sigma)$ for some integer $z \leq \ell+m$.
\end{itemize}

Moreover, for all vertices $s \in \mathbf{T}(v_i)$ and all extensions $f$ of $g$,
observe that $f(s) = \theta_i$ if and only if there exists sets $S, T \in 2^{\mathbf{T}(v_i)}$ such that
$s \in T$ and that $(f,S,T)$ is adapted to $g$ at step $i$ (the concept of adapted triples was
introduced at the end of the proof of Proposition~\ref{prop:decremental-II}).
In conclusion:
\begin{itemize}
 \item if there exists a function $\varphi \in \omega_i^{\ast g}$, sets $S, T \in 2^{\mathbf{T}(v_i)}$
 and a vertex $s \in T \cap \mathbf{T}(v_i) \cap \alpha_{i-1}^g$ such that $(\varphi,S,T)$ is relevant for $g$ at step $i$,
 then we have $\alpha_i^g = \alpha_{i-1}^g \cup \{\theta_i\}$;
 \item if there exists a function $\varphi \in \omega_i^{\ast g}$ and a vertex $s \in \mathbf{T}(v_i) \cap \alpha_{i-1}^g$
 such that $\varphi(s) \in \{\theta_i\} \times \mathbb{N}$, then we have $\alpha_i^g = \alpha_{i-1}^g \cup \{\theta_i\}$;
 \item otherwise, we have $\alpha_i^g = \alpha_{i-1}^g$,
\end{itemize}
where the sets of functions $\omega_i^{\ast g}(v)$ were introduced while proving Proposition~\ref{prop:decremental-II}
and were proved to be \FO-definable using $\omega_{i+1}^g$.
In particular, for \emph{all} integers $i \leq \ell-1$ (including $i = 1$ and non-critical indices $i$),
there exists a \FO-definable function $\mathbf{A}_i$ such that $\alpha_i^g = \mathbf{A}_i(\alpha_{i-1}^g,\omega_{i+1}^{g})$.

\section{Refining the graph of Subsection~\ref{subsec:reductiontodyck}}
\label{app:gammaG}\label{app-gammaG}
We target the resolution of the uniform version of our problem 
(see the discussion in 
Appendix~\ref{section:decremental-III}). We~thus formally construct an
extended graph with extra information for computing the accessibility
sets.


More precisely, we would like to compute the set
\[S =\bigcup_{1 \leq i \leq \ell} \alpha_i^g = \{s \in V \mid
\exists k \geq 0 \text{ and an extension } f \text{ of } g \text{ such that } f^k(\sigma) = s\}.\]
Since $\ell$ we proven to be non-critical, we have in fact
$S = \bigcup_{1 \leq i \leq \ell-1} \alpha_i^g$: we will rely below on this characterization.

We prove here that checking whether a vertex $\sigma
\in V$ is winning for $P_0$, selecting a winning $V_0$-selector $g$
for $P_0$ (if some exists) and then computing the set $S$
is reducible to the Dyck problem with witness in some graph $\Gamma_G$
which we define below, and which is a variant of the graph
constructed in Section~\ref{subsec:reductiontodyck}.
Recall Section~\ref{app:dyck}, where the Dyck
problem with witness is defined, and a dynamic solution to that
problem is provided.

We build an acyclic labelled graph $\Gamma_G$ as follows.
This graph is a variant of the graph described in Section~\ref{subsec:reductiontodyck},
whose structure has been modified to take into account accessibility sets and
to face issues that will be addressed in Appendix~\ref{app-extensions} below.

For all integers $i \leq \ell$, we denote by $\Lambda_i$
the set $\prod_{x \in \Theta_i} 2^{\Lambda(x)}$.
A nominal vertex of $\Gamma_G$ is a triple $(i,Z,A)$ with
$Z \in \Lambda_i$ and $A \subseteq \Psi_{i-1}$.

For all non-critical values of $i$, we add neutral edges:
$(i+1,Z,A) \xrightarrow{\bullet} (i,Z',A')$ where $Z = \Omega_i(Z)$ and $A' = A \cap \Psi_i$.

Then, let us order the vertices of the graph $(V,E)$ in some way, thereby endowing the set $V$
with a linear order $\leq$.
Consider some critical index $i$, and let $s_1 \leq \ldots \leq s_k$ be the vertices in $\mathbf{T}(v_i)$.
Let us also use a fresh symbol $s_{k+1}$.
Recall that a pair of vectors $(\mathbf{B}_{\theta_i,s})_{s \in \mathbf{T}(v_i)}$
and $ (\mathbf{B}'_{s,\theta_i})_{s \in \mathbf{T}(v_i)}$
is said to be suitable at step $i$ if their entries are non-empty subsets of $\{\bot,\top\}$.

Thus $\Gamma_G$ additionally contains the following edges and vertices,
for all sets $Z \in \Lambda_{i+1}$ and $A \subseteq \Psi_{i}$, all integers $a \in [i,k]$ and all vectors $\mathbf{B}$ and $\mathbf{B}'$ suitable at step $i$:
\begin{itemize}
\item $(i+1,Z,A) \xrightarrow{i,Z,A,\mathbf{B},\mathbf{B}'} (\theta_i,s_1,0)$;
\item $(\theta_i,s_a,0) \xrightarrow{\theta_i,s_a,\beta} (\theta_i,s_a,1) \xrightarrow{s_a,\theta_i,\mathbf{B},\mathbf{B}'} (\theta_i,s_{a+1},0)$
for all $\beta \in \choice(\theta_i,s_a)$;
\item $(s_a,\theta_i,0) \xrightarrow{s_a,\theta_i,\beta'} (s_a,\theta_i,1) \xrightarrow{s_a,\theta_i,\mathbf{B},\mathbf{B}'} (\theta_i,s_{a+1},0)$
for all $\beta' \in \choice(s_a,\theta_i)$;
\item $(\theta_i,s_{k+1},0) \xrightarrow{\bullet} (\theta_i,s_{k+1},\mathbf{B},\mathbf{B}',0)$;
\item $(\theta_i,s_{a+1},\mathbf{B},\mathbf{B}',0) \xrightarrow{\overline{s_a,\theta_i,\mathbf{B},\mathbf{B}'}} (s_a,\theta_i,\mathbf{B},\mathbf{B}',1)
\xrightarrow{\overline{s_a,\theta_i,\beta'}} (s_a,\theta_i,\mathbf{B},\mathbf{B}',0)$ for $\beta' = \mathbf{B}'_{s_a,\theta_i}$;
\item $(s_a,\theta_i,\mathbf{B},\mathbf{B}',0) \xrightarrow{\overline{\theta_i,s_a,\mathbf{B},\mathbf{B}'}} (\theta_i,s_a,\mathbf{B},\mathbf{B}',1)
\xrightarrow{\overline{\theta_i,s_a,\beta}} (\theta_i,s_a,\mathbf{B},\mathbf{B}',0)$ for $\beta = \mathbf{B}_{\theta_i,s_a}$;
\item $(\theta_i,s_1,\mathbf{B},\mathbf{B}',0) \xrightarrow{\overline{i,Z,A,\mathbf{B},\mathbf{B}'}} (i,Z',A')$ where $Z' = \Omega_i(Z,\mathbf{B},\mathbf{B}')$ and
$A = \mathbf{A}_i(A',Z)$.
\end{itemize}

In the sequel, we call \emph{nominal path} every labelled path of the form
$(i+1,Z) \rightarrow (i,Z')$ or
    \begin{multline*}
      (i+1,Z,A) \rightarrow (\theta_i,s_1,0) \rightarrow (\theta_i,s_1,1) \rightarrow (s_1,\theta_i,0) \rightarrow (s_1,\theta_i,1) \rightarrow \ldots \\
      \ldots \rightarrow (s_k,\theta_i,1) \rightarrow (\theta_i,s_{k+1},0) \rightarrow (\theta_i,s_{k+1},\mathbf{B},\mathbf{B}',0)
      \rightarrow (s_k,\theta_i,\mathbf{B},\mathbf{B}',1) \rightarrow \ldots \\
      \ldots \rightarrow (s_1,\theta_i,\mathbf{B},\mathbf{B}',0) \rightarrow (\theta_i,s_1,\mathbf{B},\mathbf{B}',1)
      \rightarrow (\theta_i,s_1,\mathbf{B},\mathbf{B}',0) \rightarrow (i,Z',A').
    \end{multline*}
Due to the layered structure of the graph $\Gamma_G$, every Dyck path in $\Gamma_G$ is either a sub-path of a nominal Dyck path or
of concatenation of nominal Dyck paths; in the latter case, we call the path a \emph{generic Dyck path}.

Hence, let $\mathcal{P}_\Gamma$ be the class of generic Dyck paths in $\Gamma_G$ with source in
$\{(\ell,\Omega_\ell,A) \mid A \subseteq \Psi_{\ell-1}\}$
and with sink in $\{(1,Z,\emptyset) \mid Z \subseteq \Lambda(v_1)\}$. For each path $\pi \in \mathcal{P}_\Gamma$, we~let
    \begin{itemize}
    \item $Z(\pi)$ be the subset of $\Lambda(v_1)$ such that $(1,Z(\pi),\emptyset)$ is the sink of~$\pi$;
    \item $E(\pi)$ be the subset of $E$ defined as
 \[\arraycolsep=1pt\begin{array}{lll}E(\pi) & = &
   \left\{(\theta_i,s) \left|\begin{array}{l} i \text{ is critical, } \theta_i \in V_0 \text{, } s \in \mathbf{T}(v_i) \text{ and } \\ \pi \text{ goes through a vertex }
   (\theta_i,s_{k+1},\mathbf{B},\mathbf{B}',0) \text{ with } \mathbf{B}_{\theta_i,s} = \{\top\}\end{array}\right\}\right. \cup \\
   & & \left\{(s,\theta_i) \left|\begin{array}{l} i \text{ is critical, } s \in V_0 \cap \mathbf{T}(v_i) \text{ and } \\ \pi \text{ goes through a vertex }
   (\theta_i,s_{k+1},\mathbf{B},\mathbf{B}',0) \text{ with } \mathbf{B}'_{s,\theta_i} = \{\top\}\end{array}\right\}\right.;\end{array}\]
    \item $V(\pi)$ be the subset of $V$ defined as $V(\pi) = \bigcup\{A \subseteq V \mid \pi \text{ uses some vertex } (i,Z,A)\}$.
\end{itemize}

The following result follows directly from the above classification of
Dyck paths in $\Gamma_G$ and from Sections~\ref{section:decremental-II}
and~\ref{section:decremental-III} (it~refines Proposition~\ref{prop:accessibility}).

\begin{proposition}
  \label{prop:accessibilityII}
  For every $V_0$-selector $g$, there exists a unique Dyck path $\pi^g
  \in \mathcal{P}_\Gamma$ such that $E(\pi^g) = \{(s,g(s)) \mid s \in
  \mathrm{dom}(g)\}$.  The function $g \mapsto \pi^g$ is a
  bijection from the set of $V_0$-selectors to $\mathcal{P}_\Gamma$.
  Moreover, we have $Z(\pi^g) = \omega_i^g$, i.e. $g$ ensures
  $P_0$'s victory when the play is in~$\sigma$ if, and only~if,
  $Z(\pi^g) = \{\sigma \mapsto \top\}$.  In~addition, the set
  $V(\pi^g)$ is equal to the union of accessibility sets $\bigcup_{1
    \leq i \leq \ell-1}\alpha_i^g$.
\end{proposition}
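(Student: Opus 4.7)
My plan is to construct the Dyck path $\pi^g$ explicitly from $g$ and then verify all four claims by induction on $\ell-i$, processing indices $i$ from $\ell$ down to $1$. For each $i$, I would build a nominal Dyck path $\tau_i^g$ running from $(i+1, \omega_{i+1}^g, A_i)$ to $(i, \omega_i^g, A_{i-1})$, where the $A$-components take the accessibility-set-related values prescribed by the edge update rules of $\Gamma_G$. At a non-critical $i$, the path $\tau_i^g$ is the single neutral edge, which exists because Proposition~\ref{prop:decremental-II} gives $\omega_i^g = \Omega_i(\omega_{i+1}^g)$ and the non-critical recursion of Appendix~\ref{section:decremental-III} gives the matching update of the $A$-component. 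At a critical $i$, the path $\tau_i^g$ traverses the gadget on the $\theta_i$-incident edges, making at each indexed transition the unique choice $\beta \in \mathrm{choice}(\theta_i, s_a)$ (resp.\ $\beta' \in \mathrm{choice}(s_a, \theta_i)$) dictated by $g$: namely $\{\top\}$ when $g$ maps the $V_0$-endpoint to the other endpoint, $\{\bot\}$ when it does not, and $\{\bot,\top\}$ when the source lies in $V_1$. These choices force, via the bracket-matching of the closing half of the gadget, the unique pair $(\mathbf{B}, \mathbf{B}')$ at the turnaround vertex $(\theta_i, s_{k+1}, \mathbf{B}, \mathbf{B}', 0)$. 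Hence $\tau_i^g$ is a nominal Dyck path, and concatenating all $\tau_i^g$ produces the required $\pi^g \in \mathcal{P}_\Gamma$.

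To show that $g \mapsto \pi^g$ is a bijection onto $\mathcal{P}_\Gamma$, I would observe that by construction $E(\pi^g) = \{(s, g(s)) \mid s \in \mathrm{dom}(g)\}$: the path $\pi^g$ passes through exactly the turnaround vertices whose $(\mathbf{B}, \mathbf{B}')$ pair records $g$'s action on the $\theta_i$-incident edges, and every edge of $G$ incident to a $V_0$-vertex has an endpoint equal to some $\theta_j$ (since each vertex of $V$ is introduced as a unique $\theta_j$ during the depth-first traversal). Hence $E(\pi^g)$ uniquely recovers $g$, giving both injectivity and the uniqueness clause. For surjectivity, any $\pi \in \mathcal{P}_\Gamma$ decomposes canonically into a concatenation of nominal Dyck paths (since the layered structure of $\Gamma_G$ together with the definition of a generic Dyck path force the decomposition to occur precisely at the nominal vertices $(i, Z, A)$); reading off the $\mathbf{B}, \mathbf{B}'$ components from each critical segment then defines a unique $V_0$-selector $g^\pi$ with $\pi = \pi^{g^\pi}$.

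The identities $Z(\pi^g) = \omega_1^g(v_1)$ and $V(\pi^g) = \bigcup_{1 \leq i \leq \ell-1} \alpha_i^g$ follow by the same induction: Proposition~\ref{prop:decremental-II} ensures that the $Z$-component at each nominal vertex traversed by $\pi^g$ is $\omega_i^g$, and the accessibility-set recursion of Appendix~\ref{section:decremental-III} ensures analogously that the $A$-component records $\alpha_{i-1}^g$ at the target and $\alpha_i^g$ at the source of each $\tau_i^g$, so the union of all $A$-components along $\pi^g$ yields exactly $\bigcup_{1 \leq i \leq \ell-1} \alpha_i^g$. The equivalence between $P_0$'s victory at $\sigma$ and $Z(\pi^g) = \{\sigma \mapsto \top\}$ is then an immediate restatement of Lemma~\ref{lem:victory}.

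The main obstacle I foresee is the bookkeeping of the $A$-components along a Dyck path: one must check that the choice of target $(i, Z', A')$ with an arbitrary $A'$ does not produce a Dyck path outside the image of $g \mapsto \pi^g$, i.e.\ that at every nominal vertex of $\pi^g$ the pair $(Z, A)$ is forced both by the decremental update $\Omega_i$ and by the accessibility recursion $\mathbf{A}_i$ to take the value $(\omega_i^g, \alpha_{i-1}^g)$. A secondary subtlety is verifying that the non-critical update rule $A' = A \cap \Psi_i$ of $\Gamma_G$ is compatible with the recursion $\alpha_i^g = \alpha_{i-1}^g \cap \Psi_i$ when the $A$-component is interpreted as the cumulative accessibility set. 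Once this bookkeeping is settled, the four claims of the proposition reduce to the induction sketched above.
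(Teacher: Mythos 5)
Your proposal is correct and takes essentially the same route as the paper, whose own proof consists precisely of the observation that the proposition ``follows directly from the above classification of Dyck paths in $\Gamma_G$'' (every generic Dyck path decomposes into nominal Dyck paths, with the bracket-matching in each critical gadget forcing the pair $(\mathbf{B},\mathbf{B}')$ to record the choices made on the way in) together with the decremental recursion for selector views (Proposition~\ref{prop:decremental-II} and Lemma~\ref{lem:victory}) and the incremental recursion $\alpha_i^g = \mathbf{A}_i(\alpha_{i-1}^g,\omega_{i+1}^g)$ of Appendix~\ref{section:decremental-III} --- exactly the segment-by-segment construction and $(Z,A)$-bookkeeping you carry out. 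The two subtleties you flag (that the $(Z,A)$-components at the nominal vertices of $\pi^g$ are forced, and that the non-critical rule $A' = A \cap \Psi_i$ must be read against the incremental recursion running in the opposite direction to the path) are indeed where all the content lies, and your induction from $i=\ell$ down to $1$ resolves them the way the paper intends.
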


Consequently, answering the non-uniform two-player parity game problem
amounts to checking if there exists a Dyck path $\pi \in \mathcal{P}_\Gamma$
such that $Z(\pi) = \{\sigma \mapsto \top\}$ and, if yes, to
computing the sets $E(\pi)$ and $V(\pi)$.

\section{Refining the complexity analysis}
\label{app:complexity-2}\label{app-overall-2}
While the results of Appendix~\ref{app:complexity} hold
in the framework of Theorem~\ref{thm-main},
we need to extend them to a broader framework,
where (a) witnesses are maintained, which forces us to precompute
and maintain an additional predicate $\rho_2$ (see Appendix~\ref{app-dyck}), and
(b) the underlying auxiliary graph is that of Appendix~\ref{app:gammaG}.

This proof extension is immediate, due to the following remarks:
\begin{itemize}
 \item the precomputation of the vertices and edges of $\Gamma_G$ is exactly the same;
 \item we only know that paths of length $32(\kappa+1)+3$ (and not only $3$ like in Appendix~\ref{app:complexity})
 contain nominal vertices, hence we modify the notion of \emph{short} path
 to include all paths of length at most $32(\kappa+1)+3$;
 \item all the other arguments used to prove that $T$ or $\pi_2$ can be precomputed need no further change,
 and those used for computing $\pi_2$ also allows us to compute $\rho_2$;
 \item it is still straightforward that edge membership in $\Gamma_G$
 and the predicates $T$, $\pi_2$ and $\rho_2$ can be updated with \FO formulas.
\end{itemize}

\section{From the non-uniform to the uniform parity game problem, with
  additional update operations}
\label{app-extensions}
We finish by 
extending the scope of Theorem~\ref{thm-main} in two different directions:
first we extend our algorithm to solve parity games \emph{uniformly} (using
the construction of Appendix~\ref{app-gammaG}),
computing all the winning states and a uniform winning strategy from this
winning region. We~then add new operations by which the color and owner of a
state of the parity game can be modified.

\subsection{Towards uniformity}

We first consider \emph{uniformity}, showing that 
computing both the set $W_0$ of vertices that are winning for $P_0$
and a uniform $V_0$-strategy $f_0$ that leads to $P_0$'s victory
whenever the play is in~$W_0$ can be performed in $\Dyn[\logspace]\FO$,
provided that the only update operations are insertions and deletions
of edges.

Indeed, thanks to
Theorem~\ref{thm-main}, for each vertex $\sigma \in
V$, we can check whether $\sigma \in W_0$ and, if yes, we can compute
a $V_0$-strategy $g^\sigma$ and the set $V^\sigma$ of those vertices
that are accessible from $\sigma$ if $P_0$ is bound to play according
to the strategy $g^\sigma$. For every $\sigma$, we write
$\Gamma^\sigma$ for the acyclic graph computed for $\sigma$ in
Appendix~\ref{app:gammaG} (sketched version in
Subsection~\ref{subsec:reductiontodyck}).
Likewise, we denote by $\mathcal{D}^\sigma = (\mathcal{T}^\sigma,\mathbf{T}^\sigma)$
the tree decomposition rooted at $\sigma$, by
$v_i^\sigma$ the $i$-th element of the 
depth-first traversal of $\mathcal{T}^\sigma$, and so on.

Performing these computations for all vertices $\sigma$, with \FO
updates, can be done as follows.  For each vertex $\sigma \in V$,
Proposition~\ref{prop:accessibility} states that computing $g^\sigma$
and $V^\sigma$ is reducible to the Dyck accessibility problem with
witnesses on some acyclic graph $\Gamma^\sigma$.  Computing naively
$g^\sigma$ and $V^\sigma$ in parallel would result, for each update,
in adding one edge and deleting another edge in each of the $|V|$
graphs $\Gamma^\sigma$, and this would result in a polynomial number
of \FO computations, which might not be in \FO.

Instead we proceed as follows.  
First, we unite the graphs $\Gamma^\sigma$ in one single graph
$\Gamma$, replacing each vertex $\mathbf{v}$ of
$\Gamma^\sigma$ by a pair $(\sigma,\mathbf{v})$,
and replacing each each label $\lambda$ or $\overline{\lambda}$
(with the exlusion of neutral labels $\bullet$)
by a pair $(\sigma,\lambda)$ or $\overline{\sigma,\lambda}$.

Doing so, we have juxtaposed $|V|$ disjoint graphs $\Gamma^\sigma$
into one single graph $\Gamma$.

Then, we glue some vertices of these graphs $\Gamma^\sigma$.
For each pair $(s,t) \in V^2$ of vertices of our initial graph $G = (V,E)$,
some graphs $\Gamma^\sigma$ contain nodes $(\sigma,s,t,0)$ and $(\sigma,s,t,1)$:
we merge the nodes $(\sigma,s,t,0)$ into one single node $(s,t,0)$,
and the nodes $(\sigma,s,t,1)$ into one single node $(s,t,1)$.
Similarly, for each label for all $\beta \in \choice(s,t)$,
we merge the labels $(\sigma,s,t,\beta)$ into one single label
$(s,t,\beta)$, and the labels $\overline{\sigma,s,t,\beta}$ into
one single label $\overline{s,t,\beta}$.

The new labelled edges $(s,t,0) \rightarrow (s,t,1)$ play the role of hubs.
They ensure that, if one edge $(s,t)$ of~$G$ is changed, then only
one labelled edge $(s,t,0) \xrightarrow{s,t,\beta} (s,t,1)$ is removed from $\Gamma$,
and one labelled edge $(s,t,0) \xrightarrow{s,t,\beta'} (s,t,1)$ is added instead.


  These merges may introduce cycles inside the graph $\Gamma$, but we may
  still view each graph~$\Gamma^\sigma$ as an acyclic subgraph of~$\Gamma$,
  whose vertices either are merged vertices or are of the form
  $(\sigma,\mathbf{v})$, and whose edges are either hubs, or have a label
  $\bullet$ or $(\sigma,\lambda)$ or $\overline{\sigma,\lambda}$.
  Hence, we may compute $g^\sigma$ and $V^\sigma$ in
  parallel by using the graph~$\Gamma^\sigma$, although each edge insertion or
  deletion in~$G$ results in only one edge insertion and one edge deletion
  in~$\Gamma$.

  Finally, notice that a naive implementation of the approach above
  would require updating polynomially-many predicates at each
  step. However, those $|V|$ predicates (of~arity~$k$,~say) would be
  updated uniformly, and could be represented as a single predicate of
  arity~$k+1$ (with one extra variable for~$\sigma$) that would be
  updated by a single \FO formula.

\smallskip
  This already provides us with a way to compute the set $W_0$ itself.
  Moreover, we derive from the set~$W_0$ and the strategies~$g^\sigma$
  and sets~$V^\sigma$ the following uniform winning strategy~$g$,
  which we already described in Section~\ref{section:decremental-III}. 
  Letting~$\leq$ be an arbitrary linear order on~$V$:
\begin{itemize}
 \item for each vertex $s \in V$, let $\sigma$ be the least vertex of $W_0$ (for the order $\leq$) such that
 $s \in V^\sigma$, if such a vertex exists, and we set $g(s) = g^\sigma(s)$, or we exclude $s$ from the domain of $g$ if $s \notin \mathrm{dom}(g^\sigma)$;
 \item we exclude from the domain $\mathrm{dom}(g)$ each vertex $s \in V$ not yet treated.
\end{itemize}
Such a computation can be performed thanks to a \FO formula, which
indeed proves that the set $W_0$ and the function $g$ can be computed
by using \FO updates. 


\subsection{Changing Vertex Ownership and Color}

We show now that allowing to transfer a vertex from $V_0$ to $V_1$ or vice-versa as an update operation
does not increase the dynamic complexity of the uniform two-player parity game problem.
Indeed, instead of studying the graph $G = (V,E)$ endowed with a partition of $V$ into two sets $V_0$ and $V_1$
and with a coloring function $c\colon V \to \mathbb{N}$,
we first study the auxiliary graph $\mathbf{G} = (\mathbf{V},\mathbf{E})$ with a partition of $\mathbf{V}$ into sets $\mathbf{V}_0$ and $\mathbf{V}_1$
and a coloring function $\mathbf{c}\colon \mathbf{V} \to \mathbb{N}$ defined as follows:
\begin{itemize}
 \item $\mathbf{V} = V \times \{0,1,2\}$;
 \item $\mathbf{V}_0 = V \times \{0\}$, and $\mathbf{V}_1 = V \times \{1,2\}$;
 \item $\mathbf{c}\colon (s,i) \mapsto c(s)$ for all $s \in V$ and $i \in \{0,1,2\}$;
 \item $\mathbf{E} = \{((s,i),(t,2)) \mid i \in \{0,1\} \text{ and } (s,t) \in E\} \cup \{((s,2),(s,i)) \mid s \in V_i\}$.
\end{itemize}

If $\mathcal{D} = (\mathcal{T},\mathbf{T})$ is a tree decomposition of $G$ of width $\kappa$, then
$\mathcal{D}' = (\mathcal{T},\mathbf{T}')$ is a tree decomposition of $\mathbf{G}$ of width at most $3\kappa+2$, where
$\mathbf{T}'$ is defined by $\mathbf{T}'\colon v \mapsto \mathbf{T}(v) \times \{0,1,2\}$.
Moreover, transferring a vertex $s$ from $V_i$ to $V_{1-i}$ in $G$
amounts to deleting the edge $((s,2),(s,i))$ and adding the edge $((s,2),(s,1-i))$ in $\mathbf{G}$.

This means that dynamically solving the uniform two-player parity game
problem with edge insertions\slash deletions and 
transfer of vertices between $V_0$ and~$V_1$ is bfo-reducible to solving
the same problem with only edge insertions\slash deletions.

Similarly, further allowing modifications the color $c(s)$ of a vertex $s \in V$ as an update operation
does not increase the dynamic complexity of the uniform two-player parity game problem.
This time, instead of studying the graph $G = (V,E)$ endowed with a partition of $V$ into two sets $V_0$ and $V_1$
and with a coloring function $c\colon V \to \{1,\ldots,\mathbf{C}\}$,
we first study the auxiliary graph $\mathbf{G} = (\mathbf{V},\mathbf{E})$ with a partition of $\mathbf{V}$ into sets $\mathbf{V}_0$ and $\mathbf{V}_1$
and a coloring function $\mathbf{c}\colon \mathbf{V} \to \mathbb{N}$ defined as follows:
\begin{itemize}
 \item $\mathbf{V} = V \times \{-1,0,1,\ldots,\mathbf{C}\}$;
 \item $\mathbf{V}_0 = V_0 \times \{-1,0,1,\ldots,\mathbf{C}\}$, and $\mathbf{V}_1 = V_1 \times \{-1,0,1,\ldots,\mathbf{C}\}$;
 \item $\mathbf{c} \colon (s,i) \mapsto \max\{i,1\}$ for all $s \in V$ and $i \in \{-1,0,1,\ldots,\mathbf{C}\}$;
 \item $\mathbf{E} = \{((s,0),(t,-1)) \mid (s,t) \in E\} \cup \{((s,-1),(s,i)) \mid s \in V, 1 \leq i \leq \mathbf{C}\} \cup 
 \{((s,c(s)),(s,0)) \mid s \in V\}$.
\end{itemize}

If $\mathcal{D} = (\mathcal{T},\mathbf{T})$ is a tree decomposition of $G$ of width $\kappa$, then
$\mathcal{D}' = (\mathcal{T},\mathbf{T}')$ is a tree decomposition of $\mathbf{G}$ of width at most $(\mathbf{C}+2)(\kappa+1)-1$, where
$\mathbf{T}'$ is defined by $\mathbf{T}' \colon v \mapsto \mathbf{T}(v) \times \{-1,0,1,\ldots,\mathbf{C}\}$.
Moreover, replacing the value~$u$ of the color of a vertex~$s$ by the new value~$v$
amounts to deleting the edge $((s,u),(s,0))$ and adding the edge $((s,v),(s,0))$ in~$\mathbf{G}$.

%
%
%

\makeatletter
\let\@savethebibliography\thebibliography
\def\thebibliography#1{%
  \@savethebibliography{#1}
  \global\def\c@enumiv{\value{suitebib}}}
\makeatother

\putbib[nmbib]
\end{bibunit}
\fi

\end{document}